\documentclass[12pt]{article}
\usepackage[margin=0.7in,bottom=1.5in]{geometry}

\usepackage{amsmath}\usepackage{amsthm}
\usepackage{times}
\usepackage{bm}
\usepackage{natbib}
\usepackage{color}
\usepackage{amssymb}
\usepackage{dsfont}
\usepackage{graphicx,caption}
\usepackage{enumerate}
\usepackage{slashbox}
\usepackage{subcaption}
\usepackage{float}

\usepackage[shortlabels]{enumitem}

\usepackage[plain,noend]{algorithm2e}
\setlength{\belowcaptionskip}{-10pt}

\makeatletter
\renewcommand{\algocf@captiontext}[2]{#1\algocf@typo. \AlCapFnt{}#2} 
\def\@algocf@capt@plain{top}
\renewcommand{\algocf@makecaption}[2]{%
  \addtolength{\hsize}{\algomargin}%
  \sbox\@tempboxa{\algocf@captiontext{#1}{#2}}%
  \ifdim\wd\@tempboxa >\hsize
    \hskip .5\algomargin%
    \parbox[t]{\hsize}{\algocf@captiontext{#1}{#2}}
  \else%
    \global\@minipagefalse%
    \hbox to\hsize{\box\@tempboxa}
  \fi%
  \addtolength{\hsize}{-\algomargin}%
}
\makeatother


\newcommand{\eps}{\epsilon}

\newcommand{\calD}{{\mathcal D}}
\newcommand{\iid}{\stackrel{\mathrm{i.i.d.}}{\sim}}
\newcommand{\sigs}{\sigma^2}

\newcommand{\given}{\,|\,}
\newcommand{\calA}{{\mathcal A}}
\newcommand{\calK}{{\mathcal K}}
\newcommand\independent{\protect\mathpalette{\protect\independenT}{\perp}}
\def\independenT#1#2{\mathrel{\rlap{$#1#2$}\mkern2mu{#1#2}}}

\newcommand{\blue}[1]{{\leavevmode\color{black}{#1}}}

\def\independenT#1#2{\mathrel{\rlap{$#1#2$}\mkern2mu{#1#2}}}
\usepackage{mathtools}

\newcommand{\pstar}{}

\def\independenT#1#2{\mathrel{\rlap{$#1#2$}\mkern2mu{#1#2}}}
\usepackage{mathtools}
\newtheorem{theorem}{Theorem}
\newtheorem{assume}{Assumption}
\newtheorem{corollary}{Corollary}
\newtheorem{lemma}{Lemma}
\newtheorem{proposition}{Proposition}

\newcommand{\bp}{{p}}

\newcommand{\bC}{{C}}

\newcommand{\bQ}{{Q}}

\newcommand{\bu}{{u}}
\newcommand{\bv}{{v}}
\newcommand{\bw}{{w}}

\newcommand{\bx}{{x}}
\newcommand{\bX}{{X}}

\newcommand{\bY}{{Y}}

\newcommand{\bZ}{{Z}}
\newcommand{\bL}{{L}}

\newcommand{\bR}{{R}}

\usepackage{color}

\newcommand{\calB}{{\mathcal B}}

\newcommand{\calF}{{\mathcal F}}

\newcommand{\calT}{{\mathcal T}}

\newcommand{\bV}{ V}

\theoremstyle{definition}

\addtolength\topmargin{35pt}

\begin{document}


\title{\bf Random forests for binary geospatial data}
 \author{Arkajyoti Saha$^{1}$, Abhirup Datta$^{2}$}
 
 \date{
	$^1$ Department of Statistics, Donald Bren School of Information and Computer Sciences,
University of California, Irvine\\
	$^2$ Department of Biostatistics,  Johns Hopkins Bloomberg School of Public Health\\[2ex]
}
\maketitle

\begin{abstract}
  \textcolor{black}{The manuscript develops new method and theory for non-linear regression for binary dependent data using random forests. Existing implementations of random forests for binary data cannot explicitly account for data correlation common in geospatial and time-series settings. For continuous outcomes, recent work has extended random forests (RF) to RF-GLS that incorporate spatial covariance using the generalized least squares (GLS) loss. However, adoption of this idea for binary data is challenging due to the use of Gini impurity measure in classification trees, which has no known extension to model dependence. We show that for binary data, the GLS loss is also an extension of the Gini impurity measure, as the latter is exactly equivalent to the ordinary least squares (OLS) loss. This justifies using RF-GLS for non-parametric mean function estimation for binary dependent data. We then consider the special case of generalized mixed effects models, the traditional statistical model for binary geospatial data, which models the spatial random effects as a Gaussian process (GP). We propose a novel link-inversion technique that embeds the RF-GLS estimate of the mean-function from the first step within the generalized mixed effects model framework, enabling estimation of non-linear covariate effect and offering spatial predictions. We establish consistency of our method, RF-GP, for both mean function and covariate effect estimation. The theory holds for a general class of stationary $\beta$-mixing dependent processes that includes common choices like Gaussian processes with Matérn or compactly supported covariances, and autoregressive processes. The theory relaxes the common assumption of additive mean functions which does not hold for binary data, and accounts for the non-linear link. To our knowledge, this is the first theory of random forests for binary spatial data. We demonstrate that RF-GP outperforms competing methods for estimation and prediction in both simulated and real-world data.} 
 
\end{abstract}

\noindent
{\it Keywords:}  spatial statistics, binary data, Gaussian processes, random forests, generalized mixed effects model.  
\vfill

\section{Introduction}
\label{sec:new}
Geospatial applications in ecology, climatology, agriculture, forestry, and environmental health often involve data that is either naturally binary, e.g., abundance of species \citep{spnngppaper} or incidence of diseases in  agriculture \citep{zhang2002}, or is dichotomized by thresholding a continuous variable like wind-speed \citep{cao2022}, air-pollutant concentrations \citep{chen2020deepkriging} or land-cover percentage \citep{berrett2012data}. A staple for analysis of binary geospatial data is the spatial generalized linear mixed effects model (spGLMM), given as:

\begin{equation}
    \label{eqn:hglm}
    \mathbb{E}(Y_i| X_i, w) = h(X_i^\top {\beta} + w(s_i)),
\end{equation}
with binary response $Y_i$, a linear  fixed \blue{covariate} effect $X_i^\top {\beta}$ \blue{of a $D$-dimensional covariate}, a Gaussian process (GP) distributed spatial random effect $w(s_i)$ at location $s_i$, and a probit or logit link $h$ 
\citep{albert1993bayesian,diggle1998model,gelfand2000modeling,de2000bayesian,berrett2012data,banerjee2014hierarchical,cressie2015statistics}. Various estimation and prediction strategies for spGLMM have been explored \citep[see e.g.][]{zhang2002,gemperli2003fitting,cao2022,saha2022scalable}. The correlation among the observed outcomes is induced by the correlation structure on the spatial random effects. \blue{The mixed model framework is versatile. The separation of fixed and random effect via additivity allows inference on both the covariate effect (via estimate of $\beta$) and the spatial structure in the outcome (via estimates of $w(s)$ and the parameters governing its law). It also allows spatial predictions of $w(s_{new})$ at a new location $s_{new}$ via kriging, which in turn can be used to estimate predict the outcome at that location.} However, the assumption of a linear fixed effect $X_i^\top\beta$ in the spGLMM in \eqref{eqn:hglm}, although predominant in the spatial generalized mixed models, is \blue{restrictive 
    and can 
    misspecify the true data generation mechanism leading to incorrect inference. 
    
Non-linear models for binary spatial data have often used basis function expansions or splines of the covariate $X_i$ -- including non-linear mixed effects models \citep[]{diggle1989spline,smith1998additive,holmes2003generalized}, mixture models \citep{wood2008locally}, and Gaussian processes on the covariate domain \citep{schmidt2011considering}. 
Basis functions in the $D$-dimensional covariate space can suffer from curse of dimensionality even for $D$ as small as $3$-$4$ \citep{taylor2013challenging}. Generalized additive models (GAM) have also been used to model non-linear covariate effects in  spatial data \citep{wood2003thin,nandy2017additive}.  
GAMs do not model interactions between covariates which can bias estimation when strong interaction is present \citep[see, for example, illustrations in][]{zhan2024neural}.

Additionally, sometimes the primary goal in regression is  
estimating the probability or mean function $p(X) = \mathbb E \left( Y| X\right)$ of binary data, while accounting for correlation in $Y$.  
Using a model-based approach 
can be inaccurate for mean function estimation if any component of the model is misspecified, such as the link, a restricted function class for the covariate effect (linear or GAM), or the  distribution of the random effect. Here, 
a more non-parametric approach that does not require correct specification of the full data generation process is more desirable.} 

There is \blue{now widespread} adoption of machine learning methods like random forests  \citep{breiman1984classification,breiman2001random}, neural networks, \blue{and Bayesian additive regression trees \citep[BART,][]{chipman2012bart} for non-parametric, non-linear spatial analysis. However, accounting for spatial dependence arising from unmeasured spatially structured variables is tricky to do within these machine learning algorithms. In traditional statistical models, the spatial effect is either modeled using a high-dimensional quantity (Gaussian process random effects, splines, etc.) or via the covariance. It is challenging to estimate high-dimensional nuisance terms or directly incorporate covariance in most machine learning algorithms. Additionally, traditional orthogonalization techniques \citep[e.g., Robinson’s transformation,][]{robinson1988root} that adjust for nuisance variables might be difficult for spatial dependence, as dependence may not be summarized by unit-specific variables with respect to which the outcome can be orthogonalized.} 

The two main approaches \blue{to using spatial information in machine learning have been  `residual/hybrid kriging' and  `added-spatial-features'.} The former  \citep{fayad2016regional,fox2020comparing} completely ignores the correlation in $Y$ while estimating $\mathbb E \left( Y| X\right)$ \blue{followed by kriging on the residuals for spatial predictions. 
Ignoring correlation severely impacts the performance of machine learning methods \citep{saha2023random,zhan2024neural}. Also, kriging on the residuals may not be justified for binary data.  
The `added-spatial-features' method} uses spatial covariates like \blue{latitude-longitude,} pairwise distances between locations or basis functions \blue{as additional features (covariates) in the machine learning algorithm \citep{hengl2018random,wang2019nearest,chen2024deepkriging}. These methods suffer from curse-of-dimensionality from adding many  features and do not offer estimates of $\mathbb E(Y \given X)$ (only of $\mathbb E(X\given X,s)$.} 
\blue{A detailed review of binary spatial data analysis methods and their limitations is in Section \ref{sec:litrev} of the Supplementary Materials.} 

In this paper, \blue{we develop 
a non-parametric mean estimation method for binary spatial data (and more generally correlated data), using random forests that accounts for data correlation. 
We adapt the RandomForestsGLS \blue{(RF-GLS)} approach of \cite{saha2023random}.} For continuous outcomes, 
RF-GLS estimates a non-linear covariate effect $m(X)$ instead of $X_i^\top\beta$, \blue{via a novel random forests algorithm that} uses a GP to model the spatial dependence. The crux of the algorithm was representation of the regression tree creation as series of ordinary least squares (OLS) optimizations, thereby allowing extension to use generalized least squares (GLS) losses that explicitly encodes the GP spatial covariance in the tree-building algorithm. \blue{The GLS loss has been introduced in other machine learning algorithms like boosting \citep{sigrist2022gaussian} and neural networks \citep{zhan2024neural} for estimating non-linear mean of continuous spatial outcomes.}

Statistical methods for continuous outcomes in general do not translate to the binary setting. 
There are two fundamental challenges for implementing a random forest for binary spatial data that do not arise for continuous outcomes, and requires  two methodological innovations: 
\begin{enumerate}
    \item \textit{\blue{Mean function estimation using} connection between Gini measure and least squares:} 
    For binary data, regression trees and random forests are built using the {\em Gini impurity} measure, which has 
    no natural generalization 
    that accounts for correlation. We \blue{use} a simple but powerful result that for binary data, the local (within-node) Gini impurity measure is exactly equivalent to a global OLS loss. 
    This facilitates an easy extension of the Gini impurity measure for dependent binary data by switching to a global GLS loss \blue{with a spatial working covariance matrix.} 
    We can then use RF-GLS to estimate the marginal mean function $\mathbb{E}(Y|X)$   \blue{of binary spatial data while adjusting for correlation. 
    We note that this first part of our method, mean estimation using RF-GLS, is non-parametric and does not rely on specific models for binary spatial data (like the mixed effects model).} \\

    \item \textit{\blue{Covariate effect estimation and spatial prediction using} link inversion:} 
    
    \blue{In the special case of a mixed effects model for binary data, 
    the non-linear covariate effect $m(X)$ (see (\ref{eqn:hgnlm}) for definition)} is also of scientific interest, as it can help interpret the relative odds (logit link) or the underlying thresholding mechanism (probit link). It is also required for spatial prediction at new locations. For continuous data, the covariate effect \blue{$m(X)$} coincides with the marginal mean $\mathbb{E}(Y|X)$, but for binary data, these two are different, due to the non-linearity of the link function. 
 \blue{The first part of our method, i.e.,} application of RF-GLS, only yields estimate of the marginal mean function $\mathbb{E}(Y|X)$. 
     We propose a novel link-inversion to simultaneously deconvolute the \blue{covariate} effect and estimate the spatial random effect parameters from estimates of $\mathbb{E}(Y|X)$. For the probit link, we derive a closed form for the \blue{covariate} effect $m(X)$ in terms of $\mathbb{E}(Y|X)$,
    and show that one can estimate \blue{$m(X)$ and the random effect parameters using the RF-GLS estimate of $\mathbb E(Y\given X)$} by simply fitting a standard generalized linear mixed model. 
\end{enumerate}
\blue{We refer to the proposed method as} {\em RF-GP} (Random Forest with Gaussian Process \blue{covariance). The first part of RF-GP, i.e., mean estimation using RF-GLS} is a well-principled generalization of Breiman's RF classification problem, \blue{offering non-parametric mean function estimation} for binary spatial data. It subsumes Breiman's RF as a special case when \blue{using identity working covariance matrix in the GLS loss.}  \blue{Additionally, we provide estimates of partial dependence functions, summarizing the effect of a single covariate. We also show how our approach can be used for conditional average treatment effect (CATE) estimation in spatial settings by using RF-GLS estimates of the mean functions for two treatment groups. The second part of RF-GP is tailored to the widely used spatial generalized mixed effects models. Via the novel link-inversion, it offers estimates of the covariate effect and spatial random effect parameters, and spatial predictions. This part relies on the parametric specification of the link function and the GP covariance family.} We show that RF-GP outperforms RF, spatial variations of RF, and other statistical and machine learning tools for analysis of spatial binary data. 

We prove consistency \blue{of the RF-GP mean function estimate for binary dependent data}. \blue{Also, we prove consistency of partial dependence functions, and of the conditional average treatment effect in a setting where binary \blue{dependent} data is observed for two treatment groups.} 
These set of results are established under very general conditions -- \blue{primarily that the binary outcome process is stationary and absolutely regular ($\beta$-mixing). Knowledge of the link or exact covariance structure is not needed. 
 For the special case of binary data} under mixed effects framework,  
 we also establish consistency of the estimate of the covariate effect obtained using the link-inversion.

To our knowledge, these are the first consistency results for Breiman-style random forests for binary dependent data. The $\beta$-mixing class of dependent processes encompasses \blue{common models including} auto-regressive time-series, Gaussian processes with Mat\'ern \blue{and compactly supported covariances}. As a corollary, we also establish the consistency of the classical Breiman's RF for binary data. This result distinguishes from the previous theoretical work on tree and forest estimators for binary setting 
where the node partition did not depend on the data itself \citep{biau2008consistency,biau2010layered} or used additivity of the mean \citep{scornet2015consistency}.

\section{\blue{RF-GLS for correlated binary data}}\label{sec:rfgls}
\subsection{\blue{Data generation process}}\label{sec:method}

 {\color{black} We consider a dependent binary process $Y_i$  for $i =1, 2, \ldots$ with first two moments given by

\begin{equation}
    \label{eqn:genbin}
    \mathbb{E}(Y| X_i) = p(X_i);\, 
    \textrm{cov}(Y) =  \left(\textrm{cov}Y(s_i), Y(s_j) \right)
    := \Sigma. 
\end{equation}
Here $X_i$ is a $D$-dimensional covariate specifying the mean of $Y_i$, 

and $\Sigma$ is a symmetric positive definite matrix which satisfies the Frechet–Hoeffding upper bound \citep{nelsen2006introduction} to ensure that \eqref{eqn:genbin} is a valid binary process \citep{dubrule2017indicator,de2020models}. 
 The moment based approach to modeling the spatial binary outcome has been used in \cite{journel1983nonparametric,solow1986mapping,albert1995generalized,gotway1997generalized,lin2005analysis}, where the mean was either constant, or modeled as a linear effect. We relax the linearity assumption and intent to use random forest algorithm to flexibly estimate the mean function $p(X)=\mathbb{E}(Y|X)$.}

\subsection{\blue{Overview of random forests for binary data}}\label{sec:rfrev}
One can consider  
using Breiman's RF to estimate $p(X)$ \blue{using the data $\{(Y_i,X_i): 1=1,\ldots,n\}$}. However, implicit in the tree creation algorithm for RF is the assumption of independent data points. To elucidate this point, we briefly outline the classification tree algorithm for binary data. To split a node $\mathcal T$ of the current tree, the Gini impurity measure of the node is:
\begin{equation}
    \label{eqn:Gini}
    I(\mathcal{T}) = p^{(\mathcal{T})}(1 - p^{(\mathcal{T})}) + (1 - p^{(\mathcal{T})})p^{(\mathcal{T})} = 2p^{(\mathcal{T})}(1 - p^{(\mathcal{T})}),
\end{equation}
where, $p^{(\mathcal{T})}$ is the fraction of node members in $\mathcal T$ with label ``$1$". The best split is then obtained by maximizing the following split criterion over the pair $(d, c)$ where $d$ is the direction (choice of covariate) and $c$ is the corresponding cutoff value $(c:= c(d))$,

\begin{equation}
\label{eqn:split_Gini}
    \Delta_n^{CT}(d,c) =  I(\mathcal{T}) - \frac{\sum_{i = 1}^n \mathds 1 _{\{X_i \in \mathcal L\}}}{\sum_{i = 1}^n \mathds 1 _{\{X_i \in \mathcal T\}}} I(\mathcal{L}) - \frac{\sum_{i = 1}^n \mathds 1 _{\{X_i \in \mathcal R\}}}{\sum_{i = 1}^n \mathds 1 _{\{X_i \in \mathcal T\}}}I(\mathcal{R}),
\end{equation}
where $\mathcal L$ and $\mathcal R$ are the potential child (left and right) nodes creates by splitting $\mathcal T$ at $(d,c)$.

Node splitting using Gini measure \eqref{eqn:Gini} thus only makes use of the information on the response and covariates contained in the node $\mathcal T$ to be split. No information is utilized from data points in other nodes which may be 
correlated with the members of $\mathcal T$. 
Furthermore, even within $\mathcal T$, the Gini measure only depends on the proportion of 1's i.e., the mean of the node members for binary data, and does not use any covariance information. So, na\"ive application of random forest to estimate the mean function do not account for any data dependence. 
For continuous data, ignoring spatial covariance during node-splitting has been shown to be detrimental to the performance of RF \citep{saha2023random}.

\subsection{Gini impurity measure for correlated binary data}\label{sec:gini}
 We seek a direct extension of the Gini measure to account for data correlation  
 during node splitting. 
 We first make the following connection between the Gini measure used in classification trees and the least-squares-based split criterion used in regression trees. The latter is given by  
\begin{align}
\begin{split}
\label{eqn:CART_original}
    v_{n}^{RT}(d,c) = & \frac{1}{\sum_{i = 1}^n \mathds 1 _{\{X_i \in \mathcal T\}}} \left[\sum_{i = 1}^n \mathds 1 _{\{X_i \in \mathcal T\}} (Y_i - \bar{Y}^{(\mathcal T)})^2 
     - \right. \\
     &\left.\left(\sum_{i = 1}^n \mathds 1 _{\{X_i \in \mathcal L\}} (Y_i - \bar{Y}^{(\mathcal L)})^2 + \sum_{i = 1}^n \mathds 1 _{\{X_i \in \mathcal R\}} (Y_i - \bar{Y}^{(\mathcal R)})^2\right)\right],
\end{split}
\end{align}
where $\mathcal T, \mathcal L, \mathcal R$ are as defined in Section \ref{sec:method} and $\bar{Y}^{(\mathcal T)}, \bar{Y}^{(\mathcal L)}$ and $\bar{Y}^{(\mathcal R)}$ are the mean of the responses of the members of the nodes $\mathcal{T}, \mathcal{L}$ and $\mathcal{R}$ respectively. The following result provides an exact relationship between the Gini measure and  least-squares for binary data. 
\vskip 0.4em 
\begin{theorem}
\label{theorem:Classification_equals_CART} For any $(Y_i, X_i) \in \{0, 1\} \times \mathbb R^{D}; i = 1, 2, \cdots, n$, $(d,c) \in \{1,2,\ldots, D \} \times \mathbb R$, we have $\Delta_n^{CT}(d,c) = 2 v_{n}^{RT}(d,c)$.
\end{theorem}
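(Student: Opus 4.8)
The plan is to compute the within-node sum of squares appearing in $v_n^{RT}$ directly and recognize it as a rescaled Gini measure, using the single algebraic fact that for binary labels $Y_i \in \{0,1\}$ one has $Y_i^2 = Y_i$. Write $n_{\mathcal N} := \sum_{i=1}^n \mathds 1_{\{X_i \in \mathcal N\}}$ for any node $\mathcal N$, and note that for binary outcomes the node mean is exactly the fraction of ones, $\bar Y^{(\mathcal N)} = p^{(\mathcal N)}$, by comparing the definition of $\bar Y^{(\mathcal N)}$ with \eqref{eqn:Gini_p}.

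First I would expand the centered sum of squares for a generic node $\mathcal N \in \{\mathcal T, \mathcal L, \mathcal R\}$: since $\sum_{i: X_i \in \mathcal N}(Y_i - \bar Y^{(\mathcal N)})^2 = \sum_{i: X_i \in \mathcal N} Y_i^2 - n_{\mathcal N}(\bar Y^{(\mathcal N)})^2$, and $Y_i^2 = Y_i$ gives $\sum_{i: X_i \in \mathcal N} Y_i^2 = \sum_{i: X_i \in \mathcal N} Y_i = n_{\mathcal N}\, p^{(\mathcal N)}$, I obtain
\begin{equation*}
\sum_{i: X_i \in \mathcal N}(Y_i - \bar Y^{(\mathcal N)})^2 = n_{\mathcal N}\, p^{(\mathcal N)} - n_{\mathcal N}\,(p^{(\mathcal N)})^2 = n_{\mathcal N}\, p^{(\mathcal N)}(1 - p^{(\mathcal N)}) = \tfrac12\, n_{\mathcal N}\, I(\mathcal N),
\end{equation*}
where the last equality is the definition \eqref{eqn:Gini}.

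Next I would substitute this identity for each of the three nodes $\mathcal T, \mathcal L, \mathcal R$ into the definition \eqref{eqn:CART_original} of $v_n^{RT}((d,c))$. The prefactor $1/n_{\mathcal T}$ multiplies $\tfrac12\, n_{\mathcal T}\, I(\mathcal T) - \tfrac12\, n_{\mathcal L}\, I(\mathcal L) - \tfrac12\, n_{\mathcal R}\, I(\mathcal R)$, which after cancellation equals $\tfrac12\big(I(\mathcal T) - \tfrac{n_{\mathcal L}}{n_{\mathcal T}} I(\mathcal L) - \tfrac{n_{\mathcal R}}{n_{\mathcal T}} I(\mathcal R)\big)$. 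Comparing with the definition \eqref{eqn:split_Gini} of $\Delta_n^{CT}((d,c))$, and recalling that $n_{\mathcal L}/n_{\mathcal T}$ and $n_{\mathcal R}/n_{\mathcal T}$ are exactly the node-proportion weights there, I conclude $v_n^{RT}((d,c)) = \tfrac12 \Delta_n^{CT}((d,c))$, i.e. $\Delta_n^{CT}((d,c)) = 2 v_n^{RT}((d,c))$.

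There is essentially no serious obstacle: the only ingredient beyond routine bookkeeping is the observation $Y_i^2 = Y_i$ for binary data, which collapses the sample variance of a node into $p^{(\mathcal N)}(1-p^{(\mathcal N)})$, half the Gini impurity. One should be mildly careful about degenerate nodes (empty child nodes where $n_{\mathcal L}$ or $n_{\mathcal R}$ is zero), but the weights $n_{\mathcal L}/n_{\mathcal T}$ are then zero and the corresponding terms simply drop out, so the identity still holds as stated (or one restricts to admissible splits with both children nonempty, as is standard). The identity is exact and distribution-free, holding for every configuration of the data and every candidate cut $(d,c)$.
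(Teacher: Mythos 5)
Your proposal is correct and follows essentially the same route as the paper: the paper's Proposition S3.1 establishes exactly your key identity (node sum of squares equals $\tfrac12 n_{\mathcal N} I(\mathcal N)$, via $\bar Y^{(\mathcal N)} = p^{(\mathcal N)}$) and then substitutes into the split criteria just as you do. The only cosmetic difference is that you derive the identity from $Y_i^2 = Y_i$ while the paper splits the sum over the $Y_i=1$ and $Y_i=0$ cases; these are interchangeable.
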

\vskip 0.4em 
Theorem \ref{theorem:Classification_equals_CART} establishes that for binary outcome, split criteria for the classification tree and regression tree coincide up to a constant multiplier. The equivalence implies that for binary data a regression tree and a classification tree will give the same estimate of the mean function $p(X)$. \blue{As pointed out by a reviewer, this result is likely to be known implicitly, since in the \texttt{grf}
software package \citep{grf}, the propensity score, which is a probability function,  
is by default estimated with a regression
forest. However, we could not find any explicit statement or formal proof of this. Given this is a key component of our paper, we have added it as a Theorem 
and provide the proof in Supplementary materials.} 

Theorem \ref{theorem:Classification_equals_CART} 
provides the foundation for  generalizing regression trees and RF for binary data that accounts for the spatial correlation. 
We exploit the characterization of  
the local regression tree split criteria (\ref{eqn:CART_original}) as a global OLS loss optimization,
\begin{subequations}
\begin{equation}
    \label{eqn:CART}
    v_{n}^{CT}(d,c) =  \frac{1}{n} \left( \|{Y} -  Z^{(0)}{\hat{\beta}}( Z^{(0)})\|_2^2 - \| Y -  Z {\hat{\beta}}( Z)\|_2^2 \right).
\end{equation}
Here $Z^{(0)}$ and $ Z$ are the membership matrices for the leaf nodes of the tree before and after the potential node split at $(d,c)$, and ${\hat{\beta}} ( Z)$ are the node means which can be expressed as the OLS estimate of regressing $Y$ on $Z$, i.e., ${\hat{\beta}} (Z) = \left(Z ^\top  Z \right)^{-1}  Z^\top  Y$.

For correlated continuous data, \cite{saha2023random} replaced the OLS loss with the GLS loss, which is the natual analog under spatial correlation:
\begin{equation}
    \label{eqn:DART}
    \begin{aligned}
v_{n, Q}^{}(d,c) = 
&\frac{1}{n} \Bigg[\left({Y} - {Z}^{(0)}{\hat{\beta}}_{GLS}( Z^{(0)}) \right)^\top  Q\left({Y} - {Z}^{(0)}{\hat{\beta}}_{GLS}( Z^{(0)}) \right)\\ &-\left({Y} - {Z}{\hat{\beta}}_{GLS}( Z) \right)^\top  Q\left({Y} - {Z}{\hat{\beta}}_{GLS}( Z) \right) \Bigg],
\end{aligned}
\end{equation}

Here $ Q$ is a working precision matrix (inverse of a working covariance matrix capturing the spatial dependence), 

and ${\hat{\beta}}_{GLS} ( Z)$ are the corresponding GLS estimates for each node, i.e., 
\begin{equation}
    \label{eqn:DART_mean}
    {\hat{\beta}}_{GLS} ( Z) = \left( Z ^\top  Q  Z \right)^{-1}  Z ^\top  Q  Y.
\end{equation}
\end{subequations}

\subsection{\blue{Mean function estimation}}\label{sec:mean}
 \blue{For mean function estimation in correlated binary data, we can now simply use RF-GLS, leveraging the characterization of the GLS split criterion as a generalization of the Gini measure for correlated binary data.} We construct \blue{GLS-style} trees by recursive splitting of nodes \blue{using (\ref{eqn:DART})}.  
Upon building the tree, the node estimates are given by the GLS estimate (\ref{eqn:DART_mean}). \blue{The RF-GLS estimate of the  mean function $p(X)=\mathbb{E}(Y \given X)$ is the average of estimates of several such GLS-style trees. This part of our method can be accomplished by using the  open source software \texttt{RandomForestsGLS} \citep{Saha2022}.} 
 Choice of a suitable working precision matrix $Q$  
 for the GLS loss \blue{will be context-specific} and is discussed in Section \ref{sec:correlation_estimation} of the Supplement.

\subsection{\blue{Partial dependence functions}}\label{sec:pdf}
\blue{ 
In non-linear regression,  
there are various ways to quantify the effect of a single covariate, including permutation-based variable importance, Shapley values, LIME, partial dependence functions. We refer the readers to \cite{wikle2023illustration} for a recent overview of these explainability approaches for machine learning algorithms in the context of environmental applications.

Any of these explainability metrics can be used with the RF-GLS estimate of the mean function. Here we focus on estimates of partial dependence functions (PDF) from RF-GLS which comes with theoretical guarantees (see Section \ref{sec:main}). The PDF estimates the mean effect curve of the outcome with respect to one covariate by integrating out the remaining covariates. For data generated from a non-linear regression model \eqref{eqn:genbin}, the PDF of the $j^{th}$ covariate $X^{(j)}$ is defined as
\begin{equation}\label{eq:pdf}
p_j(X^{(j)}) = \int p(X) d(X^{(1)}) \ldots dX^{(j-1)} dX^{(j+1)} \ldots dX^{(D)}.
\end{equation}
As RF-GLS provides an estimate $\hat p$ of the mean function $p$ that can be evaluated at any value of the covariate $X$, the estimated pdf  for the $j^{th}$ covariate $\hat p_j(X^{(j)})$ is simply obtained by replacing $p$ with $\hat p$ in (\ref{eq:pdf}). In practice, the integral is replaced by an average over a fine grid of values.}

\subsection{\blue{Conditional average treatment effect estimation}}\label{sec:cate}

\blue{Use of machine learning methods and in particular random trees and forests has become widely popular for estimation of CATE \citep{wager2018estimation}, but many of the approaches do not consider spatial/serial correlation in the errors. We demonstrate how RF-GLS also be extended to estimation of conditional average treatment effects (CATE) between two groups with different treatments, while accounting for correlation. 

We consider data $\{(Y_i,X_i,T_i)\}$ where $T_i$ denote the binary treatment assignment, $Y_i$ and $X_i$ are as before. The model for $Y_i$ is given by $ \mathbb E(Y_i \given X_i=x, T_i=k) = p_k(x)$ for any $x$ and $k \in \{0,1\}$ and $Y_i$ is a dependent process. 
The CATE is
\begin{equation}\label{eq:cate}
    \tau(x) = \mathbb E(Y_i \given X_i = x, T_i=1) - \mathbb E(Y_i \given X_i = x, T_i=0) = p_1(x) - p_0(x).
\end{equation}

We can use a T-learner estimate of $\tau(x)$. Let $D^{[k]}=\{Y^{[k]}_i,X^{[k]}_i\}_i$ denote the subset of the data $\{Y_i,X_i\}_i$ corresponding to $T_i=k$, for $k \in \{0,1\}$. Using $D^{[k]}$ and a suitable working covariance matrix for the corresponding subset of locations, we can estimate the mean function $p_k(x)$ for the $k^{th}$ subgroup for $k \in\{0,1\}$ using RF-GLS. Letting $\hat p_k(x)$ denote this estimate, the T-learner estimate of CATE is given by $\hat\tau_T(x)=\hat p_1(x) - \hat p_0(x)$.}

\section{\blue{RF-GP}}\label{sec:rfgp}
\subsection{\blue{Spatial non-linear generalized mixed effects model}}\label{sec:mixed_model}
  \textcolor{black}{Our proposed use of RF-GLS for estimation of the mean function and estimands based on it like the PDF and CATE can be used for any binary dependent process. In this Section, we focus on the special case of binary spatial data generated from the widely used generalized mixed effects model. 
 We show how we can extend RF-GLS to a method {\em RF-GP} that accomplishes other estimation and prediction tasks common for mixed effects models with Gaussian Processes.}

 We relax the linearity assumption in \eqref{eqn:hglm} and consider \blue{data generated from} a generalized non-linear mixed effects model with non-linear covariate effect $m(X)$, i.e., 
\begin{equation}
    \label{eqn:hgnlm}
    \mathbb{E}(Y_i| X_i, w_i) = h(m(X_i) + w); w \sim GP(0, C(.,.| \theta)).
\end{equation}

\blue{Here $m(X)$ is the unknown fixed non-linear covariate effect, replacing the linear  effect $X^\top\beta$ in \eqref{eqn:hglm} and $h$ is a known link (probit/logit). The \blue{spatial} effect $w_i=w(s_i)$ is a realization at location $s_i$ of a Gaussian Process $w=\{w(s) : s \in \calD\}$ on a domain $\calD$. Abusing notation here, we use $w$ to denote the entire process or its realizations at the data locations, i.e., $w=(w(s_1),\ldots,w(s_n))^\top$ or a random variable having the same distribution as $w(s)$ which does not depend on $s$ due to stationarity (and same for $X$ or $Y$); the meaning should be clear from the context. The function $m(X)$ can be used to understand the importance of the covariates. If $h$ is the logit link, then $m(X+\delta) - m(X)$ gives the log odds ratio, i.e., $\log \left(\frac{P(Y=1 \given X=x+\delta, w)}{P(Y=0 \given X=x+\delta, w)}\; / \; \frac{P(Y=1 \given X=x, w)}{P(Y=0 \given X=x, w)}\right)$ for a shift in the covariate value by an amount $\delta \in \mathbb R^D$, keeping everything else fixed. For a probit link $h$, the observed binary outcome can be viewed as a thresholded version of the latent GP, i.e., $U_i = m(X_i) + w_i$ and $Y_i = I(U_i < 0)$. So $m$ can be interpreted as the mean of the latent GP $U$. Hence, the covariate effect $m$ can help interpret such mixed effects models. Additionally,} estimation of $m$ is needed for obtaining spatial predictions of the binary outcome at new locations.

\blue{The mixed model (\ref{eqn:hgnlm}) has a functional component (the covariate effect) $m(X)$ and a spatial component $w(s)$. Historical connections between functional and spatial modeling have been long-established \citep[see, e.g., works of ][ and others on equivalence between splines and Gaussian processes]{kimeldorf1971some}. Hence, it is very natural to consider modeling $m$ and/or $w$ using basis functions or GP. However, as discussed before a basis function model for $m$ may suffer from curse of dimensionality unless $D$ is very small. Approaches like GAMs reduce dimensionality but also restrict the function class. It is also not straightforward to estimate $m$ using machine learning approaches within the model based framework of (\ref{eqn:hgnlm}) due to the presence of the high-dimensional parameter $w$ (see Section \ref{sec:litrev} for more details). 

Next we present a novel approach to estimate $m$ in the spatial mixed effects model (\ref{eqn:hgnlm}) using the RF-GLS estimate of the mean function $p(X) = \mathbb E(Y\given X)$ from Section \ref{sec:mean}, that bypasses estimation of $w$ and at the same time estimates $m$ without restricting the function class.}

\subsection{Link inversion for \blue{covariate} effect estimation in mixed models}
\label{sec:link_inversion}

\blue{Estimating the covariate effect $m(X)$ using random forests within the non-linear generalized mixed effects model for binary data is not straightforward.} For continuous (Gaussian) data with \blue{linear} (identity) link, the covariate effect $m(X)$ is same as the mean $p(X)$ after integrating out $w$.  
This convenience is lost \blue{for binary data within the generalized mixed model framework  
as $m(X)$ is not same as $p(X)$ due to the non-linear link $h$.} Trying to directly use the hierarchical model (\ref{eqn:hgnlm}) in random forests estimation is challenging 
because of the high-dimensional \blue{spatial} effect parameter $w=(w(s_1),\ldots,w(s_n))^\top$. Also, the marginal likelihood, integrating out $w$, generally cannot be obtained in closed form because of the non-linear link. 

We propose a novel link-inversion approach to estimate the covariate effect $m(X)$ using the \blue{RF-GLS} estimate of the mean $p(X)$ from 
\blue{Section \ref{sec:mean}.}  
Integrating out $w$ from (\ref{eqn:hgnlm}), we have 

\begin{equation}\label{eq:marg}
    p(X_i) := \mathbb{E}(Y_i | X_i) =  \int h(m(X_i) + w(s_i))dw(s_i). 
\end{equation}

\blue{For any $x$, letting $m(x)=m$ and $p(x)=p$, we have 
from (\ref{eq:marg}), 
$p=\int h(m + w(s_i))dw(s_i)=g(m)$ where $g$ will be a smooth function of $m$ contingent on  
the link $h$ being smooth and $w$ being stationary.  
Then $m$ can be recovered from the estimated $p$ if $g$ is invertible. 
The following result presents conditions under which this happens.} 
\vskip 0.4em 
\begin{proposition}
\label{lemma:link_inverse_existance}
Let $h: \mathbb{R} \to (0,1)$ be a strictly increasing link function in (\ref{eqn:hgnlm}) and $w$ is a stationary process such that  
$w(s)$ has distribution $\mathbb F_w$ for all $s$. Let $\mathbb{E} (Y | X) = g_{h,\mathbb F_w}\left(m(X)\right)$.
\begin{enumerate}[(a)]
    \item Then there exists an inverse function $g_{h,\mathbb F_w}^{+}(\cdot)$ such that 
    $m(x) = g_{h,\mathbb F_w}^{+}(\mathbb{E} (Y | X = x))$.
    \item When $h$ is the probit link and $w$ is a stationary GP with variance $\sigma^2$, 
    \begin{equation}
    \label{eqn:probit_link_inverse}
    m(x)  = g_{h,\mathbb F_w}^{+}(p(x))  
    = \left( 1 + \sigma^2\right)^{\frac{1}{2}} \Phi^{-1} \left( p(x) \right),
\end{equation}
where $\Phi$ is the cumulative distribution function (cdf) of standard normal distribution.
\end{enumerate}
\end{proposition}

Part (a) guarantees  
recovery of \blue{the function} $m$ 
under very general conditions (increasing link and stationary random effect process).   
In part (b) we show that for probit link, we can obtain a closed-form estimate of $m(x)$ in terms of the mean function $p(x)$. The probit link is widely used in spGLMM,  
due to its latent variable representation  \citep{gelfand2000modeling,albert1993bayesian,de2000bayesian,berrett2012data,cao2022,saha2022scalable}.
  For a probit link, the covariate effect simply becomes a linear model of a known transformation $(\Phi^{-1})$ of the mean. Hence, subsequent to the estimation of the mean function,  
we can use off-the-shelf software for spatial generalized linear mixed models to estimate the covariate effect $m$ as follows.

Let $R$ be the spatial correlation function corresponding to $C$, \blue{parameterized by some $\alpha$, i.e.,} $C(\cdot, \cdot \given \theta)=\sigma^2 R(\cdot, \cdot \given \alpha)$, and let $\theta=(\sigma^2,\alpha)$.    
From Proposition \ref{lemma:link_inverse_existance} (b), we can write 
    \begin{equation}
    \label{eqn:probit_GLMM_prediction}
    \mathbb{E} (Y_i | X_i,w(s_i)) = h(\Phi^{-1}(p(X_i))\beta + w(s_i)); w \sim GP(0, \sigma^2 R(.,.| \alpha)).
    \end{equation}
     
This 
is now a spGLMM  \eqref{eqn:probit_GLMM_prediction} with a linear regression on $\Phi^{-1}(p(X_i))$  
and with $\beta = (1 + \sigs)^{\frac{1}{2}}$.  
Plugging in the  estimate $\hat p(x)$ from Section \ref{sec:mean}, the unknown parameters $\sigma^2$ and $\alpha$ can be  estimated using any standard method for spGLMM (likelihood optimization, MCMC, or cross-validation). Subsequent to estimating the covariate effect $m$  
is simply recovered using (\ref{eqn:probit_link_inverse}). 

\blue{The estimate of $m$ is thus proportional to $\Phi^{-1}(\hat p(X_i)))$ where $\hat p(X_i)$ is the RF-GLS estimate from Section \ref{sec:mean}. As the class of random forest estimators for $p$ are universal approximators for any function in $(0,1)$ and inverse the Gaussian cdf $\Phi^{-1}: (0,1) \to \mathbb R$ is a monotone fucntion, the $\Phi^{-1}(\hat p(X_i))$ is an estimate of $m$ from a class of universal approximators on $\mathbb R$, this facilitates accurate identification of $m$ as proved in Theorem \ref{th:main_m} and seen empirically.

We note one that 
if $X$ is a function of only spatial coordinates $s$ and does not have any non-spatial component, then interpretion of $m$ can be challenging. Then $m$ and $w$ can become unidentifiable as the spatial effect can be modeled via $m$ or $w$ or $m+w$. However, this will be the case even in linear mixed effects models, i.e., when $m(X)=X^\top\beta$. The $\beta$'s will lose interpretation when $X$ is only a function of space. This issue of interpretation is applicable to all mixed effects models for spatial data. However, in this case $m+w$ can still be interpreted as the joint covariate-spatial effect, and our method should still work well for predictions.
}

\subsection{Prediction}\label{sec:prediction}
The link inversion technique proposed above seamlessly harmonizes with GP-based predictions in spGLMM. Once we estimate $m$ and all the spatial parameters $\theta$ using the link inversion of Section \ref{sec:link_inversion}, the task of spatial predictions at a new location is simply recast as an spGLMM prediction problem. For probit links, with GP distributed spatial effects, predictions can be obtained in closed form.  
We denote by $m=(m(X_1),\ldots,m(X_n))^\top$,  
$C=Cov(w \given \theta)$ the estimated covariance matrix of the $w$ at the data locations, and $D$ a diagonal matrix with entries $2Y(s_1) -1, \ldots, 2Y(s_n)-1$.  
For predicting at $Y_{new} = Y(s_{new})$ at a new location $s_{new}$ with covariates $X_{new}$, let $Y^*$, $m^*$, $C^*$ and $D^*$ respectively denote the analogs of $Y$, $m$, $C$ and $D$ when including a new datapoint $(Y_{new}=1,X_{new},s_{new})$ to the dataset. Then following \cite{cao2022}, the predicted conditional mean probability of $Y_{new}=1$ is given by 
\begin{equation}
    \label{eqn:GLMM_prediction}
    \mathbb{E}(Y_{new} \given X_{new}, s_{new} Y,X) = 
    \frac{\Phi_{n+1}(D^*m^*,I_{n+1} + D^*C^*D^*)}{\Phi_{n}(Dm,I_{n+1} + DCD)}
\end{equation}
where $\Phi_n(u,V)$ is the cdf of an $n$-dimensional normal distribution with mean $u$ and variance $V$. \cite{saha2022scalable} developed a scalable approach to compute such multivariate normal cdf's for spatial data using the Nearest Neighbor GP covariance matrices. 
Adopting this method, the ratio in (\ref{eqn:GLMM_prediction}) can be calculated efficiently with the estimates of $m$ and $\theta$ obtained from Section \ref{sec:link_inversion}.

This completes all the steps of our method. \blue{The first part in Section \ref{sec:rfgls} offers non-parametric mean function estimation for binary dependent processes using RF-GLS  
without requiring knowledge of the full data distribution or the exact form of spatial dependence. The second part embeds this mean function estimate} in the generalized mixed model framework with Gaussian Process based random effects -- offering covariate effect estimation, and spatial prediction for binary spatial data \blue{via the link-inversion. This relies on a parametric link function (probit) and a choice of the GP covariance family.} 
We refer to this model as {\em RF-GP} as it combines Random Forest with Gaussian Process within the generalized mixed model framework.

\section{Theory}\label{sec:consistency}
\subsection{\blue{Brief literature review of RF theory}}\label{sec:threv}
We establish the main results on the asymptotic consistency of the RF-GP for dependent binary data.   
To our knowledge, there is no current theory for RF or variants like spatial RF \citep{hengl2018random} for spatially dependent binary data. 
 \blue{There is a large literature on theoretical properties of `honest' random trees and forests for various data types \citep{wager2018estimation,athey2019generalized,denil2014narrowing,wu2022uncertainty}. For more details please refer to \cite{havelka2022honesty} and references there in}.   
\blue{However, honest RF  
uses independent splits of data for tree partitioning and node representative assignment which makes its theoretical study  
quite different from that of Breiman's RF or RF-GLS which use the whole data for both node splitting and representative assignment.} For continuous data, consistency of Breiman's regression tree and the random forest was established for independent errors in \cite{scornet2015consistency} which was extended to dependent settings in \cite{saha2023random} and \cite{goehry2020random}. 
\cite{saha2023random} also proved the consistency of RF-GLS for dependent continuous data. 

There are fundamental differences in the data generation procedure for continuous and binary spatial responses, creating new \blue{new challenges for the theoretical study of RF for binary data} that does not have an analog in the continuous case theory of \cite{scornet2015consistency} and \cite{saha2023random}. \blue{In mixed effects models for} continuous data, $Y_i = m(X_i) + w(s_i) + \eps_i$ where $\eps_i$ are the random i.i.d. errors. This additive form allows studying the RF-GLS estimator (\ref{eqn:DART_mean}) by separately studying convergences of empirical averages of $X$, $w$, and $\eps$ processes. This separability does not arise for the binary data which are a \blue{sequence of correlated} random coin tosses with outcome probability, that is non-linear in $X$. Another aspect of a non-linear mean is the challenge of controlling the variance of leaf nodes of the decision trees. For continuous data, this is accomplished by assuming an additive structure on the mean. For binary data, the mean $p(X)$ is no longer additive in the components even if $m$ is additive \blue{in (\ref{eqn:hgnlm}).} Also, unlike continuous data, the mean function $p(X)=\mathbb{E}(Y|X)$ does not equal the covariate effect $m(X)$. 
Hence, in addition to studying the consistency of the estimates of the mean function, properties of the estimate of the covariate effect obtained via the link inversion of Section \ref{sec:link_inversion} also need to be \blue{separately} studied.

\subsection{\blue{General} results}\label{sec:main}
\blue{We first present a set of general results for consistency of RF-GLS for estimation of the mean and quantities derived from it like PDF and CATE for dependent binary data. These results do not rely on parametric data generation assumptions. Specific examples are discussed in Sections \ref{sec:mm} through \ref{sec:rftheory}.}
 We outline the assumptions required for the general consistency results.  
\vskip 0.4em 
\begin{assume}[ \textcolor{black}{Data generating process}]\label{as:data_gen} \blue{$Y_i \in \left\{ 0, 1\right\}$ is a binary process such that $p(x) = \mathbb E \left(Y_i | X_i =x \right) = g(m(x))$ for some 
 increasing continuously differentiable function $g: \mathbb R \to \left[0,1 \right]$ and some continuous and additive function $m(X)=\sum_{d=1}^D m_d(X^{(d)})$ where $X^{(d)}$ denotes the $d^{th}$ component of $X$. $X^{(d)} \overset{i.i.d}{\sim} \mathrm{Unif}[0,1]$ and $\{(Y_i,X_{i-l})\}_i$ is a stationary ergodic {\em absolutely regular} ($\beta$-mixing) process for any fixed finite lag $l \in \mathbb Z$, the set of integers.} 
\end{assume}

\begin{assume}[Regularity of working precision matrix]\label{as:working_cov} 
The working precision matrix $Q$ is diagonally dominant with a banded Cholesky factor. 
\end{assume}

\begin{assume}[Rate of tree growth]\label{as:tn_rate} 
Let $t_n$ be the maximum number of leaves in a tree. Then, $
t_n (\log n)/n \to 0 \text{ as } n \to \infty$.
\end{assume}

\blue{Assumption \ref{as:data_gen} lays out the data generation process. The theory assumes that there is true value of the parameter of interest (in this case the function-valued parameter $p$) which generates the data. We do not make any assumption about the full distribution of the outcome process or even about the covariance structure. We only assume that the mean function is a composition of an additive function $m$ and a monotone link $g$. 
We do not assume knowledge of $g$ or $m$.} 
The additivity of $m$ is commonly used in the theoretical study of random forests \cite{scornet2015consistency,saha2023random} and is critical to control the variation of the mean function $p$ in the leaf nodes. However, for continuous data, $p$ and $m$ are the same, implying $p$ is also additive. 
For binary data, due to the nonlinear $g$, even for an additive $m$, the mean $p(X)$ is not additive or separable in the individual \blue{components of $X$.}  
We will develop a novel proof for controlling the variation of $p$ in the leaf nodes using the \blue{Taylor expansion} and the regularity of the link. \blue{The assumption of the covariates being Uniform$[0,1]$ variables is also standard as random forests or RF-GLS estimators are invariant to any monotone transformation of the covariates. Finally, the nature of dependence in the outcome process is assumed to be stationary and $\beta$-mixing \citep[see][for a definition]{bradley2005basic} as it allows using existing uniform laws of large numbers \citep{nobel1993note} to control the estimation error. 
 We assume $\{(Y_i, X_{i-l})\}_i$ to be jointly stationary ergodic and $\beta$-mixing for any fixed lag $l$. This is needed for the general results in this section as the GLS estimator involves terms of $Y$ and lagged $X$ and we make no further assumptions on the exact data generation mechanism for $Y$.  
 For the results in the next section focused on the special case of the generalized mixed-model (\ref{eqn:hgnlm}), it suffices to just assume $w$ to be stationary, ergodic and $\beta$-mixing and $w \independent X$, same as what was assumed in \cite{saha2023random} for continuous data.}

The regular structure of the working precision matrix (Assumption \ref{as:working_cov})
 exists in many common dependent processes in spatial and time-series literature like GP with exponential covariance functions or Nearest Neighbor Gaussian Processes \cite[NNGP,][]{datta2016nearest} on the 1-dimensional lattice, and autoregressive time-series. 
 The restriction is only on the working precision matrix, not on the true precision matrix of the spatial random effects generating the data. We specify the technical regularity conditions of Assumptions 1 and 2 in Supplement Section \ref{sec:assumptions}.  

Assumption 3 is about the scaling of the trees relative to the sample size. The assumed rate for binary data is in line with the rate mentioned in \cite{scornet2015consistency} for bounded error and is a more lenient bound on the number of leaf nodes $t_n$ compared to the $t_n (\log n)^9/n \to 0$ scaling used in \cite{scornet2015consistency} and \cite{saha2023random} for unbounded continuous outcomes. 

We first present general results under these assumptions. 
We then discuss specific examples of dependent data generation processes and working covariances that are covered by the  results. 
Let $\mathcal{D}_n$ be the data and 
$p_n ({x}_0; \Theta, \mathcal{D}_n)$  
be the RF-GLS tree estimate of the mean function, built with a random i.i.d. $\Theta$. The RF-GLS forest estimator is then given by $\bar{p}_n  ({x}_0; 
\mathcal{D}_n) = \mathbb{E}_{\Theta}p_n ({x}_0; \Theta,
\mathcal{D}_n)$, i.e., an average of trees generated from all possible instances of randomness $\Theta$. Our first result shows that \blue{the mean function estimate $\bar{p}_n(x) $ from RF-GLS in Section \ref{sec:mean}} is an $\mathbb{L}_2$-consistent estimator of $p(x )= \mathbb{E} (Y | X = x)$. 
\vskip 0.4em 
\begin{theorem}[Consistency for mean function]\label{th:main_gh}
	Under Assumptions \ref{as:data_gen} - \ref{as:tn_rate}, \blue{the RF-GLS estimate $\bar p_n(x)$ is} $\mathbb{L}_2$-consistent for the mean function \blue{$p(x)$ of binary data}, i.e., 
	$
	\lim_{n \to \infty} \mathbb{E} \int \left(\bar{p}_n(X) - p(X) \right)^2 \, dX = 0$.
\end{theorem}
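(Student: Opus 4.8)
The plan is to adapt the $\mathbb{L}_2$-consistency proof for Breiman's random forest in \cite{scornet2015consistency}, together with its dependent-data and GLS extension in \cite{saha2021random}, to the binary setting, treating $p(X)=\mathbb{E}(Y\mid X)$ as the regression function that the GLS-trees estimate. First I would reduce the forest to a single tree: since $t\mapsto t^2$ is convex, Jensen's inequality gives $\int(\bar{p}_n(X)-p(X))^2\,dX\le \mathbb{E}_\Theta\int(p_n(X;\Theta,\bSigma,\mathcal{D}_n)-p(X))^2\,dX$, so it suffices to show the right-hand side tends to $0$ in expectation over $\mathcal{D}_n$. Truncating the GLS node estimates to $[0,1]$ only decreases the error since $p\in(0,1)$, so one may work with a bounded estimator.

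For fixed $\Theta$ I would use the usual bias--variance split relative to the data-dependent leaf partition. Writing $A_n(\bx)$ for the leaf containing $\bx$ and $\tilde{p}_n(\bx)$ for the ``noiseless'' GLS node estimate obtained by replacing $\bY$ with $(p(X_1),\dots,p(X_n))^\top$,
\[
(p_n(\bx;\Theta)-p(\bx))^2\le 2(p_n(\bx;\Theta)-\tilde{p}_n(\bx))^2+2(\tilde{p}_n(\bx)-p(\bx))^2 .
\]
The first (estimation) term is treated like the variance term in RF-GLS: the GLS node estimates are linear functionals of $\bY$ whose total weight on each leaf is controlled by the conditioning of $\bQ$ and the leaf size, the $Y_i$ are bounded, the working precision matrix $\bQ$ has uniformly bounded spectral norm and, by the banded Cholesky structure together with the diagonal-dominance Assumption~\ref{item:assumption_diagonas_dominance}, uniformly bounded spectral density, and Assumption~\ref{as:tn_rate} forces the smallest leaf count to grow with high probability. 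A uniform law of large numbers over the class of candidate partitions then gives concentration of the node statistics; to invoke it under dependence I would first record that the derived process $\{(X_i,W_i,Y_i)\}$ inherits $\beta$-mixing from $W$ (the $X_i$ are i.i.d.\ and the $Y_i$ are conditionally independent Bernoulli given $(X,W)$). The bounded range of $Y$ is what permits the relaxed scaling $t_n\log n/n\to0$ in place of the $t_n(\log n)^9/n\to0$ needed for unbounded responses.

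The main obstacle is the second (approximation) term, bounded by $\sup_{\bz\in A_n(\bx)}(p(\bz)-p(\bx))^2$: I must show the oscillation of $p$ over leaves vanishes, and unlike the continuous case $p$ is \emph{not} additive even though $m$ is, so Scornet's additive-structure argument cannot be applied to $p$ directly. The plan is threefold. (i) Reduce the oscillation of $p$ to that of $m$: since $|m|\le M$ and $z\mapsto g(z):=\mathbb{E}_W(h(z+W))$ is $C^1$ on $[-M,M]$ by Assumption~\ref{item:assumption_differentiability}, the fundamental theorem of calculus gives $|p(\bz)-p(\bx)|=|g(m(\bz))-g(m(\bx))|\le L\,|m(\bz)-m(\bx)|$ with $L=\sup_{[-M,M]}|g'|<\infty$. (ii) Use additivity of $m$ and uniform continuity of its components on $[0,1]$ to get $|m(\bz)-m(\bx)|\le\sum_{d=1}^D\omega_{m_d}(\mathrm{diam}_d A_n(\bx))$, with $\omega_{m_d}$ the modulus of continuity of $m_d$. (iii) Adapt the Scornet/Saha greedy variance-reduction argument to show $\mathrm{diam}_d A_n(\bx)\to0$ in every coordinate in which $m_d$ is non-constant. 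The delicate point in (iii) is that the GLS split criterion targets $p$, not $m$; but because $g$ is strictly increasing, the coordinates in which $p$ varies coincide with those in which $m_d$ is non-constant, so the criterion still forces refinement in exactly the coordinates whose oscillation must be controlled, and the monotonicity/variance-reduction argument of \cite{saha2021random} can be carried through the GLS loss, with the low-probability event that a leaf is too small absorbed via Assumption~\ref{as:tn_rate}. Combining (i)--(iii) with the estimation-term bound, integrating over $\bx$, and taking expectations over $\mathcal{D}_n$ and $\Theta$ then yields the theorem.
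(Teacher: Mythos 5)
Your overall strategy is the same as the paper's: both proofs run the Scornet--Saha program of splitting the $\mathbb L_2$ risk into an approximation part (oscillation of the target over the leaves) and an estimation part (a ULLN over the candidate partitions, obtained for $\beta$-mixing data by reduction to the i.i.d.\ case and made cheap by the boundedness of $Y$, which is exactly what buys the relaxed scaling $t_n\log n/n\to 0$). Your steps (i)--(ii) --- passing from the oscillation of $p$ to that of $m$ via a Lipschitz bound for $g_{h,\mathbb F_w}$ on $[-M,M]$ and then using additivity of $m$ --- match the paper's use of Assumption \ref{item:assumption_differentiability}. Two of your shortcuts are benign but differ in form from the paper: you truncate the node estimates to $[0,1]$, whereas the paper instead proves the GLS estimator is uniformly bounded in $n$ (Lemma \ref{lemma:bounded_estimate}); and you assert that $\{(X_i,W_i,Y_i)\}$ inherits $\beta$-mixing, which the paper makes rigorous by introducing i.i.d.\ latent uniforms $U_i$ and proving joint mixing of $(U_i,X_i,W_i)$ (Lemma \ref{lemma:joint_mixing}) --- the extra Bernoulli randomness is not a measurable function of $(X_i,W_i)$ alone, so this step genuinely needs the latent representation.

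The one genuine gap is in your step (iii), which is precisely where the paper's new technical content for this theorem sits (Lemma \ref{lemma:noves_mean_noseparable}). The observation that ``the coordinates in which $p$ varies coincide with those in which $m_d$ is non-constant'' is pointwise, but the theoretical split criterion in direction $d$ on a cell $\calA$ does not see $p$ pointwise: after Scornet's algebra it depends only on $\psi_d(t)=\int p(t,x_{-d})\,dx_{-d}$, the mean function marginalized over the other coordinates of the cell, and the criterion vanishes for all cuts in direction $d$ exactly when $\psi_d$ is constant. In the continuous case $\psi_d(t)=m_d(t)+\mathrm{const}$ by additivity, so nothing more is needed; here $p=g_{h,\mathbb F_w}\circ m$ is not additive, and you must separately show that constancy of the marginal $\psi_d$ forces constancy of $m_d$. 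The paper does this by approximating the marginal integral by a Riemann sum and applying the mean value theorem to $g_{h,\mathbb F_w}$, using a uniform positive \emph{lower} bound $M'$ on $g'_{h,\mathbb F_w}$ over $m([0,1]^D)\subseteq[-M,M]$ to convert $m_d(c^{(1)})<m_d(c^{(2)})$ into a strictly positive gap $\psi_d(c^{(2)})-\psi_d(c^{(1)})\geq M'\bigl(m_d(c^{(2)})-m_d(c^{(1)})\bigr)\mathrm{Vol}$, yielding a contradiction. Note that this requires a lower bound on $g'_{h,\mathbb F_w}$, the opposite direction from the Lipschitz upper bound you use in (i); without it, ``$g$ strictly increasing'' does not quantify how much the marginalized criterion must respond when $m_d$ varies, and the variance-reduction argument you propose to import from Saha et al.\ cannot simply be ``carried through,'' because their version of this very lemma is the place where additivity of the regression function is used.
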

\vskip 0.4em 
\blue{The consistency of the mean directly ensures the consistency of numerous estimators derived from it. Our next result shows that the estimates of partial dependence function $\hat p_j(X^{(j)})$ from RF-GLS (Section \ref{sec:pdf}) is a consistent estimate of the true PDF $p_j(X^{(j)})$ for each variable $j$.
\vskip 0.4em 
\begin{corollary}[Consistency of PDF]\label{cor:pdf} For any $1\leq j \leq D$, the estimated partial dependence function $\hat p_j(X^{(j)})$ is an $\mathbb{L}_2$ consistent estimate of $p_j(X^{(j)})$, i.e., 
$ \mathbb E \int (p_j(X^{(j)}) -  \hat p_j(X^{(j)}))^2 dX^{(j)} \to 0$.   
\end{corollary}
\vskip 0.4em 
The result shows that partial dependence functions estimated from RF-GP can be used to understand the true relationship between the outcome and any covariate by plotting the partial dependence function $\hat p_j(X^{(j)})$ as a function of $X^{(j)}$. 

In Section \ref{sec:cate}, we have described how the RF-GLS mean estimate for 2 different groups can be used to estimate the conditional average treatment effect. 
Theorem \ref{th:main_gh} also implies consistency 
of this T-learner based estimate of CATE from RF-GLS as stated in the following result.
\vskip 0.4em 
\begin{corollary}[Consistency of CATE]\label{cor:cate}
If each subsequence $\{Y^{[k]}_i,X^{[k]}_i\}_i$ satisfies Assumptions 1-3 for $k \in \{0,1\}$, then $\mathbb E \int (\tau(X) - \hat \tau_T(X))^2 dX \to 0.$
\end{corollary}
\vskip 0.4em

We note that the consistency of estimators derived from the RF-GLS mean estimate, like PDF and CATE, have not been studied previously even in the context of continuous data. Our results on consistency of these estimates also apply for continuous data, as they are solely reliant on the mean estimate being consistent. 

All results in this Section are  
established for the data generation process outlined in Assumption \ref{as:data_gen}, only requiring  specification of the mean function and stationary $\beta$-mixing dependence. We did not make any parametric assumptions, like knowledge of link function or covariance structure. In the next sections we present theory 
for specific examples. }

{\color{black}\subsection{Example: Generalized mixed effects models}\label{sec:mm}
In this Section, we focus on the generalized non-linear mixed effects model (\ref{eqn:hgnlm}). As we demonstrated in Section \ref{sec:link_inversion}, this is a special case of (\ref{eqn:genbin}) 
with $p(X)=g(m(X))$ where $m$ is the covariate effect and $g$  
is given by $g_{h,\mathbb F_w}$ in Proposition \ref{lemma:link_inverse_existance}. The  difference between the specifications in \eqref{eqn:genbin} and \eqref{eqn:hgnlm} 
 is that \eqref{eqn:genbin} only specifies the moments without reauiring any further distributional assumptions, while \eqref{eqn:hgnlm} specifies the whole distribution of the $Y$-process. 

We show in Lemma \ref{lemma:joint_mixing} of the Supplementary Materials that for the data generating process in  \eqref{eqn:hgnlm}, Assumption \ref{as:data_gen} on the mixing strength of $Y$ is be satisfied if simply the random effects $w$ is a stationary, ergodic  absolutely regular ($\beta$-mixing) spatial process, independent of $X$. 
This immediately implies consistency of the RF-GLS mean function estimate for data generated from (\ref{eqn:hgnlm}).  
See Corollary \ref{th:main_gh_mixed_corollary} in  Section \ref{sec:outline} of Supplementary Materials for the full statement.

As discussed in Section \ref{sec:link_inversion}, within the generalized mixed effects model framework, estimation of the covariate effect $m$ is of importance.} 
Our next result shows that  
the link inversion approach of Section \ref{sec:link_inversion} yields an $\mathbb L_2$ consistent estimator of the covariate effect $m$ from the estimate of the mean function $p$. Let $Im(m)$ be the image of $m$, and for any set $A$ and function $f$, denote the restriction of $f$ to the set $A$ as $f|_A$. \blue{Let $ f\circ g$ denote the composition of two functions $f$ and $g$.}
\vskip 0.4em
\begin{theorem}[Consistent recovery of covariate effect]\label{th:main_m}  
\blue{For data generated from (\ref{eqn:hgnlm}),} under Assumptions \ref{as:data_gen} - \ref{as:tn_rate}, 
there exists a function $\kappa : \mathbb R \to \mathbb R$ with $\kappa |_{g_{h,\mathbb F_w}(Im(m))} = g^+_{h,\mathbb F_w}$ where $g^+_{h,\mathbb F_w}$ is the link-inversion function of Proposition \ref{lemma:link_inverse_existance}, such that 
$\kappa \circ \bar{p}_n$ is an $\mathbb L_2$-consistent estimator of $m$, i.e.
    	$\lim_{n \to \infty} \mathbb{E} \int \left(\kappa(\bar{p}_n(X)) - m(X) \right)^2 \, dX = 0$.
\end{theorem}
\vskip 0.4em

Theorem \ref{th:main_gh} \blue{(Corollary \ref{th:main_gh_mixed_corollary})} and Theorem \ref{th:main_m} show that \blue{for generalized mixed models (\ref{eqn:hgnlm})}, RF-GP is consistent both for the mean function and the covariate effect. In the next sections, we \blue{present common spatial and time-series models for which these consistency results hold}. 

\subsection{Example: Mat\'ern Gaussian process}\label{sec:gp}

\blue{We consider the generalized mixed effects model (\ref{eqn:hgnlm}) with the covariance of the Gaussian Process $w$ being from the} Mat\'ern family of covariance functions. \blue{The Matérn class is} widely popular as it can characterize the smoothness of the spatial surfaces \citep{stein2012interpolation} and subsumes the popular exponential and the Gaussian (squared-exponential) covariances as special or limiting cases. The Mat\'ern covariance between locations $s_i$ and $s_j$ is given by 
\begin{equation}\label{eq:matern}
C(s_i,s_j \given {\theta}) = C(\|s_i - s_j\|_2) = \sigs \frac{2^{1-\nu} \left(\sqrt 2 \phi\|s_i - s_j\|_2\right)^\nu}{\Gamma(\nu)} \calK_\nu \left(\sqrt 2 \phi\|s_i - s_j\|_2\right), 
\end{equation}
where ${\theta}=(\sigs,\phi,\nu)^\top$ is the set of spatial parameters, specifying the covariance function and $\calK_\nu$ is the modified Bessel function of the second kind. 

For binary spatial data with random effects $w \sim GP(0,C(\cdot,\cdot))$, the spatial structure of the working covariance matrix $Q^{-1}$ ideally needs to be chosen from the same family as $C$ (see Section \ref{sec:correlation_estimation}). However, for spatial data measured at $n$ locations, the Mat\'ern covariance yields an $n \times n$ dense matrix, and inverting it to obtain $Q$ would involve $O(n^2)$ storage and $O(n^3)$ parameters.  
To circumvent the computational challenges, we recommend choosing $Q$ to be the Nearest Neighbor Gaussian Process \citep[NNGP,][]{datta2016nearest,finley2019efficient} precision matrix based on the Mat\'ern covariance family. NNGP precision matrices provide an excellent approximation to their full GP analogs while 
requiring $O(n)$ storage and time. 

The following result proves the consistency of RF-GP, for binary spatial data generated from a mixed effects model with Mat\'ern GP random effects and when using NNGP working precision matrix in the RF-GP algorithm. 
\vskip 0.5em 
\begin{corollary}
\label{prop:spatias_application}
Consider binary data generated from a generalized mixed effects model $Y(s_i) \given X_i,w(s_i) \overset{ind}{\sim}  \mathrm{Bernoulli}\left(h\left(m(X_i) + w(s_i\right)\right)$ where $h$ is a probit or logit link, $m$ is continuous and additive, $w(s)$ is a Mat\'ern GP (independent of $X$), sampled on the one-dimensional regular lattice, with spatial parameters given by $(\sigs_0,\phi_0,\nu_0)$, $\nu_0$ being a half-integer. Let $Q$ denote a working precision matrix obtained from a Nearest Neighbor Gaussian Process (NNGP) under Mat\'ern covariance, with parameters $(\sigs,\phi,\nu)$. Then there exists some $K >0$ such if $\phi > K$, then RF-GP using $Q$ yields an $\mathbb L_2$ consistent estimate of $\mathbb{E}(Y | X)$ and $m(X)$.
\end{corollary}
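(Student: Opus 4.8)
The plan is to obtain the corollary by checking that the stated data‑generating mechanism together with the NNGP working precision matrix $\bQ$ satisfies Assumptions~\ref{as:data_gen}--\ref{as:tn_rate}, and then invoking Theorems~\ref{th:main_gh} and~\ref{th:main_m}. Several conditions are immediate: both probit and logit links are strictly increasing, giving~\ref{item:assumption_h}; the covariate distribution and additivity conditions~\ref{item:assumption_x} and~\ref{item:assumption_m} are assumed verbatim; and a Mat\'ern Gaussian process is Gaussian (hence $\mathbb{E}|W| < \infty$), stationary, and independent of $X$, while its correlation on the regular one‑dimensional lattice decays geometrically when $\nu_0$ is a half‑integer (the process then admits a finite‑order autoregressive representation along the lattice), so it is absolutely regular at a geometric rate, yielding~\ref{item:assumption_w}. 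For~\ref{item:assumption_differentiability}, continuity of each $m_d$ on $[0,1]$ makes $m$ bounded on $[0,1]^D$, so we take $M = \sup_{\bx \in [0,1]^D}|m(\bx)|$; since $h$ is bounded with bounded derivative for probit and logit, dominated convergence permits differentiating under the integral, so $z \mapsto \mathbb{E}_W(h(z+W))$ is continuously differentiable on $\mathbb{R}$, in particular on $[-M,M]$. Assumption~\ref{as:tn_rate} is a restriction on how the forest is grown, not on the data, and is met by running RF‑GP with $t_n$ obeying $t_n(\log n)/n \to 0$.

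What remains is Assumption~\ref{as:working_cov}, which concerns only the working precision $\bQ$ of a Mat\'ern NNGP on the regular one‑dimensional lattice. Writing $\bQ$ in the standard NNGP form as a product involving the matrix $\bA$ of neighbour‑regression coefficients (each location regressed on its at most $q$ preceding neighbours) and the diagonal matrix $\bD$ of conditional variances, stationarity of the Mat\'ern covariance and the constant lattice spacing force $\bA$ and $\bD$ to be eventually constant along their diagonals. Hence the Cholesky factor $\bQ^{\frac 12}$ is lower triangular, banded with bandwidth $q$, and equal to a fixed leading $q\times q$ block $\bL$ above a repeated band vector $\brho$ --- exactly the structure of~\eqref{eqn:assumption_chol} --- which is~\ref{item:assumption_chol}.

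The main obstacle is~\ref{item:assumption_diagonas_dominance}: showing $\bQ$ is diagonally dominant with a margin $\xi>0$ uniform in $n$, and it is precisely here that the hypothesis $\phi > K$ enters. As the working decay parameter $\phi$ grows, the Mat\'ern correlation between distinct lattice points tends to zero at a rate governed by $\phi$, so the neighbour‑regression coefficients in $\bA$ shrink to $0$ and the conditional variances in $\bD$ converge to $\sigs$; consequently the off‑diagonal entries of $\bQ$ (built from these coefficients, scaled by $\bD^{-1}$) become arbitrarily small while the diagonal entries stay bounded away from $0$. I would make this quantitative to produce a threshold $K$ such that $\phi > K$ forces, for each row past the initial $q$ transient rows, $\bQ_{dd} - \sum_{l\neq d}|\bQ_{dl}| > \xi$; the uniformity in $n$ is automatic because all such rows are lattice‑translates of a single row. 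These estimates parallel the verification of the analogous condition for RF‑GLS in \cite{saha2021random}, with the simplification that here we only need to control the working precision matrix, not the true error covariance. With Assumptions~\ref{as:data_gen}--\ref{as:tn_rate} verified, Theorem~\ref{th:main_gh} gives $\mathbb{L}_2$‑consistency of $\bar p_n$ for $\mathbb{E}(Y\mid X)$ and Theorem~\ref{th:main_m} gives $\mathbb{L}_2$‑consistency of $\kappa\circ\bar p_n$ for $m(X)$ --- with $\kappa$ the explicit map~\eqref{eqn:probit_link_inverse} in the probit case --- which is the assertion of the corollary.
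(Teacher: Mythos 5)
Your proposal is correct and follows essentially the same route as the paper: verify Assumptions \ref{as:data_gen}--\ref{as:tn_rate} (the $\beta$-mixing of half-integer Mat\'ern on the lattice via its ARMA representation, regularity of $g_{h,\mathbb F_w}$ via dominated convergence, and the banded translation-invariant Cholesky structure of the NNGP precision) and then invoke Theorems \ref{th:main_gh} and \ref{th:main_m}. If anything, you give more attention to the diagonal-dominance condition \ref{item:assumption_diagonas_dominance} and the role of the threshold $K$ than the paper's supplementary proof, which only asserts this point in the main text.
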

\vskip 0.5em 
The lattice design is widely accepted in theoretical investigations of spatial processes \citep{du2009fixed,stein2002screening}. The restriction on $\nu$ to half-integers $\nu \in {1.5,2.5,\ldots}$ has also been of special interest as these facilitate efficient likelihood computation \blue{due to close form covariance expressions.} 
These assumptions simultaneously ensure that the dependence in the true data generation process is $\beta$-mixing, 
and that the NNGP working precision matrix $Q$ has a banded Cholesky factor. Finally, choosing $\phi > K$ for the spatial range parameter ensures that $Q$ is diagonally dominant. We reiterate that this constraint is only for the working covariance matrix and does not restrict the true generation process.

{\color{black}\subsection{Example: Compactly supported covariance functions on two-dimensional domains}\label{sec:2d}

In many applications, spatial data is from two-dimensional domains. The main challenge prohibiting extension of Corollary \ref{prop:spatias_application} to two-dimensional domains is the $\beta$-mixing condition on the \blue{spatial} effect process.  
In this Section, we consider the wide class of compactly covariance functions for which consistency of RF-GP can be established in two-dimensional strips. A stationary covariance function $C_a$ is compactly supported if $C_a(d)=0$ for all distances $|d|$ greater than some threshold $a$. Compactly supported covariance functions are widely popular  in spatial analysis. Examples include the spherical covariance function given by
\begin{equation}\label{eq:sphere}
C_a(d) =
\begin{cases}
\sigma^2 \left( 1 - \frac{3|d|}{2a} + \frac{|d|^3}{2a^3} \right), & 0 \leq |d| \leq a, \\
0, & |d| > a.
\end{cases}
\end{equation}
This covariance function has historically been prevalent in geological and hydrological applications 
\citep{stein2012interpolation}. Another compactly supported covariance family is the generalized Wendland function class \citep{gneiting2002compactly} given in (\ref{eq:generalized_wendland}). 
Recently, \cite{bevilacqua2022unifying} showed that the  
generalized Wendland functions approximates the Matérn class which is attained as a special limit case. 
More generally, compactly supported covariance functions have become popular  to reduce computational costs in spatial analysis via `covariance tapering' \citep{furrer2006covariance,kaufman2008covariance}. Given any valid covariance function $C$ and a compactly-supported covariance function $C_a$, the tapered covariance is given by the Schur-product $C_{taper,a}=C * C_a$. Theorem 5.2.1 of \cite{horn1994topics} ensures that $C_{taper}$ is a valid covariance function, enabling its use in spatial models, replacing $C$. As $C_{taper,a}$ is zero outside the threshold distance, using it as the GP covariance in spatial models yields sparse covariance matrices alleviating storage costs. Also, $C_{taper,a}(d) \to C(d)$ as $a \to \infty$. So any covariance function can be approximated using 
compactly supported covariance functions. 

\begin{equation}\label{eq:generalized_wendland}
\begin{aligned}
    &C_a(d) = \sigs \phi_{\nu, \mu, a} \left(|d|\right) ;\\
&\phi_{\nu, \mu, a} \left(|d|\right) := \mathcal{GW}_{\nu, \mu, \delta_{\nu, \mu, a}}\left(|d|\right); \:\:\:\:\:\delta_{\nu, \mu, a} = a \left( \frac{\Gamma \left( \mu + 2 \nu + 1\right)}{\Gamma \left( \mu\right)}\right)^{\frac{1}{1 +2\nu}}; \\
&\mathcal{GW}_{\nu, \mu, a}\left(r\right) := \begin{cases}
    \frac{1}{B\left(2\nu, \mu + 1 \right)} \int_{r/a}^1 u \left( u^2 - \left( r/a\right)^2 \right)^{\nu - 1} \left(1 - u \right)^\mu du, & \text{if $0 \leq r \leq a$}\\
0, & \text{if $r > a$}
\end{cases}\\
&\mathcal{GW}_{0, \mu, a}\left(r\right) := \begin{cases}
    \left( 1 - \frac{r}{a}\right)^\mu, & \text{if $0 \leq r \leq a$}\\
0, & \text{if $r > a$}.
\end{cases}\\
\end{aligned}
\end{equation}

The following result proves the consistency of RF-GP on two-dimensional strips for binary spatial data generated from a mixed model in \eqref{eqn:hgnlm} with the \blue{spatial} effects following a GP with any stationary compactly-supported covariance function. 
\vspace{0.4em}
\begin{proposition}
\label{prop:spatias_application_2D}
Consider binary data generated on a two-dimensional lattice $\{(k,l) : 1 \leq k \leq n_1, 1 \leq l \leq n_2, k,l \in \mathbb Z\}$ 
from a generalized mixed effects model (\ref{eqn:hgnlm}) with continuous and additive link $m$, probit or logit link $h$, and $w \sim GP(0, C(\theta_0))$. where $C(\theta)$ is any class of stationary compactly supported covariance functions.  
As either $n_1 \to \infty$ or $n_2 \to \infty$, RF-GP using a working precision matrix $Q$ from a Nearest Neighbor Gaussian Process (NNGP) based on $C(\theta)$ for suitable choices of $\theta$ yield $\mathbb L_2$ consistent estimates of $p(X)=\mathbb{E}(Y | X)$ and $m(X)$.
\end{proposition}
\vspace{0.4em}
A technical version of the result is stated as Proposition \ref{prop:compacttech} in the Supplement where it is proved. 
Compactly-supported of covariance functions vanish outside a ball with given radii embedded in $\mathbb R^2$. This ensures that the GP $w$ is an $m$-dependent process on a two-dimensional strip (lattice growing in one direction). This in turn implies that it is a $\beta$-mixing process \citep{bradley2005basic}. 

We state the result as a proposition rather than a corollary of Theorems \ref{th:main_gh} and \ref{th:main_m} as due to the two-dimensional design,  
the NNGP working covariance matrix $Q$ does not satisfy the banded condition of Assumption \ref{as:working_cov}.  
The proof needs to account for this more complex banding structure. Proposition \ref{prop:spatias_application_2D} provides consistency guarantees of RF-GP estimates of both the mean function and the covariate effect on such two-dimensional lattices for the set of all compactly supported function, which as we discussed above, approximates any covariance function.}

\subsection{Example: Binary time-series}\label{sec:ar}
In this article, we primarily focus on binary spatial data generated from and modeled using Gaussian processes. However, the Assumptions \ref{as:data_gen} - \ref{as:tn_rate} are general enough to study the consistency of RF-GP for a  broader class of binary dependent processes. We can consider binary time-series data, with the common autoregressive covariance structure for the temporal random effect and show that RF-GP with an autoregressive working covariance structure yields a consistent estimator of both the mean function and the covariate effect. The details of the data generation model and consistency result are provided in Section \ref{sec:supar} of the Supplementary Material.

\subsection{Example: Na\"ive Random forests for binary data}\label{sec:rftheory} 
The GLS loss (\ref{eqn:DART}) used for our proposed estimator reduces to the Gini measure by setting the working precision matrix $Q = I$, which also satisfies Assumption \ref{as:working_cov}. 
\blue{Thus RF-GLS subsumes Breiman's RF as a special case,}  immediately establishing the consistency of Breiman's RF estimator for binary data as a special case of RF-GP. 
\vskip 0.5em
\begin{corollary}
\label{cor:rf-binary} Under Assumptions \ref{as:data_gen} and \ref{as:tn_rate}, the na\"ive RF \citep{breiman2001random} produces $\mathbb L_2$ consistent estimators of $\mathbb{E} (Y | X)$ for binary dependent data. 
\end{corollary}
\vskip 0.5em

Consistency of original RF for binary data even under dependence is not surprising and the result aligns with the corresponding result for continuous data in \cite{saha2023random}. The Gini measure used in RF is equivalent to OLS loss which is known to produce consistent estimates even under dependence but is generally less efficient than a GLS loss that captures the spatial dependence. This is reflected in our empirical results where we will see that RF-GP performs considerably better than the original RF due to accounting for the spatial correlation. 

A further sub-case of interest is when $w \equiv 0$, i.e., the responses are independent. 
\vskip 0.5em
\begin{corollary}
\label{cor:rf-binaryind} For i.i.d. binary data, RF \citep{breiman2001random}, using Gini measure for node splitting, is $ \mathbb L_2$ consistent for the mean $\mathbb{E}(Y\given X)=h(m(X))$ 
under Assumptions \ref{as:data_gen} and  \ref{as:tn_rate}. 
\end{corollary}
\vskip 0.5em
We think this is the first result on the consistency of Breiman's RF using the Gini impurity measure for binary data. Prior results on consistency of RF classification focused on ``honest" or non-adaptive trees, which assumed the node splits to be independent of $Y$  
\citep{lin2006random,biau2008consistency,biau2010layered,biau2012analysis} \blue{or assumed that the mean function is additive \citep{scornet2015consistency} which does not hold for binary data.} Our result accounts for \blue{data-dependent node splitting as in Breiman's RF and a non-additive mean function.}

\blue{\section{Illustrations}\label{sec:sim}
\subsection{Simulations}
We demonstrate the advantages of the RF-GP over other state-of-the-art approaches through simulation experiments here and on an application to soil type prediction in the next section. 

We compare the performance of our approach with $10$ state-of-the-art methods from both model-based and machine learning paradigms on three tasks : a) mean estimation of $p(X) = \mathbb E \left(Y | X\right )$, b) estimation of covariate effect $m \left(X \right)$ in the mixed model framework \eqref{eqn:hgnlm}, and c) spatial predictions at a new location also in the mixed model framework. The competing methods vary widely in terms of the function classes used (basis functions, generalized additive models, random forests, neural networks, Bayesian additive regression trees or BART) and how they incorporate spatial dependence (mixed effects models, added spatial features). Also, while RF-GP performs all the three tasks, this is not true of all methods. For example, none of the added-spatial-features methods provide  estimates of the conditional mean $\mathbb E(Y \given X)$ or the covariate effect $m$ as they use the covariates and the spatial locations together, and are only suitable for spatial predictions, i.e., estimates of $\mathbb E(Y \given X,s)$. For space constraints, we provide the details of the competing methods, and detailed performance comparisons in Section \ref{subsec:simulation} of the Supplementary Materials. Here, we briefly summarize the main findings.

We first compared performance of the methods in terms of estimation of mean function $p$ and covariate effect $m$. We summarize one set of representative results in Figure \ref{Fig:performance_plot} (left). Apart from RF-GP, the 3 competing methods that estimate $p$ are RF (Breiman's RF for binary data), BART for binary data \citep{chipman2010bart}, and a Basis GMM which uses basis functions in the covariate $X$ to model the coariate effect and a GP to model the spatial random effect within the mixed model setup. Both RF and BART cannot not account for data dependence and their performance worsens when the spatial variance $\sigma^2$ increases, producing MISEs almost $50\% - 100\%$ larger than RF-GP. When $\sigma^2$ is small, their performance is better as the price for ignoring spatial correlation is lower. Basis GLMM, like RF-GP accounts for dependence in the mixed model setup, but uses basis functions for the covariate effect. As the true covariate is $5$-dimensional, basis functions perform poorly. So, when $\sigma^2$ is small, i.e., the covariate effect dominates the spatial effect, Basis GLMM performs very poorly, producing MISE almost $200\%$ larger than RF-GP. When $\sigma^2$ is large, and the spatial effect dominates, poor estimation of the covariate effect in Basis GLMM has less of an impact and it performs similar to RF-GP. 

\begin{figure}[t!]
    \centering
    \begin{subfigure}[b]{0.5\textwidth}
        \centering
        \includegraphics[height=2.7in]{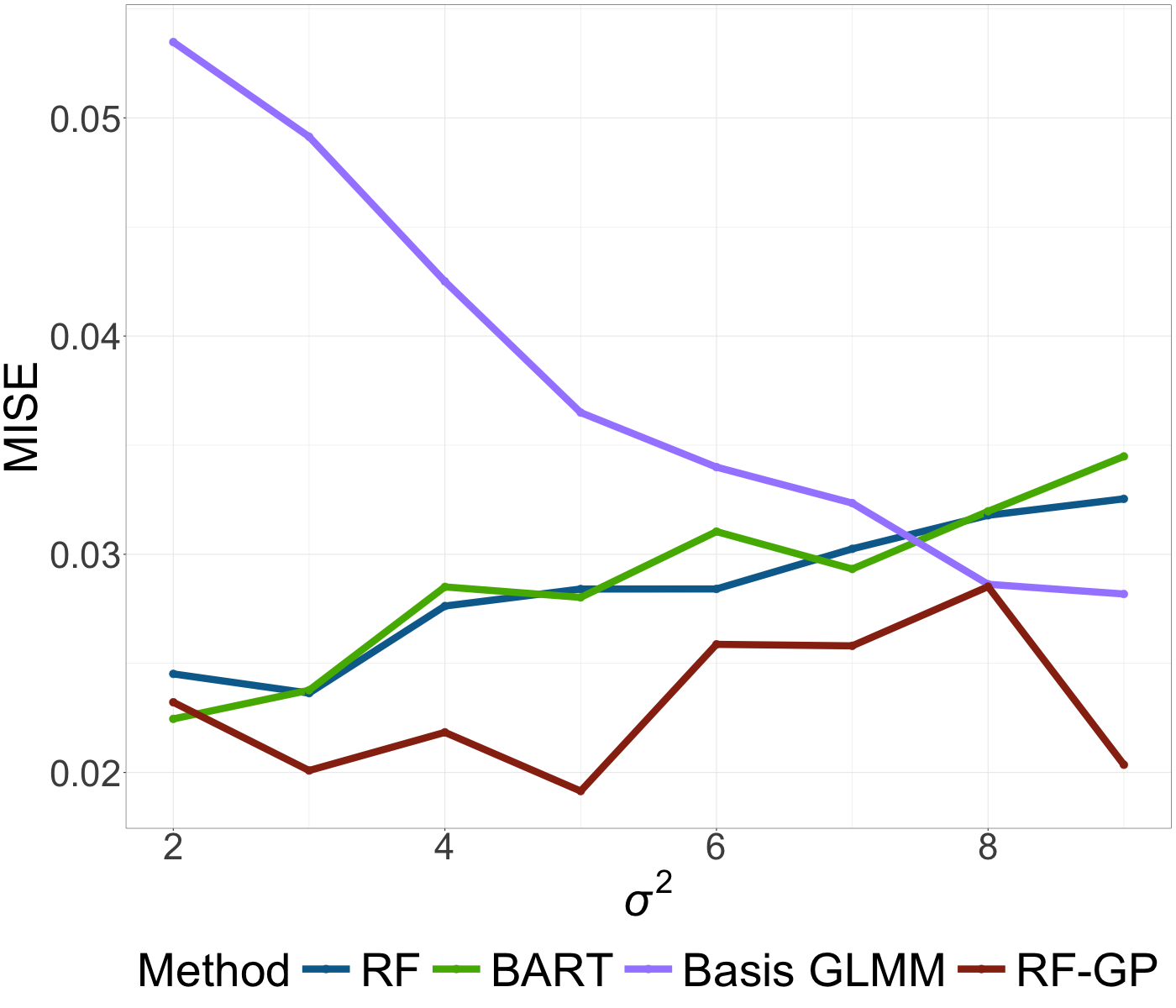}
        \captionsetup{labelfont={color=black},font={color=black}}
        \caption{\blue{$p$ estimation performance}}
    \end{subfigure}%
    ~
    \begin{subfigure}[b]{0.5\textwidth}
        \centering
        \includegraphics[height=2.7in]{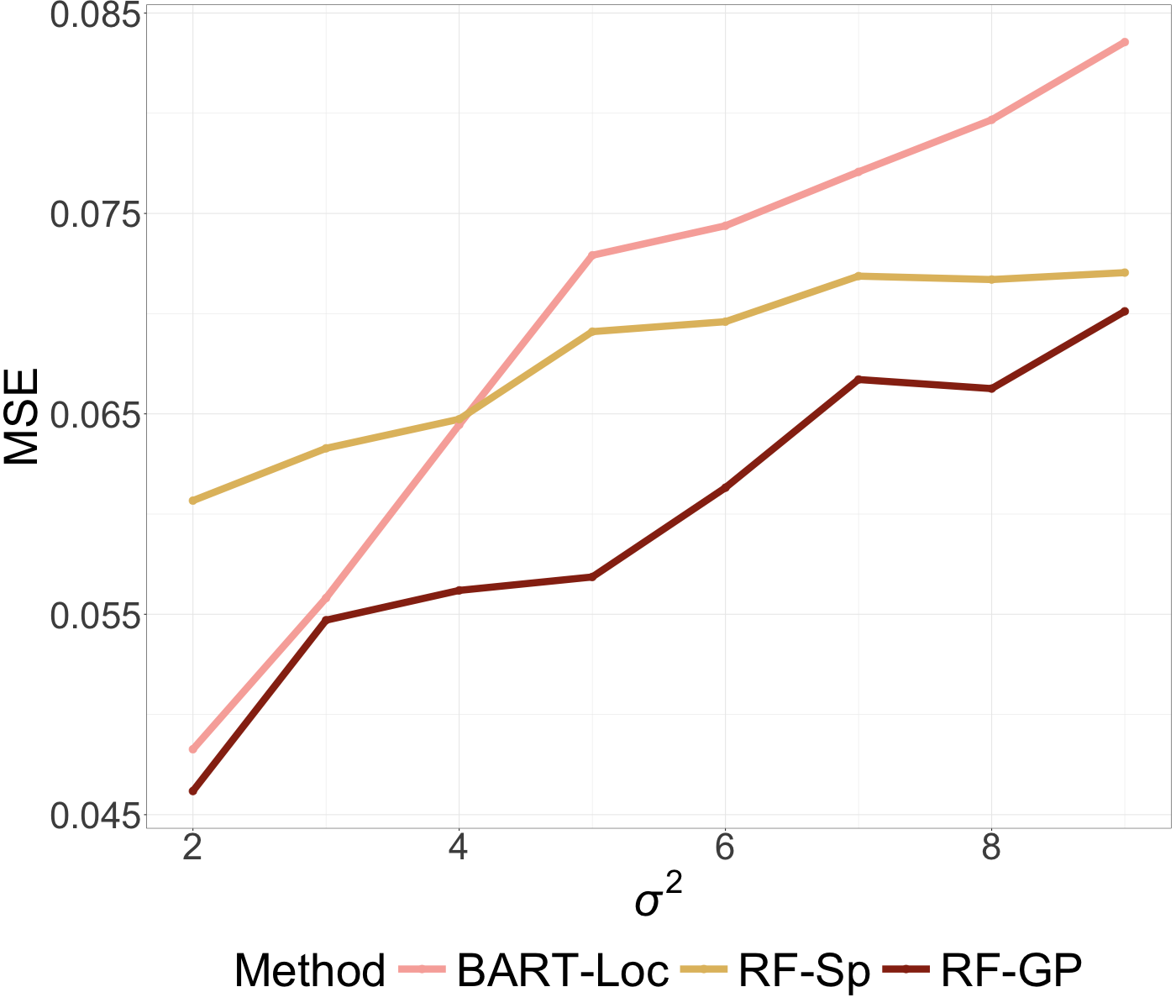}
        \captionsetup{labelfont={color=black},font={color=black}}
        \caption{\blue{Prediction performance}}
    \end{subfigure}
   \captionsetup{labelfont={color=black},font={color=black}}
   \caption{ \blue{Performance on mixed model setup \eqref{eqn:hgnlm}, where the spatial effect $w$ comes from an exponential GP, with sparial variance parameter $\sigma^2 \in \{2,3,\ldots, 9 \}$, and spatial decay parameter $\phi = 3$.}}
    \label{Fig:performance_plot}
\end{figure}
 
The detailed performance comparisons are in Tables \ref{tab:m_mixed} and \ref{tab:p_mixed} in the Supplement. We see that, across the 45 different parameter combinations, RF-GP is the best performning method in an overwhelming majority. When it is not the best method, its performance is very close to the best, whereas for each of the alternate methods, under certain scenarios their MISEs are $50\% - 200\%$ worse than RF-GP.

We then looked at prediction performance for all 11 methods. Here we only present a comparison between the 3 best methods -- RF-GP, BART-Loc (BART with spatial locaitons as added-spatial-features) and RF-Sp \citep{hengl2015mapping} which addes pairwise distances as added spatial features. We see from Figure \ref{Fig:performance_plot} (right) that BART-Loc and RF-Sp perform well at different ends of the spatial variance spectrum. When the spatial variance is low, i.e., the covariate effect dominate, RF-Sp, adding many spatial-features, drowns out the the true covariates, failing to estimate their effect, and performs poorly. It does better when the spatial variance is large, as in that scenario, not estimating the covariate effect correctly has less of an impact. The trends are reversed for BART-Loc, which only adds the spatial-coordinates, and struggles to model complex spatial dependence. This worsens its performance when the spatial effect is dominant, i.e., $\sigma^2$ is large, but has less impact when the spatial effect is low. Both methods, in one or other end of the spectrum, offer $40\%$ worse MSE than RF-GP, which performs well at both ends of the spectrum. The full comparison of the 11 methods in terms of predictive performance is given in Tables \ref{tab:prediction_mixed} and \ref{tab:misclass_mixed} of the Supplement. We see that across multiple data generation scenarios, RF-GP produces best or close-to-best performance in all scenarios, whereas each of the other method perform considerably worse for some subsets of scenarios. 

We also conducted multiple other experiments, including generating data directly from a binary auto-regressive process, to assess the mean estimation performance of the methods outside of the mixed effects model framework. We also considered quality of predictions under model misspecification where the data is not generated from a GP. RF-GP performed the best for both tasks. Furthermore, we have also shown that our method can  a) estimate the effect of each variable via a partial dependence function plot, b) can estimate the conditional average treatment effect in a causal inference context, and c) be extended to use anisotropic GP covariances and correctly identify the scale of anisotropy. For more details on the simulation results, please refer to Supplementary Section \ref{subsec:simulation} in the revised manuscript.}

\subsection{Soil type prediction}
We demonstrate the utility of RF-GP to improve the classification of spatial binary data in a real-world application. We study the spatial pattern of soil types in the Meuse data (available in the \texttt{sp} package in \texttt{R}). 
This dataset contains information on soil type, heavy metal concentration, and landscape variables at 155 locations, spread over approximately $15 \textrm{m}\times 15 \textrm{m}$ area in the flood plain of the river Meuse in the Netherlands. We consider the presence and absence data of the dominant soil type (Type 1) in the area. Soil type is primarily determined by its distance from the river. Additionally, surface water occurrence may also play a role in the soil-forming process \citep{pekel2016high}. Following \cite{hengl2018random}, we use both distance from river Meuse and surface water occurrence as the covariates for predicting the presence of the dominant soil type. 

In Figure \ref{Fig:prediction_map}, we demonstrate an example of a test and training data split ($20\%-80\%$), whereas the background color (in grey contours) demonstrates discretized levels of distances from river Meuse. A spatial structure is evident in the binary response, with most of the locations with the dominant soil type (yellow dots) being close to the river. So the probability of presence of the soil type seems to be correlated with distance from the river, a highly spatial covariate by definition. Now we investigate if there is any spatial correlation in the binary response beyond the effect of the spatial covariates. We perform prediction with classical RF which uses only the covariates and no other spatial information and compare its performance with RF-GP, which unlike the former, explicitly models the spatial correlation in addition to using the covariates in the \blue{covariate} effect. In Figure \ref{Fig:prediction_map}, we observe that RF prediction fails to account for spatial trends beyond covariate effects whereas RF-GP successfully accounts for this. 

To highlight this, we focus on two pairs of locations in the RF prediction in Figure \ref{Fig:prediction_map}. For the pair of locations in the blue oval near the river, RF predicts it is highly probable that the dominant soil type will be present in these locations, whereas \blue{in truth it is absent, as we see in the plot of the test data.} Since RF only makes use of covariate information, this is consistent with the training data, where the dominant soil type is present near the river. As RF doesn't take into consideration spatial information, it fails to account for the fact that in a number of nearby locations, albeit a little further from the river, the dominant soil type is absent. RF-GP accounts for this information and correctly predicts a much lower probability of observing the specific soil type in these locations. For the pair of points in the yellow oval in RF prediction in Figure \ref{Fig:prediction_map}, we observe an opposite scenario where RF predicts low-probability of the dominant soil type as these locations are further away from the river, but RF-GP accurately estimates a high probability of the soil type by leveraging spatial correlations. This indicates that incorporating the spatial information in RF might improve the prediction performance. 

\begin{figure}[t!]
\centering
    \includegraphics[height=2.in]{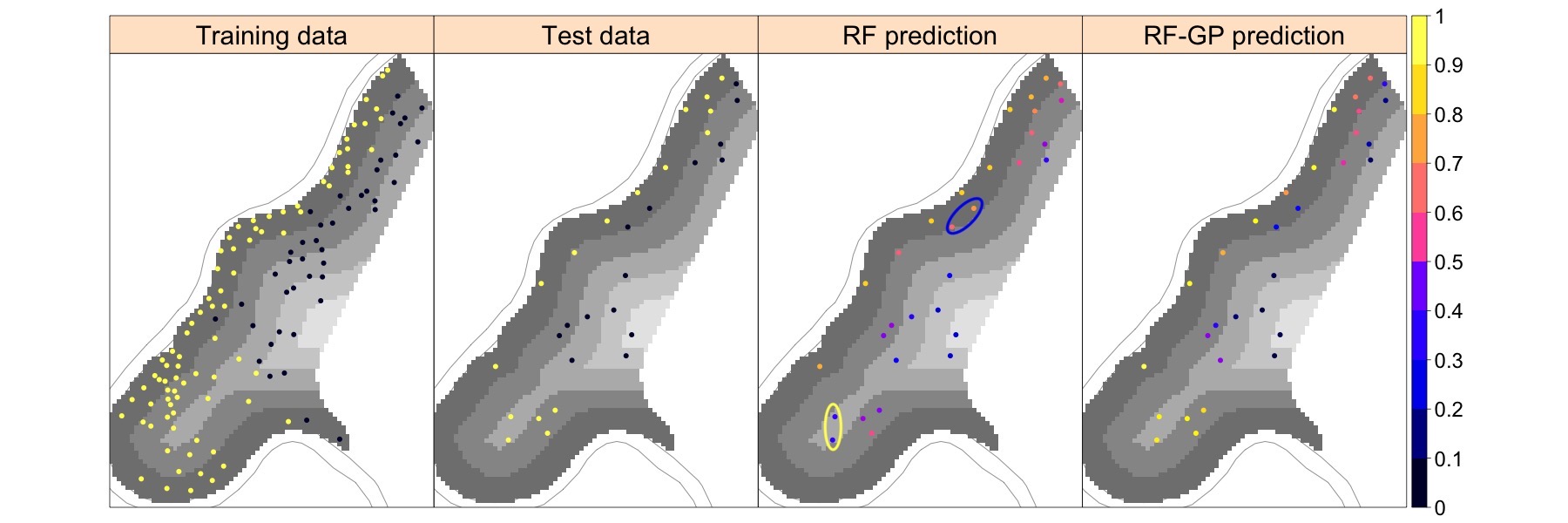}
    \caption{Plot of training and test data alongside prediction from classical RF and RF-GP overlaid on discretized distances from the river. The blue and yellow oval in RF prediction points out two pairs of points, which classical RF misclassifies due to not accounting for spatial correlation beyond the covariate effect. RF-GP explicitly models this spatial correlation and correctly classifies these pairs of points.}\label{Fig:prediction_map}
\end{figure}

In order to ensure that this finding is not an artifact of this specific train-test split, we consider \blue{500} random $20\%-80\%$ test-train splits of the data.\blue{We compare the prediction performance of naïve RF} with RF-Sp, \blue{BART-Loc,} and RF-GP -- \blue{the three best methods in terms of predictive performance from the simulations.}  
Here we do not have access to the true conditional probabilities corresponding to the underlying process generating the test data response, hence we measure the prediction performance based on test misclassification error. This computes the fraction of misclassification in binary prediction problems. 
In Figure \ref{Fig:prediction_hist}, we plot the histogram of the misclassification errors and we report the median misclassification rate corresponding to each method  
in Table \ref{table:MSE}. As expected, classical RF, which doesn't use any spatial information beyond the covariate effects, has the highest  misclassification error ($22.5\%$). 
Both \blue{BART-Loc} and RF-sp, which incorporates spatial information through additional covariates, perform similarly to each other with a misclassification rate of around $9.5\%$, and  improve  over na\"ive RF.
RF-GP, with its parsimonious modeling of spatial information, performs the best among all the competing algorithms with a misclassification rate of $6.5\%$.  
\blue{So, compared to RF-GP, both RF-Sp and \blue{BART-Loc} results in about $50\%$ worse median classification error .} RF-GP also has the thinnest tail (Figure \ref{Fig:prediction_hist}), i.e., the lowest probability of producing extremely poor misclassification rates due to sampling variability in the training data.  

\begin{figure}[t]
\centering
    \includegraphics[height=2.5in]{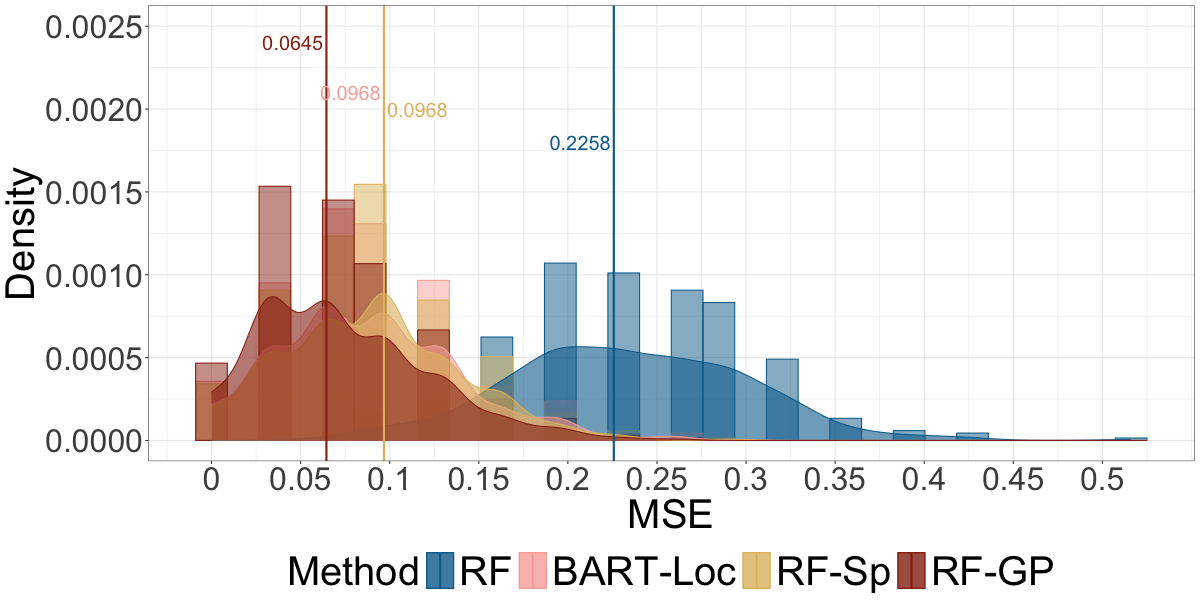}
    \caption{Histogram of the misclassification errors for RF, \blue{BART-Loc}, RF-Sp, and RF-GP in Meuse data. The vertical lines indicate the median values corresponding to different methods.}\label{Fig:prediction_hist}
\end{figure}

\begin{table}[h]
\centering
\begin{tabular}{||c | c  | c | c | c||} 
 \hline
& RF  & \blue{BART-Loc} & RF-Sp & RF-GP \\ [0.5ex] 
 \hline\hline
Misclassification error & \blue{$0.2258$}  & \blue{$0.0968$} & $0.0968$ & $0.0645$ \\  [1ex] 
 \hline
\end{tabular}
\caption{Median values of the misclassification error for the competing methods.}
\label{table:MSE}
\end{table}

\blue{We also conducted sensitivity analysis to see if the application of RF-GP with an isotropic covariance matrix is adequate for this dataset. 
As the presence of the river is a key factor here, there is definitely non-stationarity in the data and we already account for this by incorporating the distance from the river as a covariate in RF-GP. 
We first assessed if there is any further non-stationarity in the data that is not accounted for. For this, we fit another RF-GP which includes the location as additional covariates. The misclassification error rate for this version of RF-GP was identical to that of RF-GP without the location covariates.  
    
Additionally, given that direction is so important in this scenario, we checked if there is any benefit in using an anisotropic covariance matrix in RF-GP. 
We consider scale anisotropy and estimate the anisotropy parameter $\alpha$ (see Section \ref{sim:anisotropy} for details) by cross-validation by varying  $\alpha \in \{1/8, 1/4, 1/2, 1, 2, 4, 8 \}$. As no aanisotropy corresponds to $\alpha=1$, an estimate of $0.82$ suggests that there is very little anisotropy in the data beyond what is explained by the covariate. 
Also, the misclassification error from the best anisotropic model is basically indistinguishable from the misclassification error of the isotropic RF-GP. This indicates that though there might be a small amount of anisotropy, not accounting for that does not have an adverse effect on the performance of the method in terms of misclassification error.

Both of these sensitivity analyses indicate that RF-GP with distance to the river as a covariate and an isotropic GP covariance matrix is able to capture all the spatial variance that can be explained by the locations in the mean function.}

\section{Discussion}\label{sec:discussion}
\blue{Our proposed method, RF-GP, can be used, in principle, with any family of covariance functions. While most of our experiments have used isotropic covariances, in Section \ref{sec:ns} we discuss some potential ways to use RF-GP for non-stationary data and in Section \ref{sim:anisotropy} we show that RF-GP works with anisotropic covariance matrices. In highly non-stationary settings and when only spatial predictions is the goal, we belive that the fully non-parametric methods (using added-spatial-features) or directly modeling non-stationarity will do better than RF-GP with an isotropic covariance when there is adequate training data. Hence, an important future direction would be to directly use nonstationary family of working covariances in RF-GP. As non-stationary covariances often have many parameters, choosing parameters using cross-validation would be challenging. In the future, we plan to directly use the marginalized probit likelihood (Section \ref{sec:prediction}) to get MLE for the covariance parameters which should help circumvent this issue. Additionally, in Section \ref{sec:compute}, we have outlined a simple approach to further improve computational efficiency of RF-GLS and RF-GP using quantile binning. The tradeoffs of this approximation and its potential to be used with non-stationary matrices need to be studied. Extensions of RF-GLS and RF-GP to categorical and multivariate spatial data also need to be explored \citep{dey2022graphical}. 

The primary focus of the paper was on mean function estimation and spatial prediction for dependent binary data. So, we while we presented a proof-of-concept on using RF-GP for CATE estimation in dependent binary data, we do not fully explore the potential of our method for this.  
More extensive studies needs to be conducted to assess its performance under a broader set of scenarios. The theory for CATE estimation also can be strengthened by relaxing some of the design assumptions on the nature of dependence and by considering a formal causal setup. Other estimates of CATE, like ones based on S-learner should also be explored.  Also, while we used simple averaging to obtain the forest estimate from the tree estimates, conceptually one can use any other (and possibly data-driven) weights. 

It is likely that for CATE estimation, such weights would lead to better variance estimates as they downweight the noisy leaf estimates. Finally, while RF-GLS is an extension of  Breiman's RF, in principle, our proposed methodology can be adopted for honest trees, where both the split selection (tree creation) and the mean estimation will use the GLS loss but on separate subsets of data. However, the theory of honest trees relies on the independence of the data subsets used for node splitting and mean estimation. In a spatial setup where all the datapoints are correlated, this assumption will be violated, which may impact the performance of honest trees for spatial data and the vslidity of the theory. A nice aspect of honest trees is that they allow not just estimation but also inference via asymptotic normality results. So, an extension of RF-GLS using honest trees may enable inference for dependent data. 
We leave these studies for the future.

We have provided, to our knowledge, some of the first theoretical results on consistency of random forests for dependent binary data. The theoretical results are of course valid contingent upon compliance with the data generation assumptions. 
For example,  
they rely on regular design and stationarity of the dependence. Extending the results to irregular designs and non-stationary processes will be an important direction. Performance of the method under spatial confounding also needs to be studied. Recent work has shown that GLS estimators can adjust for certain types of spatial confounding. It would be interesting to see if such properties hold for random forests with GLS loss \citep{gilbert2024consistency}. We leave these studies for the future. 
} 
 \vspace{-1em}
 \section*{Acknowledgement}
 AD was partially supported by the National Institute of Environmental Health Sciences (NIEHS)
grant R01 ES033739 and by National Science Foundation (NSF)
Division of Mathematical Sciences grant DMS-1915803.
\vspace{-1 em}
\section*{Supplementary material}
Supplementary materials 
\blue{provide detailed literature review, discussion on using RF-GP under non-stationarity, computational strategies,} an outline of the main ideas used for establishing asymptotic consistency of the RF-GP for spatially dependent binary data, followed by the technical proofs, and details of the implementation and simulations. 

\vspace{-1em}
 \bibliographystyle{apalike}
\bibliography{ref}

\begin{thebibliography}{}

\bibitem[Albert and Chib, 1993]{albert1993bayesian}
Albert, J.~H. and Chib, S. (1993).
\newblock Bayesian analysis of binary and polychotomous response data.
\newblock {\em Journal of the American statistical Association},
  88(422):669--679.

\bibitem[Albert and McShane, 1995]{albert1995generalized}
Albert, P.~S. and McShane, L.~M. (1995).
\newblock A generalized estimating equations approach for spatially correlated
  binary data: applications to the analysis of neuroimaging data.
\newblock {\em Biometrics}, pages 627--638.

\bibitem[Athey et~al., 2019]{athey2019generalized}
Athey, S., Tibshirani, J., Wager, S., et~al. (2019).
\newblock Generalized random forests.
\newblock {\em The Annals of Statistics}, 47(2):1148--1178.

\bibitem[Banerjee et~al., 2014]{banerjee2014hierarchical}
Banerjee, S., Carlin, B.~P., and Gelfand, A.~E. (2014).
\newblock {\em Hierarchical modeling and analysis for spatial data}.
\newblock CRC press.

\bibitem[Berrett and Calder, 2012]{berrett2012data}
Berrett, C. and Calder, C.~A. (2012).
\newblock Data augmentation strategies for the {B}ayesian spatial probit
  regression model.
\newblock {\em Computational Statistics \& Data Analysis}, 56(3):478--490.

\bibitem[Bevilacqua et~al., 2022]{bevilacqua2022unifying}
Bevilacqua, M., Caama{\~n}o-Carrillo, C., and Porcu, E. (2022).
\newblock Unifying compactly supported and mat{\'e}rn covariance functions in
  spatial statistics.
\newblock {\em Journal of Multivariate Analysis}, 189:104949.

\bibitem[Biau, 2012]{biau2012analysis}
Biau, G. (2012).
\newblock Analysis of a random forests model.
\newblock {\em Journal of Machine Learning Research}, 13(Apr):1063--1095.

\bibitem[Biau and Devroye, 2010]{biau2010layered}
Biau, G. and Devroye, L. (2010).
\newblock On the layered nearest neighbour estimate, the bagged nearest
  neighbour estimate and the random forest method in regression and
  classification.
\newblock {\em Journal of Multivariate Analysis}, 101(10):2499--2518.

\bibitem[Biau et~al., 2008]{biau2008consistency}
Biau, G., Devroye, L., and Lugosi, G. (2008).
\newblock Consistency of random forests and other averaging classifiers.
\newblock {\em Journal of Machine Learning Research}, 9(9).

\bibitem[Bradley, 2005]{bradley2005basic}
Bradley, R.~C. (2005).
\newblock Basic properties of strong mixing conditions. {A} survey and some
  open questions.
\newblock {\em arXiv preprint math/0511078}.

\bibitem[Breiman, 2001]{breiman2001random}
Breiman, L. (2001).
\newblock Random forests.
\newblock {\em Machine learning}, 45(1):5--32.

\bibitem[Breiman et~al., 1984]{breiman1984classification}
Breiman, L., Friedman, J., Stone, C.~J., and Olshen, R.~A. (1984).
\newblock {\em Classification and regression trees}.
\newblock CRC press.

\bibitem[Cao et~al., 2022]{cao2022}
Cao, J., Durante, D., and Genton, M.~G. (2022).
\newblock Scalable computation of predictive probabilities in probit models
  with {G}aussian process priors.
\newblock {\em Journal of Computational and Graphical Statistics}, 0(0):1--12.

\bibitem[Chen et~al., 2020]{chen2020deepkriging}
Chen, W., Li, Y., Reich, B.~J., and Sun, Y. (2020).
\newblock Deepkriging: {S}patially dependent deep neural networks for spatial
  prediction.
\newblock {\em arXiv preprint arXiv:2007.11972}.

\bibitem[Chen et~al., 2024]{chen2024deepkriging}
Chen, W., Li, Y., Reich, B.~J., and Sun, Y. (2024).
\newblock Deepkriging: Spatially dependent deep neural networks for spatial
  prediction.
\newblock {\em Statistica Sinica}, 34:291--311.

\bibitem[Chen and Chen, 2022]{chen2022bayesian}
Chen, Z. and Chen, J. (2022).
\newblock Bayesian analysis of partially linear, single-index, spatial
  autoregressive models.
\newblock {\em Computational Statistics}, 37(1):327--353.

\bibitem[Chipman et~al., 2010]{chipman2010bart}
Chipman, H.~A., George, E.~I., and McCulloch, R.~E. (2010).
\newblock Bart: Bayesian additive regression trees.

\bibitem[Chipman et~al., 2012]{chipman2012bart}
Chipman, H.~A., George, E.~I., and McCulloch, R.~E. (2012).
\newblock Bart: Bayesian additive regression trees.
\newblock {\em Annals of Applied Statistics}, 6(1):266--298.

\bibitem[Cressie, 2021]{cressie2021few}
Cressie, N. (2021).
\newblock A few statistical principles for data science.
\newblock {\em Australian \& New Zealand Journal of Statistics},
  63(1):182--200.

\bibitem[Cressie and Wikle, 2015]{cressie2015statistics}
Cressie, N. and Wikle, C.~K. (2015).
\newblock {\em Statistics for spatio-temporal data}.
\newblock John Wiley \& Sons.

\bibitem[Datta et~al., 2016]{datta2016nearest}
Datta, A., Banerjee, S., Finley, A.~O., and Gelfand, A.~E. (2016).
\newblock On nearest-neighbor {G}aussian process models for massive spatial
  data.
\newblock {\em Wiley Interdisciplinary Reviews: Computational Statistics},
  8(5):162--171.

\bibitem[De~Oliveira, 2000]{de2000bayesian}
De~Oliveira, V. (2000).
\newblock Bayesian prediction of clipped {G}aussian random fields.
\newblock {\em Computational Statistics \& Data Analysis}, 34(3):299--314.

\bibitem[De~Oliveira, 2020]{de2020models}
De~Oliveira, V. (2020).
\newblock Models for geostatistical binary data: Properties and connections.
\newblock {\em The American Statistician}, 74(1):72--79.

\bibitem[Denil et~al., 2014]{denil2014narrowing}
Denil, M., Matheson, D., and De~Freitas, N. (2014).
\newblock Narrowing the gap: {R}andom forests in theory and in practice.
\newblock In {\em International conference on machine learning}, pages
  665--673.

\bibitem[Dey et~al., 2022]{dey2022graphical}
Dey, D., Datta, A., and Banerjee, S. (2022).
\newblock Graphical gaussian process models for highly multivariate spatial
  data.
\newblock {\em Biometrika}, 109(4):993--1014.

\bibitem[Diggle and Hutchinson, 1989]{diggle1989spline}
Diggle, P.~J. and Hutchinson, M.~F. (1989).
\newblock On spline smoothing with autocorrelated errors.
\newblock {\em Australian Journal of Statistics}, 31(1):166--182.

\bibitem[Diggle et~al., 1998]{diggle1998model}
Diggle, P.~J., Tawn, J.~A., and Moyeed, R.~A. (1998).
\newblock Model-based geostatistics.
\newblock {\em Journal of the Royal Statistical Society: Series C (Applied
  Statistics)}, 47(3):299--350.

\bibitem[Du et~al., 2009]{du2009fixed}
Du, J., Zhang, H., Mandrekar, V., et~al. (2009).
\newblock Fixed-domain asymptotic properties of tapered maximum likelihood
  estimators.
\newblock {\em the Annals of Statistics}, 37(6A):3330--3361.

\bibitem[Dubrule, 2017]{dubrule2017indicator}
Dubrule, O. (2017).
\newblock Indicator variogram models: Do we have much choice?
\newblock {\em Mathematical Geosciences}, 49(4):441--465.

\bibitem[Dumelle et~al., 2023]{dumelle2023spmodel}
Dumelle, M., Higham, M., and Ver~Hoef, J.~M. (2023).
\newblock spmodel: Spatial statistical modeling and prediction in r.
\newblock {\em Plos one}, 18(3):e0282524.

\bibitem[Fayad et~al., 2016]{fayad2016regional}
Fayad, I., Baghdadi, N., Bailly, J.-S., Barbier, N., Gond, V., H{\'e}rault, B.,
  El~Hajj, M., Fabre, F., and Perrin, J. (2016).
\newblock {Regional scale rain-forest height mapping using regression-kriging
  of spaceborne and airborne LiDAR data: Application on French Guiana}.
\newblock {\em Remote Sensing}, 8(3):240.

\bibitem[Finley et~al., 2022]{spnngppaper}
Finley, A.~O., Datta, A., and Banerjee, S. (2022).
\newblock {spNNGP: R package for Nearest Neighbor Gaussian Process models}.
\newblock {\em Journal of Statistical Software}.

\bibitem[Finley et~al., 2019]{finley2019efficient}
Finley, A.~O., Datta, A., Cook, B.~D., Morton, D.~C., Andersen, H.~E., and
  Banerjee, S. (2019).
\newblock Efficient algorithms for {B}ayesian nearest neighbor gaussian
  processes.
\newblock {\em Journal of Computational and Graphical Statistics},
  28(2):401--414.

\bibitem[Fox et~al., 2020]{fox2020comparing}
Fox, E.~W., Ver~Hoef, J.~M., and Olsen, A.~R. (2020).
\newblock Comparing spatial regression to random forests for large
  environmental data sets.
\newblock {\em PloS one}, 15(3):e0229509.

\bibitem[Furrer et~al., 2006]{furrer2006covariance}
Furrer, R., Genton, M.~G., and Nychka, D. (2006).
\newblock Covariance tapering for interpolation of large spatial datasets.
\newblock {\em Journal of Computational and Graphical Statistics},
  15(3):502--523.

\bibitem[Gelfand et~al., 2000]{gelfand2000modeling}
Gelfand, A., Ravishanker, N., and Ecker, M. (2000).
\newblock {\em Generalized Linear Models: A Bayesian Perspective}, chapter
  Modeling and inference for point-referenced binary spatial data, page
  373–386.
\newblock Marcel Dekker, Inc., New York, NY.

\bibitem[Gemperli and Vounatsou, 2003]{gemperli2003fitting}
Gemperli, A. and Vounatsou, P. (2003).
\newblock Fitting generalized linear mixed models for point-referenced spatial
  data.
\newblock {\em Journal of Modern Applied Statistical Methods}, 2(2):23.

\bibitem[Gilbert et~al., 2024]{gilbert2024consistency}
Gilbert, B., Ogburn, E.~L., and Datta, A. (2024).
\newblock Consistency of common spatial estimators under spatial confounding.
\newblock {\em Biometrika}, page asae070.

\bibitem[Gneiting, 2002]{gneiting2002compactly}
Gneiting, T. (2002).
\newblock Compactly supported correlation functions.
\newblock {\em Journal of Multivariate Analysis}, 83(2):493--508.

\bibitem[Goehry, 2020]{goehry2020random}
Goehry, B. (2020).
\newblock Random forests for time-dependent processes.
\newblock {\em ESAIM: Probability and Statistics}, 24:801--826.

\bibitem[Gotway and Stroup, 1997]{gotway1997generalized}
Gotway, C.~A. and Stroup, W.~W. (1997).
\newblock A generalized linear model approach to spatial data analysis and
  prediction.
\newblock {\em Journal of Agricultural, Biological, and Environmental
  Statistics}, pages 157--178.

\bibitem[Grenander, 1981]{grenander1981abstract}
Grenander, U. (1981).
\newblock Abstract inference.
\newblock Technical report.

\bibitem[Gy{\"o}rfi et~al., 2006]{gyorfi2006distribution}
Gy{\"o}rfi, L., Kohler, M., Krzyzak, A., and Walk, H. (2006).
\newblock {\em A distribution-free theory of nonparametric regression}.
\newblock Springer Science \& Business Media.

\bibitem[Hartikainen and S{\"a}rkk{\"a}, 2010]{hartikainen2010kalman}
Hartikainen, J. and S{\"a}rkk{\"a}, S. (2010).
\newblock Kalman filtering and smoothing solutions to temporal {G}aussian
  process regression models.
\newblock In {\em 2010 IEEE international workshop on machine learning for
  signal processing}, pages 379--384. IEEE.

\bibitem[Havelka, 2022]{havelka2022honesty}
Havelka, M. (2022).
\newblock Honesty in causal forests, is it worth it?

\bibitem[Hengl et~al., 2015]{hengl2015mapping}
Hengl, T., Heuvelink, G.~B., Kempen, B., Leenaars, J.~G., Walsh, M.~G.,
  Shepherd, K.~D., Sila, A., MacMillan, R.~A., Mendes~de Jesus, J., Tamene, L.,
  et~al. (2015).
\newblock Mapping soil properties of {A}frica at 250 m resolution: {R}andom
  forests significantly improve current predictions.
\newblock {\em PloS one}, 10(6):e0125814.

\bibitem[Hengl et~al., 2018]{hengl2018random}
Hengl, T., Nussbaum, M., Wright, M.~N., Heuvelink, G.~B., and Gr{\"a}ler, B.
  (2018).
\newblock Random forest as a generic framework for predictive modeling of
  spatial and spatio-temporal variables.
\newblock {\em PeerJ}, 6:e5518.

\bibitem[Holmes and Mallick, 2003]{holmes2003generalized}
Holmes, C. and Mallick, B. (2003).
\newblock Generalized nonlinear modeling with multivariate free-knot regression
  splines.
\newblock {\em Journal of the American Statistical Association},
  98(462):352--368.

\bibitem[Horn et~al., 1994]{horn1994topics}
Horn, R.~A., Horn, R.~A., and Johnson, C.~R. (1994).
\newblock {\em Topics in matrix analysis}.
\newblock Cambridge university press.

\bibitem[Ihara, 1993]{ihara1993information}
Ihara, S. (1993).
\newblock {\em Information theory for continuous systems}, volume~2.
\newblock World Scientific.

\bibitem[Journel, 1983]{journel1983nonparametric}
Journel, A.~G. (1983).
\newblock Nonparametric estimation of spatial distributions.
\newblock {\em Journal of the International Association for Mathematical
  Geology}, 15:445--468.

\bibitem[Kalinowski et~al., 2024]{keras3}
Kalinowski, T., Allaire, J., and Chollet, F. (2024).
\newblock {\em keras3: R Interface to 'Keras'}.
\newblock R package version 1.2.0.

\bibitem[Karhunen, 1947]{karhunen1947under}
Karhunen, K. (1947).
\newblock Under linear methods in probability theory.
\newblock {\em Annales Academiae Scientiarun Fennicae Series A1: Mathematia
  Physica}, 47.

\bibitem[Kaufman et~al., 2008]{kaufman2008covariance}
Kaufman, C.~G., Schervish, M.~J., and Nychka, D.~W. (2008).
\newblock Covariance tapering for likelihood-based estimation in large spatial
  data sets.
\newblock {\em Journal of the American Statistical Association},
  103(484):1545--1555.

\bibitem[Kimeldorf and Wahba, 1971]{kimeldorf1971some}
Kimeldorf, G. and Wahba, G. (1971).
\newblock Some results on {Tchebycheffian} spline functions.
\newblock {\em Journal of mathematical analysis and applications},
  33(1):82--95.

\bibitem[Kosambi, 1943]{kosambi1943statistics}
Kosambi, D.~D. (1943).
\newblock Statistics in function space.
\newblock {\em Journal of the Indian Mathematical Society}, 7:76--88.

\bibitem[Leisch et~al., 2024]{bindata}
Leisch, F., Weingessel, A., and Hornik, K. (2024).
\newblock {\em bindata: Generation of Artificial Binary Data}.
\newblock R package version 0.9-21.

\bibitem[Liang and Zeger, 1986]{liang1986longitudinal}
Liang, K.-Y. and Zeger, S.~L. (1986).
\newblock Longitudinal data analysis using generalized linear models.
\newblock {\em Biometrika}, 73(1):13--22.

\bibitem[Liaw and Wiener, 2002]{randomForest}
Liaw, A. and Wiener, M. (2002).
\newblock Classification and regression by {randomForest}.
\newblock {\em R News}, 2(3):18--22.

\bibitem[Lin and Clayton, 2005]{lin2005analysis}
Lin, P.-S. and Clayton, M.~K. (2005).
\newblock Analysis of binary spatial data by quasi-likelihood estimating
  equations.

\bibitem[Lin and Jeon, 2006]{lin2006random}
Lin, Y. and Jeon, Y. (2006).
\newblock Random forests and adaptive nearest neighbors.
\newblock {\em Journal of the American Statistical Association},
  101(474):578--590.

\bibitem[Loeve, 1948]{loeve1948fonctions}
Loeve, M. (1948).
\newblock Fonctions aleatoires du second ordre, supplement to p.
\newblock {\em L{\'e}vy, Processus Stochastiques et Mouvement Brownien,
  Gauthier-Villars, Paris}.

\bibitem[Matheron, 1981]{matheron1981splines}
Matheron, G. (1981).
\newblock Splines and kriging, syracuse university.
\newblock {\em Geol. Contrib}, 81(8):77--95.

\bibitem[McCullagh and Nelder, 2019]{mccullagh2019generalized}
McCullagh, P. and Nelder, J.~A. (2019).
\newblock {\em Generalized linear models}.
\newblock Routledge.

\bibitem[Mokkadem, 1988]{mokkadem1988mixing}
Mokkadem, A. (1988).
\newblock Mixing properties of {ARMA} processes.
\newblock {\em Stochastic processes and their applications}, 29(2):309--315.

\bibitem[Nandy et~al., 2017]{nandy2017additive}
Nandy, S., Lim, C.~Y., and Maiti, T. (2017).
\newblock Additive model building for spatial regression.
\newblock {\em Journal of the Royal Statistical Society Series B: Statistical
  Methodology}, 79(3):779--800.

\bibitem[Nelsen, 2006]{nelsen2006introduction}
Nelsen, R.~B. (2006).
\newblock An introduction to copulas.

\bibitem[Nobel and Dembo, 1993]{nobel1993note}
Nobel, A. and Dembo, A. (1993).
\newblock A note on uniform laws of averages for dependent processes.
\newblock {\em Statistics \& Probability Letters}, 17(3):169--172.

\bibitem[Nobel et~al., 1996]{nobel1996histogram}
Nobel, A. et~al. (1996).
\newblock Histogram regression estimation using data-dependent partitions.
\newblock {\em The Annals of Statistics}, 24(3):1084--1105.

\bibitem[Nychka, 2000]{nychka2000spatial}
Nychka, D.~W. (2000).
\newblock Spatial-process estimates as smoothers.
\newblock {\em Smoothing and regression: approaches, computation, and
  application}, 329:393.

\bibitem[Paciorek and Schervish, 2006]{paciorek2006spatial}
Paciorek, C.~J. and Schervish, M.~J. (2006).
\newblock Spatial modelling using a new class of nonstationary covariance
  functions.
\newblock {\em Environmetrics: The official journal of the International
  Environmetrics Society}, 17(5):483--506.

\bibitem[Pekel et~al., 2016]{pekel2016high}
Pekel, J.-F., Cottam, A., Gorelick, N., and Belward, A.~S. (2016).
\newblock High-resolution mapping of global surface water and its long-term
  changes.
\newblock {\em Nature}, 540(7633):418--422.

\bibitem[Rasmussen, 2003]{rasmussen2003gaussian}
Rasmussen, C.~E. (2003).
\newblock Gaussian processes in machine learning.
\newblock In {\em Summer School on Machine Learning}, pages 63--71. Springer.

\bibitem[Robinson, 1988]{robinson1988root}
Robinson, P.~M. (1988).
\newblock Root-n-consistent semiparametric regression.
\newblock {\em Econometrica: Journal of the Econometric Society}, pages
  931--954.

\bibitem[Saha et~al., 2021]{RandomForestsGLS}
Saha, A., Basu, S., and Datta, A. (2021).
\newblock {\em {RandomForestsGLS}: Random Forests for Dependent Data}.
\newblock R package version 0.1.3.

\bibitem[Saha et~al., 2022a]{Saha2022}
Saha, A., Basu, S., and Datta, A. (2022a).
\newblock {RandomForestsGLS: An R package for Random Forests for dependent
  data}.
\newblock {\em Journal of Open Source Software}, 7(71):3780.

\bibitem[Saha et~al., 2023]{saha2023random}
Saha, A., Basu, S., and Datta, A. (2023).
\newblock Random forests for spatially dependent data.
\newblock {\em Journal of the American Statistical Association},
  118(541):665--683.

\bibitem[Saha and Datta, 2022]{brisc}
Saha, A. and Datta, A. (2022).
\newblock {\em {BRISC}: Fast Inference for Large Spatial Datasets using
  {BRISC}}.
\newblock R package version 1.0.5.

\bibitem[Saha et~al., 2022b]{saha2022scalable}
Saha, A., Datta, A., and Banerjee, S. (2022b).
\newblock Scalable predictions for spatial probit linear mixed models using
  nearest neighbor {G}aussian processes.
\newblock {\em Journal of Data Science}, 20(4).

\bibitem[Salkauskas, 1982]{salkauskas1982some}
Salkauskas, K. (1982).
\newblock Some relationships between surface splines and kriging.
\newblock In {\em Multivariate Approximation Theory II: Proceedings of the
  Conference held at the Mathematical Research Institute at Oberwolfach, Black
  Forest, February 8--12, 1982}, pages 313--325. Springer.

\bibitem[Schmidt et~al., 2011]{schmidt2011considering}
Schmidt, A.~M., Guttorp, P., and O'Hagan, A. (2011).
\newblock Considering covariates in the covariance structure of spatial
  processes.
\newblock {\em Environmetrics}, 22(4):487--500.

\bibitem[Scornet et~al., 2015]{scornet2015consistency}
Scornet, E., Biau, G., and Vert, J.-P. (2015).
\newblock Consistency of random forests.
\newblock {\em The Annals of Statistics}, 43(4):1716--1741.

\bibitem[Scott, 2012]{scott2012multivariate}
Scott, D.~W. (2012).
\newblock Multivariate density estimation and visualization.
\newblock {\em Handbook of computational statistics: Concepts and methods},
  pages 549--569.

\bibitem[Sigrist, 2022]{sigrist2022gaussian}
Sigrist, F. (2022).
\newblock Gaussian process boosting.
\newblock {\em Journal of Machine Learning Research}, 23(232):1--46.

\bibitem[Smith et~al., 1998]{smith1998additive}
Smith, M., Wong, C.-M., and Kohn, R. (1998).
\newblock Additive nonparametric regression with autocorrelated errors.
\newblock {\em Journal of the Royal Statistical Society Series B: Statistical
  Methodology}, 60(2):311--331.

\bibitem[Solow, 1986]{solow1986mapping}
Solow, A.~R. (1986).
\newblock Mapping by simple indicator kriging.
\newblock {\em Mathematical geology}, 18:335--352.

\bibitem[Sparapani et~al., 2021]{BART}
Sparapani, R., Spanbauer, C., and McCulloch, R. (2021).
\newblock Nonparametric machine learning and efficient computation with
  {B}ayesian additive regression trees: The {BART} {R} package.
\newblock {\em Journal of Statistical Software}, 97(1):1--66.

\bibitem[Stein, 1999]{stein2012interpolation}
Stein, M.~L. (1999).
\newblock {\em Interpolation of Spatial Data: Some Theory for Kriging}.
\newblock Springer Science \& Business Media.

\bibitem[Stein et~al., 2002]{stein2002screening}
Stein, M.~L. et~al. (2002).
\newblock The screening effect in kriging.
\newblock {\em The Annals of Statistics}, 30(1):298--323.

\bibitem[Taylor and Einbeck, 2013]{taylor2013challenging}
Taylor, J. and Einbeck, J. (2013).
\newblock Challenging the curse of dimensionality in multivariate local linear
  regression.
\newblock {\em Computational Statistics}, 28(3):955--976.

\bibitem[Tibshirani et~al., 2024]{grf}
Tibshirani, J., Athey, S., Sverdrup, E., and Wager, S. (2024).
\newblock {\em grf: Generalized Random Forests}.
\newblock R package version 2.4.0.

\bibitem[Uria et~al., 2016]{uria2016neural}
Uria, B., C{\^o}t{\'e}, M.-A., Gregor, K., Murray, I., and Larochelle, H.
  (2016).
\newblock Neural autoregressive distribution estimation.
\newblock {\em Journal of Machine Learning Research}, 17(205):1--37.

\bibitem[Van Der~Vaart and Van~Zanten, 2011]{van2011information}
Van Der~Vaart, A. and Van~Zanten, H. (2011).
\newblock Information rates of nonparametric gaussian process methods.
\newblock {\em Journal of Machine Learning Research}, 12(6).

\bibitem[van~der Vaart et~al., 2008]{van2008rates}
van~der Vaart, A.~W., van Zanten, J.~H., et~al. (2008).
\newblock Rates of contraction of posterior distributions based on {G}aussian
  process priors.
\newblock {\em The Annals of Statistics}, 36(3):1435--1463.

\bibitem[Wager and Athey, 2018]{wager2018estimation}
Wager, S. and Athey, S. (2018).
\newblock Estimation and inference of heterogeneous treatment effects using
  random forests.
\newblock {\em Journal of the American Statistical Association},
  113(523):1228--1242.

\bibitem[Wang et~al., 2019]{wang2019nearest}
Wang, H., Guan, Y., and Reich, B. (2019).
\newblock Nearest-neighbor neural networks for geostatistics.
\newblock In {\em 2019 international conference on data mining workshops
  (ICDMW)}, pages 196--205. IEEE.

\bibitem[Watson, 1984]{watson1984smoothing}
Watson, G.~S. (1984).
\newblock Smoothing and interpolation by kriging and with splines.
\newblock {\em Journal of the International Association for Mathematical
  Geology}, 16:601--615.

\bibitem[Wikle et~al., 2023]{wikle2023illustration}
Wikle, C.~K., Datta, A., Hari, B.~V., Boone, E.~L., Sahoo, I., Kavila, I.,
  Castruccio, S., Simmons, S.~J., Burr, W.~S., and Chang, W. (2023).
\newblock An illustration of model agnostic explainability methods applied to
  environmental data.
\newblock {\em Environmetrics}, 34(1):e2772.

\bibitem[Wood et~al., 2008]{wood2008locally}
Wood, S.~A., Kohn, R., Cottet, R., Jiang, W., and Tanner, M. (2008).
\newblock Locally adaptive nonparametric binary regression.
\newblock {\em Journal of Computational and Graphical Statistics},
  17(2):352--372.

\bibitem[Wood, 2003]{wood2003thin}
Wood, S.~N. (2003).
\newblock Thin plate regression splines.
\newblock {\em Journal of the Royal Statistical Society Series B: Statistical
  Methodology}, 65(1):95--114.

\bibitem[Wu et~al., 2022]{wu2022uncertainty}
Wu, S., Hannig, J., and Lee, T.~C. (2022).
\newblock Uncertainty quantification for honest regression trees.
\newblock {\em Computational Statistics \& Data Analysis}, 167:107377.

\bibitem[Yang et~al., 2015]{yang2015semiparametric}
Yang, Y., Cheng, G., and Dunson, D.~B. (2015).
\newblock Semiparametric bernstein-von mises theorem: Second order studies.
\newblock {\em arXiv preprint arXiv:1503.04493}.

\bibitem[Zhan and Datta, 2024]{zhan2024neural}
Zhan, W. and Datta, A. (2024).
\newblock Neural networks for geospatial data.
\newblock {\em Journal of the American Statistical Association},
  (just-accepted):1--21.

\bibitem[Zhang, 2002]{zhang2002}
Zhang, H. (2002).
\newblock On estimation and prediction for spatial generalized linear mixed
  models.
\newblock {\em Biometrics}, 58(1):129--136.

\end{thebibliography}

\newpage

\renewcommand\thesection{S\arabic{section}}
\renewcommand\theequation{S\arabic{equation}}
\renewcommand\thefigure{S\arabic{figure}}
\renewcommand\thetable{S\arabic{table}}
\renewcommand\thelemma{S\arabic{lemma}}
\renewcommand\thetheorem{S\arabic{theorem}}
\renewcommand\thecorollary{S\arabic{corollary}}
\renewcommand\theproposition{S\arabic{proposition}}

\setcounter{section}{0}
\setcounter{equation}{0}
\setcounter{figure}{0}
\setcounter{table}{0}
\setcounter{lemma}{0}
\setcounter{assume}{0}
\setcounter{theorem}{0}
\setcounter{corollary}{0}
\setcounter{proposition}{0}

\begin{center}
\Large{Supplementary Materials for ``Random forests for binary geospatial data"}
\end{center}

\blue{\section{Literature review on non-linear methods for binary spatial data}\label{sec:litrev}

The existing literature on approaches for non-linear regression of binary spatial data can be broadly classified into model-based approaches and non-parametric/machine-learning approaches. We review relevant literature in both paradigms below.} 

\blue{\subsection{Model-based approaches}\label{sec:litmodel} 

\noindent\textbf{Basis functions:} We introduced the generalized linear mixed effects models for binary spatial data in (\ref{eqn:hglm}) and discussed its many benefits and one major limitation -- the assumed linearity of the covariate effect. Extensions to non-linear covariate effects for binary spatial data have often relied on basis functions in the covariate $X_i$. 
Examples include non-linear mixed effects models like the one we consider in Section \ref{sec:mixed_model} where the covariate effect $m$ is modeled using a basis function expansion in  $X_i$ 
\citep{diggle1989spline,smith1998additive}.

In the mixed effects model setup, where $m(X)$ is a functional component and $w(s)$ is a spatial random effect, it is important to discuss historical connections between functional and spatial modeling. Connections between basis functions. splines and Gaussian processes (GP) are long established. See, for example, works of \cite{kosambi1943statistics,karhunen1947under,loeve1948fonctions,kimeldorf1971some,matheron1981splines,salkauskas1982some,watson1984smoothing,nychka2000spatial}. Thus non-linear models for $m$ and/or $w$ can use any of these methods, and implement it either in a frequentist approach via regularized optimization or a Bayesian approach (by assigning priors on the coefficients or using a GP). In fact, from the representor theorem point estimators from Bayesian functional models (like GP regression) often coincide with those from frequentist methods (like RKHS regression). 

\cite{holmes2003generalized} developed a free-knot spline regression for multivariate outcome, which can deal with spatial binary outcome as a special case. They adopted a Bayesian basis function approach, assigning priors to the basis coefficients, thereby effectively modeling the covariate effect as a random effect. The covariate effect $m$ can also be modeled as a GP on the covariate domain, as done in the Bayesian model of \cite{schmidt2011considering}. \cite{wood2008locally} proposed a locally adaptive Bayesian estimation for binary regression by modelling the binary regression as a mixture of probit regression and imposing a thin plate spline prior on the each regression. In their spatial application, the covariate was just the spatial coordinates (i.e., two-dimensional) so this can be thought of a model only with a spatial effect and no covariate effect. 

While splines or basis functions work great in 1 or 2-dimensional domains and are thus often used to model the spatial effects, basis functions of the actual $D$-dimensional covariates can suffer from curse of dimensionality even for $D$ as small as $3$-$4$ \cite{taylor2013challenging}. This is because these approaches are reliant on defining local neighborhoods in the covariate domain and neighborhoods in higher dimensions are either almost empty or are non-local \citep{scott2012multivariate}.

Few basis functions approaches circumvent the curse of dimensionality by using lower-dimensional domains. 
\cite{chen2022bayesian} developed a Bayesian solution to spatial autoregressive models using free-knot splines. For the covariate effect, they use single-index models. They thus use basis-functions on a one-dimensional linear transformation of the covariate but in the process restrict the function class. Also, they did not consider binary data.\\

\noindent\textbf{Generalized additive models (GAMs):} GAMs are another example of dimension-reduced basis functions and have also been extended for spatial correlated data \cite{wood2003thin,nandy2017additive}.  
GAMs alleviate the curse of dimensionality by using additive univariate basis functions in each covariate. The spatial effect is added via a spline term as in the \texttt{mgcv} package \citep{wood2003thin}, or using Gaussian processes for the spatial random effect \citep{nandy2017additive}. The latter approach has only been done for continuous data and not binary data. However, the class of functions that can be modeled by GAMs is restrictive, excluding any function with interactions among covariates. This results in biased estimation when strong interaction is present \citep[see, for example, illustrations in][]{zhan2024neural}.\\

\noindent\textbf{Software:} Publicly-available code is lacking for many of the aforementioned model-based approaches for non-linear regression in binary spatial data. However, there are a few possible ways to estimate parameters in the basis-function based non-linear mixed effects model \eqref{eqn:hgnlm} using publicly available software. One can use a basis function expansion in $X$ to represent the unknown function $m(X)$. Additionally the spatial effect can also be represented using basis functions or splines. The model then reduces to a generalized linear model (GLM) where the actual covariates are replaced by basis functions of the covariates and the space,  and can be implemented using existing software for GLM. For example, the \texttt{mgcv} package \citep{wood2003thin} can be used to estimate $m$ and $w$. 

Another option would be to model the covariate effect using basis functions, and thus expressing it as a linear model in the basis, while modeling the spatial effect $w$ as a GP. The \texttt{spmodel} package \cite{dumelle2023spmodel} gives us a convenient way to implement such spatial generalized linear models for binary data.\\

\noindent\textbf{Bayesian and frequentist estimation and `fixed effect',`random effect' terminology:} In the mixed effects model, we use the terminology 
`covariate effect' 
to denote the effect $m(X)$ of the observed covariates $X$, and call $w$ the 
`spatial effect' as it is often modeling the effect of unmeasured spatial covariates. To the extent possible, we have 
avoided the term 
`fixed effect'  
as this terminology can be confusing, and is closely related to the interchangeability of modeling the effect of covariates or space through mean or covariance, as discussed in \cite{cressie2021few}. For example, in any fully Bayesian implementation of the mixed effects model, priors will be assigned to the basis function coefficients and thus modeling 
the covariate effect as a random effect, which when marginalized becomes part of the covariance.
Conversely, when modeling both the covariate effect and spatial effects using basis functions/splines and using a frequentist implementation (as in the \texttt{mgcv} package), both effects are modeled as fixed functions and are part of the mean. 
Our definitions of covariate effect and spatial effect are not tied to a specific analysis model or estimation method used (Bayesian or frequentist) but rather specifies a hypothesised data generation mechanism for the observed data as outlined in Section \ref{sec:mm}.}

\blue{\subsection{Machine learning approaches}\label{sec:ml}

 Machine learning approaches like regression trees and random forests  \citep{breiman1984classification,breiman2001random}, neural networks, and Bayesian additive regression trees \citep[BART][]{chipman2012bart} are now widely used for non-parametric and non-linear regression and classification. These approaches do not impose restrictive functional assumptions (many machine learning function classes are universal approximators) while having some in-built dimension reduction to counter curse of dimensionality (using compositions in neural networks, random covariate selection in random forests and regression trees). Given a set of covariates $X$ and a binary response $Y$, a machine learning method can be used to estimate the mean function $\mathbb E(Y|X)=P(Y=1\given X)$. For spatial/time-series binary data, the mean function estimate provides a direct insight into the non-linear relationship between $X$ and $Y$ and is central to accurate classification of the binary outcome in new data where the spatial/time-series information is either absent or irrelevant (e.g., predictions far away from the training data locations, or long-term time-series forecasting). 
Using a generalized mixed effects model (linear or non-linear) to estimate the mean function $\mathbb E(Y\given X)$  
can be inaccurate if any component of the mixed effects model is misspecified (such as the link, the linearity or any other restrictive assumption on the covariate effect function class, and the assumed distribution of the spatial random effect). Here, one would argue that it makes more sense to use the machine learning approaches for more non-parametric estimation of the mean.
 
However, when the data are spatially or temporally correlated machine learning regression methods do not generally explicitly encode spatial correlation as in mixed effects models. Some methods like that of  \cite{uria2016neural} consider autoregressive structure in the binary time-series data, but focuses on density estimation and not regression, i.e., they do not consider any covariates.   
Particularly, in the context of irregular spatial data which is the main focus of the paper, there are relatively few machine learning methods tailored to the features of such data. The following are two main genre of approaches to account for spatial correlation in machine learning algorithms.

\noindent \textbf{Residual kriging:} This method completely ignores the correlation in $Y$ while estimating $\mathbb E \left( Y| X\right)$, followed by kriging on the residuals for spatially informed predictions \citep{fayad2016regional,fox2020comparing}. Completely ignoring correlations during mean function estimation  has been shown to severely impact the performance of machine learning methods \citep{saha2023random,zhan2024neural}. Furthermore, the idea of residual kriging works for spatial predictions on continuous outcomes because of the additivity of the outcome in terms of the mean function and the spatial error. Such additivity does not hold for binary data.  

\noindent\textbf{Added-spatial-features: }
Another approach is to add spatial covariates like latitude-longitude, pairwise distances between locations or basis functions \citep{hengl2018random,wang2019nearest,chen2024deepkriging} \blue{as additional features (covariates) in the machine learning algorithm. `Added-spatial-features' can be used in any machine learning method.   
However, these approaches only offer an estimate of $\mathbb E(Y \given X,s)$ which} works for \blue{spatial} predictions.  
\blue{These methods do not} estimate the mean function $\mathbb E \left( Y| X\right)$. Also, the high dimensionality of the augmented feature space leads to curse of dimensionality issues as the many added spatial covariates can often drown out the few true covariates $X$ leading to poor performance when the covariate effect is dominant. This has been illustrated in  \cite{saha2023random} and \cite{zhan2024neural}.\\

\noindent \textbf{Challenges of using machine learning within generalized mixed-effect models: } 
A third alternative to residual kriging and added spatial features approaches would be to directly embed a machine learning method within the mixed effects model to model the covariate effect $m$, while keeping everything else same and thus enjoying all benefits of the mixed effects model framework. While any machine learning method can conceptually be used with the negative log-likelihood of the mixed effects model as the loss function, the challenge would be to deal with the high-dimensional vector of spatial random effects in an optimization based approach.  
In a Bayesian setup, one possibility to accomplish this would be to model $m$ using BART while modeling $w$ via GP. We did not find any implementation of such a `BART+GP' method that that allows an additive GP random effect to a BART-modeled covariate effect. As both BART and GP are computationally expensive, it is also challenging to run such a model in off-the-shelf Bayesian software like Stan.  

Alternative ways of using BART for binary spatially-dependent dat would be to use the locations (e.g., latitude-longitude) or basis functions of locations as additional covariates in BART. This {\em added-spatial-features BART}, will inherit the same limitations as their neural network and random forest analogues as discussed above.

\section{Modeling non-stationarity}\label{sec:ns}

Much of the spatial literature on Gaussian processes focuses on isotropic or stationary models with no unmeasured covariates and stationarity. Our methodology is also primarily illustrated using stationary covariance functions. Additionally, we assume that all relevant covariates are measured whereas in practice there may be some missing covariates, which is closely related to the issue of non-stationarity. We now discuss 
various ways to account for missing covariates and non-stationarity.\\ 

    \noindent\textbf{Non-spatial missing covariate: } Consider the scenario where we are missing one or more covariates $X_{miss}$ on which the outcome depends. If $X_{miss}$ are not functions of space, and are also independent of the observed covariates $X_{obs}$, then letting $X=(X_{obs},X_{miss})$ we can write $\mathbb E(Y \given X_{obs}) = p_{obs}(X_{obs}) = \int p(X) dX_{miss}$. So,  
    using RF-GLS with the observed $X_{obs}$ will yield a good estimate of this quantity of $p_{obs}(X)$, the marginal mean function after integrating out the unobserved covariates. \\
    
    \noindent\textbf{Missing confounders:} If $X_{miss}$ also influences some of the observed $X_{obs}$, then it is a confounder as it impacts both $X$ and $Y$. Then there will be bias in the mean function estimate. This will be true of any non-parameteric regression method under the presence of unmeasured confounding. If there is an unmeasured covariate that is not a measureable function of observed quantities but is correlated with them, then there is no validity for most methods (unless making very strong parametric assumptions), and one can primarily just do sensitivity analysis. \\

    \noindent\textbf{Spatial missing covariate and mean non-stationarity: } Under the mixed effects model setup, if the missing covariate $X_{miss}=f(s)$ is a fixed function of space, and $m$ is additive, then $m(X)=m_{obs}(X_{obs}) + m_{miss}(X_{miss}) = m_{obs}(X_{obs}) + m_{miss}(f(s))$. Then $$m(X) + w(s) = m_{obs}(X_{obs}) + w^*(s)$$ where $w^*(s) \sim GP(\mu(s),C(s,s'))$ is a GP with a non-stationary mean $\mu(s) = m_{miss}(f(s))$ and stationary covariance $C$. As long as $\mu(s)$ is a relatively smooth function of $s$ (compared to $C$), the GP $w^*(s)$ can be modeled well by a zero-mean stationary GP \citep[see discussion on equivalence of two GPs with zero and non-zero means in Chapter 4 of ][]{stein2012interpolation}. Also, Matérn kernel is a universal kernel and can estimate any fixed smooth function of space arbitrarily closely. In fact, non-parametric function estimation properties of Matern are well established \cite{van2008rates,van2011information,yang2015semiparametric}. \\
    
    In our simulations in Section \ref{sec:sim}, we have also shown that the methodology is robust to unmeasured covariates that are spatially structured fixed functions. We consider the case where the spatial error component comes from a fixed function of the locations. Simulation shows that our proposed method performs comparably or better (in low signal-to-noise ratio) to tree-based methods with location as a covariate, in spite of the later having a clear advantage in this simulation setup. For more details, please refer to simulation section \ref{sim:mean_nonstationary}. \\

    \noindent\textbf{Added spatial features + stationary covariance:} For the more general case where the additive assumption on the covariate effect does not hold, we have the decomposition, 
    $$
    \begin{aligned}
    m(X) + w(s) = m_{obs} \left(X_{obs} \right) + m \left(X_{obs}, X_{miss} \right) - m_{obs} \left(X_{obs} \right)  + w \left(s\right).
    \end{aligned}$$

    If $X_{miss}$ depends only on $X_{obs}$ and $s$, then we can write $w^*(X_{obs},s) = m \left(X_{obs}, X_{miss} \right) - m_{obs} \left(X_{obs} \right)  + w \left(s\right)$. \cite{schmidt2011considering} developed an approach to model such joint functions of covariates and space using a GP directly on the joint domain of $X_{obs}$ and $s$ which can be used to model such data. 

    However, it is a possible to also model such data using RF-GP. As the missing covariate $X_{miss}=f(X_{obs},s)$ a fixed function of space and observed covariates  
    $$m(X) + w(s) = m(X_{obs},f(X_{obs},s)) + w(s) = m^*(X_{obs},s) + w(s).$$ Using this reparametrization, the fixed effect $m^*$  is a function of the observed covariates and space, and there is a stationary random effect $w(s)$. One can then use RF-GP adding some spatial features in the random forest part to model $m^*$ and it is still justified to use a stationary covariance for the spatial random effect. We can use generic features like latitude-longitude, basis functions or features tailored to the specific application. While the interpretation of $m$ as just the effect of the observed covariates $X_{obs}$ no longer holds, we expect the method to give good predictions as it is still correctly specified as sum of the $m^*$ and $w$ terms. We do this in our real data analysis, one covariate is distance to the river and is thus a spatial feature. \\

    \noindent\textbf{Anisotropy: } We note that while we have developed our method using isotropic covariances, in principle it can also be used with anisotrpic or non-stationary covariances. 
    Often in the real world, direction of the data becomes important, where the distance with respect to one coordinate has more effect than the distance with respect to the other coordinate. In these scenarios, anisotropy is a more realistic assumption. We now have added a simple proof-of-concept experiment showing that our proposed approach can be used with anisotropic covariances and can correctly identify scales of anisotropy, 
    improving predictive performance.
    For more details, please refer to Supplementary Materials Section \ref{sim:anisotropy}. We do plan to invetigate the performance of RF-GP with anisotropic models in more details in the future.\\

    \noindent\textbf{Covariance non-stationarity:} 
As mentioned earlier, conceptually, RF-GP works with any choice of covariance family for the GP. So if a form of covariance non-stationarity is reasonable for an applicaiton, like the class of non-stationary covariances of \cite{paciorek2006spatial}, the method can in principle be used with a working covariance matrix derived from that non-stationary GP. However, as non-stationary covariances have many more parameters, the current implementation of RF-GP which estimates these hyper-parameters 
 using cross-validation is not ideal to accommodate non-stationarity covariances.  
 In the future we plan to directly use the marginalized probit likelihood (Section \ref{sec:prediction}) to get MLE for the covariance parameters which should help circumvent this issue.

    We also note that most non-stationary models of covariance we know of, make very parametric assumptions about the form of non-stationarity, which are again prone to misspecification. Also, non-stationary models are considerably more parameter rich, and they can often be poorly estimated with limited data.  
    While we will explore this extension in the future, we believe that for many forms of  
    non-stationarity, RF-GP with a stationary GP and perhaps with added-spatial-features in the RF part can be deployed as they trade off some accuracy for a substantial amount of parsimony and computational efficiency. 
}

\section{Proof outline}
\subsection{Assumptions}\label{sec:assumptions}
We first provide more technical versions of  
Assumptions \ref{as:data_gen} and \ref{as:working_cov}. 
\begin{assume}{\color{black}[Conditions on generating model]\label{as:data_gen} We assume the data is generated from the following model:
\begin{equation}
    \label{eqn:true_model}
        Y \in \left\{ 0, 1\right\}; \:\:\:\: \mathbb{E}(Y| X) = g(m(X));
\end{equation}}

\begin{enumerate}[label=\textbf{1.\alph*}]
\item \label{item:assumption_x} $X = (X_{1}, X_{2}, \cdots, X_{D}) \: X_{d} \overset{i.i.d}{\sim} \mathrm{Unif}[0,1], \:\:\forall d = 1, 2, \cdots, D$.
\item \label{item:assumption_m} $m : \mathbb{R}^D \to \mathbb{R}$ is an additive function, i.e.  
$m(X) = \sum_{d = 1}^D m_d(X_d)$, where each component $m_d$ is continuous on $[0,1]$.
{\color{black}\item \label{item:assumption_h} $g: \mathbb{R} \to (0,1)$ is a \textit{strictly increasing} function.
\item \label{item:assumption_w}   $\{(Y_i,X_{i-l})\}_i$ is a stationary ergodic {\em absolutely regular} ($\beta$-mixing) process for any finite lag $l \in \mathbb Z$, the set of integers. 
\item \label{item:assumption_differentiability}  There exists $M< \infty$, such that $z \mapsto g(z)$ is continuously differentiable on $[-M, M]$, and $|m(\bx)|\leqslant M, \: \forall \bx \in [0,1]^D$.}

\end{enumerate}
\end{assume}

\begin{assume}[Regularity condition on working precision matrix]\label{as:working_cov} 
We make the following regularity and sparsity assumptions on the working precision matrix $\bQ$ in \eqref{eqn:DART}:
\begin{enumerate}[label=\textbf{2.\alph*}]
\item \label{item:assumption_chol} The Cholesky factor of the working precision matrix $\bQ$ has the following sparsity structure: 
	\begin{equation}
	\label{eqn:assumption_chol}
	\bQ^{\frac 12} = \left(\begin{array}{ccccc}
	\bL_{q \times q} & 0 & 0 & \cdots & \cdots\\
	\multicolumn{2}{c}{\rho^\top_{1 \times (q+1)}} & 0 & \cdots & \cdots\\
	0 & \multicolumn{2}{c}{\rho^\top_{1 \times (q+1)}} & 0 & \cdots\\
	\vdots & \multicolumn{3}{c}{\ddots} & \vdots \\
	\cdots & 0 & 0 & \multicolumn{2}{c}{\rho^\top_{1 \times (q+1)}}
	\end{array} \right), 
	\end{equation}
	where ${\rho}  = (\rho_q, \rho_{q-1},\cdots, \rho_{0} )^\top \in \mathbb{R}^{q+1}$ for some fixed $q \in \mathbb N$, and $\bL$ is a fixed lower-triangular $q \times q$ matrix.

\item \label{item:assumption_diagonas_dominance}  The working precision matrix $\bQ$ is diagonally dominant, i.e. $\bQ_{dd} - \sum_{l \neq d} |\bQ_{dl}| > \xi$ for some constant $\xi >0, \forall l  \in \{1,2,\cdots, d \}$. 
\end{enumerate}
\end{assume}

\subsection{Main ideas}\label{sec:outline}
We provide detailed technical proofs of all the results in Section \ref{sec:proofs} of the Supplementary Material. In this section, we briefly highlight the main new ideas used to address theoretical challenges exclusive to the binary data setting. As discussed at the beginning of Section \ref{sec:consistency}, these challenges do not occur in the established theory of RF and RF-GLS for continuous data due to fundamental differences in both the data generation process and the Gini  loss functions used for node splitting in the respective settings.

The inception of our theoretical explorations is the equivalence of the Gini impurity and the variance-difference loss (Theorem \ref{theorem:Classification_equals_CART}). This result, which was also central to the development of the entire method, helps reconcile much of the theory of tree and forest estimators for binary and continuous data types.
The asymptotic limit of the split criterion is a key quantity in studying tree and forest estimators. For continuous data, the limit of this GLS-loss is established in \cite{saha2023random} using the additive representation from a mixed effects model perspective $Y_i \overset{ind}{=} m(X_i) + w_i + \eps_i$. As $X_i$ and $\eps_i$ are i.i.d. processes, $w_i$ is a $\beta$-mixing dependent process, and the three processes are independent of each other, the limit of the GLS-loss can be studied via respective limits of linear or quadratic forms for each of the three processes. 

Such a linear separable representation of mean and dependent error does not hold for binary data generated from a general model in \eqref{eqn:genbin} (or for a generalized mixed effects model of the form (\ref{eqn:hgnlm}). We study the the limit of the GLS-loss for correlated binary outcomes and establish the following result on the limiting theoretical criterion. \\

\begin{lemma}[Limit of the GLS split criterion for binary data]
	\label{lemma:DART-theoretical}
	Under Assumptions \ref{as:data_gen} and \ref{as:working_cov}, for an RF-GP tree for dependent binary data built with fixed data-independent partitions, the empirical split criterion (\ref{eqn:DART}) for splitting a node $\calA$ at $(d,c)$ creating child nodes $\calA^L$ and $\calA^R$, converges almost surely to the following:
	\begin{equation}
	\label{eqn:lim_DART}
	\begin{array}{cc}
	&\Delta_{\bQ}^*((d,c))= \gamma(\bQ) \nu^*(d,c) \mbox{ where } \nu^*(d,c)=  Vol(\mathcal A) \Big[\mathbb{V}(Y | {X} \in \mathcal A) - \\
 &\mathbb{P}({X} \in \mathcal A^R | {X} \in \mathcal A)\mathbb{V}(Y | {X} \in \mathcal A^R)
	- \mathbb{P}({X} \in \mathcal A^L |  {X} \in \mathcal A)\mathbb{V}(Y | {X} \in \mathcal A^L) \Big].
	\end{array}                    
	\end{equation}
where $\gamma(\bQ)$ is some constant not dependent on $n$ and $Vol(\calA)=P(\bX \in \calA)$. 
\end{lemma}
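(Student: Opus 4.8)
The plan is to follow the roadmap of the continuous-outcome argument in \cite{saha2021random}, replacing its use of the additive decomposition $Y_i = m(X_i)+w_i+\eps_i$ by the latent-variable representation \eqref{eqn:new_formulation} together with the joint $\beta$-mixing property of Lemma~\ref{lemma:joint_mixing}. First I would put the split criterion \eqref{eqn:DART} into ``whitened'' form. Writing $\widetilde{\bY}=\bQ^{1/2}\bY$, $\widetilde{\bZ}=\bQ^{1/2}\bZ$ and $\widetilde{\bZ}^{(0)}=\bQ^{1/2}\bZ^{(0)}$, the GLS estimates \eqref{eqn:DART_mean} are exactly the ordinary least squares fits of $\widetilde{\bY}$ onto the whitened design matrices, so \eqref{eqn:DART} telescopes to $\tfrac1n\big(\widetilde{\bY}^\top P_{\widetilde{\bZ}}\widetilde{\bY} - \widetilde{\bY}^\top P_{\widetilde{\bZ}^{(0)}}\widetilde{\bY}\big)$: the common term $\widetilde{\bY}^\top\widetilde{\bY}$ cancels and $\mathrm{col}(\widetilde{\bZ}^{(0)})\subseteq\mathrm{col}(\widetilde{\bZ})$ are nested. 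The role of Assumption~\ref{item:assumption_chol} is that, apart from $q$ boundary rows handled by the $\bL$-block (contributing $O(1)$ to each Gram-matrix entry, hence negligible after division by $n$), the whitened entries are the finite moving averages $\widetilde{Y}_i=\sum_{j=0}^q\rho_j Y_{i-j}$ and $\widetilde{Z}_{i,k}=\sum_{j=0}^q\rho_j\,\mathds 1\{X_{i-j}\in\mathcal A_k\}$. Since the partition is data-independent, each of these is a bounded function of a fixed-length window of the joint process $\{(U_i,X_i,W_i)\}$, and therefore so is every entry of $\tfrac1n\widetilde{\bZ}^\top\widetilde{\bZ}$ and $\tfrac1n\widetilde{\bZ}^\top\widetilde{\bY}$.

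Next I would pass to almost-sure limits of these windowed empirical averages. By Lemma~\ref{lemma:joint_mixing}, $\{(U_i,X_i,W_i)\}$ and the enlarged windows $\{(U_{i-j'},X_{i-j'},X_{i-j},W_{i-j'})\}$ arising in the cross terms are $\beta$-mixing, hence ergodic, so the ergodic theorem applied to each bounded window functional gives convergence to the stationary expectation. The decisive simplifications come from Assumption~\ref{as:data_gen}: because $X$ is i.i.d., the distinct-lag terms in $\mathbb E[\widetilde{Z}_{1,k}\widetilde{Z}_{1,l}]$ reduce to products of marginal node probabilities, so $\tfrac1n\widetilde{\bZ}^\top\widetilde{\bZ}\to\|\brho\|_2^2\,\mathrm{diag}(\bp)+b\,\bp\bp^\top=:\bM$, with $\bp_k=\mathbb P(\bX\in\mathcal A_k)$ and $b=(\sum_{j=0}^q\rho_j)^2-\|\brho\|_2^2$; and because $W\independent{X}$ and $U\independent{\{X,W\}}$, the distinct-lag terms in $\mathbb E[\widetilde{Z}_{1,k}\widetilde{Y}_1]$ reduce to $\mathbb E(Y)$, so $\tfrac1n\widetilde{\bZ}^\top\widetilde{\bY}\to\|\brho\|_2^2\,\mathbf{q}+b\,\mathbb E(Y)\,\bp=:\bv$, with $\mathbf{q}_k=\mathbb E(Y\mathds 1\{\bX\in\mathcal A_k\})$. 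Assumption~\ref{item:assumption_diagonas_dominance} is used here to keep the smallest eigenvalue of $\tfrac1n\widetilde{\bZ}^\top\widetilde{\bZ}$ bounded away from $0$ uniformly in $n$, so that $\tfrac1n\widetilde{\bY}^\top P_{\widetilde{\bZ}}\widetilde{\bY}=\big(\tfrac1n\widetilde{\bZ}^\top\widetilde{\bY}\big)^\top\big(\tfrac1n\widetilde{\bZ}^\top\widetilde{\bZ}\big)^{-1}\big(\tfrac1n\widetilde{\bZ}^\top\widetilde{\bY}\big)\to\bv^\top\bM^{-1}\bv$ almost surely, and likewise with $\widetilde{\bZ}^{(0)}$.

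Then I would evaluate $\bv^\top\bM^{-1}\bv$ in closed form by Sherman--Morrison. Using $\mathrm{diag}(\bp)^{-1}\bp=\bones$ and $\mathrm{diag}(\bp)^{-1}\mathbf{q}=\bmu$, where $\bmu=(\mu_k)_k$ and $\mu_k=\mathbb E(Y\mid\bX\in\mathcal A_k)$, the rank-one correction collapses and one obtains $\bv^\top\bM^{-1}\bv=\|\brho\|_2^2\sum_k\mathbb P(\bX\in\mathcal A_k)\,\mu_k^2+\Psi$, where the remainder $\Psi$ depends on $\bQ$ but, among the node-specific quantities, only on the three totals $\mathbb E(Y)$, $\sum_k\mathbb P(\bX\in\mathcal A_k)$ and $\sum_k\mathbb E(Y\mathds 1\{\bX\in\mathcal A_k\})$. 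Splitting one leaf $\mathcal A$ into $\mathcal A^L,\mathcal A^R$ leaves all three totals unchanged, so $\Psi$ cancels in the telescoped difference and the limit of \eqref{eqn:DART} equals $\|\brho\|_2^2\big(\mathbb P(\bX\in\mathcal A^L)\mu_L^2+\mathbb P(\bX\in\mathcal A^R)\mu_R^2-\mathbb P(\bX\in\mathcal A)\mu_{\mathcal A}^2\big)$. Finally, since $Y^2=Y$ we have $\mathbb V(Y\mid\bX\in B)=\mu_B-\mu_B^2$, and $\mathbb P(\bX\in\mathcal A)\mu_{\mathcal A}=\mathbb P(\bX\in\mathcal A^L)\mu_L+\mathbb P(\bX\in\mathcal A^R)\mu_R$, so this expression is identically $\nu^*(d,c)$ as written in \eqref{eqn:lim_DART}; the claim follows with $\gamma(\bQ)=\|\brho\|_2^2$, which indeed does not depend on $n$.

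I expect the main obstacle to be the bookkeeping underlying the second and third steps: showing that \emph{every} distinct-lag contribution to the two whitened quadratic forms either vanishes, by the independence structure of $X$, $W$ and $U$ in Assumption~\ref{as:data_gen}, or survives only through totals that are split-invariant and hence cancel in the telescoped difference---so that the limit genuinely factors as a $\bQ$-dependent scalar times the population CART criterion, rather than acquiring lag- and node-dependent cross terms. The remaining ingredients are more routine and parallel \cite{saha2021random}: ensuring the ergodic-theorem step delivers enough uniformity for $\big(\tfrac1n\widetilde{\bZ}^\top\widetilde{\bZ}\big)^{-1}$ to converge (this is where Assumption~\ref{item:assumption_diagonas_dominance} enters), and checking that the $O(1)$ boundary rows from the $\bL$-block of $\bQ^{1/2}$ are negligible once divided by $n$.
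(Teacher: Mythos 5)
Your proposal is correct and follows essentially the same route as the paper: exploit the banded Cholesky structure of $\bQ$ to write the quadratic forms as fixed-lag window averages, invoke the joint $\beta$-mixing of $\{(U_i,X_i,W_i)\}$ (Lemma \ref{lemma:joint_mixing}) for almost-sure limits of $\tfrac1n\bZ^\top\bQ\bZ$ and $\tfrac1n\bZ^\top\bQ\bY$, use the independence of $X$, $W$, $U$ to collapse the distinct-lag cross terms, and cancel the split-invariant pieces in the telescoped difference. The only real difference is presentational: where the paper stops at showing these limits coincide with the continuous-outcome case and defers the remaining algebra to \cite{saha2021random}, you carry out the Sherman--Morrison computation explicitly and correctly identify $\gamma(\bQ)=\|\brho\|_2^2$, which checks out against the paper's equation \eqref{eqn:lim_ZtQZ}.
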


The limiting term $\nu^*(d,c)$ on the right-hand side of (\ref{eqn:lim_DART}) is the theoretical split criterion. A central component in establishing consistency of tree and forest estimators is to show the regularity of an estimator built with this theoretical criterion. In particular, if $\nu^*(d,c)$ is constant for all possible splits $(d,c)$ of a node $\calA$, then the estimator must be constant in that node. For the continuous setting, \cite{scornet2015consistency} established this property by proving the result first for the case with only $D=1$ covariate. Then for $D > 1$, they exploit the additive form of $m$ (Assumption \ref{item:assumption_m}) and that $\mathbb{E}(Y|X)=m(X)$ to have
\begin{equation}\label{eq:add}
\int_{x_2,\ldots,x_D} \mathbb{E}(Y|X=\bx) dx_2\ldots dx_D = m_1(x_1) + \mbox{constant}
\end{equation}
thereby reducing the problem to a 1-dimensional setting that has already been addressed.

For binary data, the mean of the outcome is a nonlinear transformation of the covariate effect $m(X)$, given as $\mathbb E(Y | X) =  g\left(m(X)\right)$, 
where $g : \mathbb R \to (0, 1)$ is nonlinear function. Even with the component-wise additivity Assumption \ref{item:assumption_m} for $m(\cdot)$, $g(m)$ is not additive and hence the identity (\ref{eq:add}) does not hold. \blue{In the following result, we develop a novel proof for controlling the variation of $p$ in the leaf nodes that relaxes the additivity assumption. The proof is technical and uses the Taylor expansion and the regularity of the link.}\\ 

\begin{lemma}[Regularity of the non-additive mean function]
\label{lemma:noves_mean_noseparable} If Assumption \ref{as:data_gen} is satisfied and $\Delta_{\bQ}^*((d,c)) \equiv 0$ (in \eqref{eqn:lim_DART}) for all permissible splits $(d,c)$ in a given cell $\mathcal{A}$, then the mean of the binary outcome conditional on covariates, i.e., $\mathbb{E}(Y \given X)$ is constant on  $\mathcal{A}$.\\
\end{lemma}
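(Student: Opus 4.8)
The statement is the binary analog of Scornet et al.'s regularity lemma for the theoretical CART criterion, so the strategy is to follow their two-step reduction (first $D=1$, then additive reduction to $D=1$) but replace the clean identity \eqref{eq:add} by the two-stage linearization flagged in the proof outline. Throughout, write $g := g_{h,\mathbb F_w}$, so that $E(Y\mid X=\bx) = g(m(\bx))$ with $g$ continuously differentiable on $[-M,M]$ by Assumption \ref{item:assumption_differentiability}, and $m(\bx)=\sum_{d=1}^D m_d(x_d)$. Note $\Delta_{\bQ}^*((d,c)) = \gamma(\bQ)\,\nu^*(d,c)$ with $\gamma(\bQ)$ a nonzero constant, so $\Delta_{\bQ}^*\equiv 0$ on $\calA$ is equivalent to $\nu^*(d,c)=0$ for all admissible $(d,c)$, i.e.\ the variance decomposition
\begin{equation*}
\mathbb V(Y\mid \bX\in\calA) = \mathbb P(\bX\in\calA^L\mid\bX\in\calA)\,\mathbb V(Y\mid\bX\in\calA^L) + \mathbb P(\bX\in\calA^R\mid\bX\in\calA)\,\mathbb V(Y\mid\bX\in\calA^R)
\end{equation*}
for every split. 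Since $\mathbb V(Y\mid\calA) = \mathbb P(\bX\in\calA^L\mid\calA)\mathbb V(Y\mid\calA^L) + \mathbb P(\bX\in\calA^R\mid\calA)\mathbb V(Y\mid\calA^R) + \mathbb P(\bX\in\calA^L\mid\calA)\mathbb P(\bX\in\calA^R\mid\calA)\big(E(Y\mid\calA^L)-E(Y\mid\calA^R)\big)^2$ by the standard law-of-total-variance expansion, $\nu^*(d,c)=0$ is equivalent to $E(Y\mid\bX\in\calA^L) = E(Y\mid\bX\in\calA^R)$ for every split of $\calA$ into $\calA^L,\calA^R$. The goal is therefore: assuming the conditional mean $q(\bx):=g(m(\bx))$ has equal averages over the two halves of every axis-parallel bisection of the hyperrectangle $\calA = \prod_d[a_d,b_d]$, conclude $q$ is constant on $\calA$.

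\textbf{Step 1 ($D=1$).} Here $q(x)=g(m_1(x))$ is continuous on $[a_1,b_1]$. The hypothesis says that for every $c\in(a_1,b_1)$, $\frac{1}{c-a_1}\int_{a_1}^c q = \frac{1}{b_1-c}\int_c^{b_1}q$. Writing $G(x)=\int_{a_1}^x q$, this reads $(b_1-c)G(c) = (c-a_1)(G(b_1)-G(c))$, i.e.\ $(b_1-a_1)G(c) = (c-a_1)G(b_1)$, so $G$ is linear, hence $q = G'$ is constant (using continuity of $q$). This is exactly the $D=1$ argument in \cite{scornet2015consistency} and needs no linearization.

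\textbf{Step 2 (reduction for $D>1$).} Fix a coordinate, say $d=1$, and a cutoff $c\in(a_1,b_1)$. Splitting $\calA$ at $(1,c)$ and using $\nu^*(1,c)=0 \Leftrightarrow E(Y\mid\bX\in\calA^L)=E(Y\mid\bX\in\calA^R)$ gives, after dividing by the (common, by uniformity of $X$) conditional densities,
\begin{equation*}
\frac{1}{c-a_1}\int_{a_1}^{c}\!\!\Big(\int_{\prod_{d\ge2}[a_d,b_d]} q(x_1,\dots,x_D)\,dx_2\cdots dx_D\Big)dx_1 = \frac{1}{b_1-c}\int_{c}^{b_1}\!\!\Big(\int_{\prod_{d\ge2}[a_d,b_d]} q\, dx_2\cdots dx_D\Big)dx_1.
\end{equation*}
Define $\bar q_1(x_1) := \int_{\prod_{d\ge2}[a_d,b_d]} g\big(m_1(x_1)+\sum_{d\ge2}m_d(x_d)\big)\,dx_2\cdots dx_D$. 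By Step 1 applied to $\bar q_1$ (which is continuous in $x_1$), $\bar q_1$ is constant on $[a_1,b_1]$; the same holds for each coordinate. It remains to deduce that $q$ itself is constant on $\calA$ from the fact that each such partial integral $\bar q_d$ is constant. \emph{This is where \eqref{eq:add} fails and the linearization enters.} In the continuous case one simply has $\bar q_1(x_1) = m_1(x_1) + \text{const}$ because $q=m$ is additive; here $q = g\circ m$ is not additive. To recover a usable handle, approximate the inner integral over $[a_d,b_d]^{D-1}$ by a Riemann sum over a fine grid $\{x^{(k)}\}$ and then Taylor-expand $g$ about a fixed anchor point $m_1(x_1)+\mu$ where $\mu = \sum_{d\ge2}m_d(\bar x_d^{(k)})$ for the grid cell: $\bar q_1(x_1) \approx \sum_k w_k\big[g(m_1(x_1)+\mu_k) \big] \approx \sum_k w_k\big[ g(m_1(x_1)+c_0) + g'(m_1(x_1)+c_0)(\mu_k - c_0) + o(\cdot)\big]$, where $c_0$ is a common reference value. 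Constancy of $\bar q_1$ in $x_1$, uniformly in the grid refinement, forces — after letting the mesh go to zero so the $o(\cdot)$ terms vanish and using the continuity/differentiability of $g$ from Assumption \ref{item:assumption_differentiability} — that $g(m_1(x_1)+c_0)$ plus a first-order correction proportional to $g'(m_1(x_1)+c_0)$ is independent of $x_1$. Running the same argument simultaneously in all $D$ coordinates and combining the relations (the key point being that the weights $w_k$ and offsets $\mu_k-c_0$ can be chosen with enough freedom as the grid varies) yields that $g'\circ m$ is constant along each coordinate direction, hence $g\circ m = q$ has vanishing partial derivatives a.e.\ on $\calA$; continuity then gives $q$ constant on $\calA$. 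Since $E(Y\mid X=\bx) = q(\bx)$, this is the claim.

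\textbf{Main obstacle.} Step 1 and the reduction scaffolding are routine; the real work is the last part of Step 2 — making the double linearization (Riemann sum, then Taylor expansion of $g$) rigorous and, in particular, verifying that the family of constancy relations obtained by varying the grid and the coordinate pins down $q$ rather than merely some coarse average of it. One must be careful that $g'$ could vanish on parts of $Im(m)$ (flat regions of the link), in which case $q$ is automatically locally constant there and the argument degenerates gracefully; away from such regions $g$ is locally invertible and one can transfer constancy of $q$ back to constancy of $m$ modulo the ambiguity already absorbed into the function $\kappa$ of Theorem \ref{th:main_m}. Controlling the $o(\cdot)$ remainder uniformly as the mesh shrinks — which requires only continuity of $g'$ on the compact set $[-M,M]$ and boundedness of $m$, both granted by Assumption \ref{item:assumption_differentiability} — is the technical heart, but involves no new ideas beyond careful bookkeeping.
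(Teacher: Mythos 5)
Your Step 1 and the reduction of $\nu^*(d,c)\equiv 0$ to ``each partial average $\bar q_d$ is constant'' match the paper, and your instinct to use a Riemann sum plus a first-order expansion of $g$ is exactly the right toolkit. But the last part of Step 2 --- the only part that is actually new relative to \cite{scornet2015consistency} --- is where your argument has a genuine gap, and you concede as much in your ``main obstacle'' paragraph. The problem is that ``all the partial integrals $\bar q_d$ are constant'' does \emph{not} imply ``$q$ is constant'' for a general continuous $q$: take $D=2$ and $q(x_1,x_2)=\tfrac12+\tfrac14\sin(2\pi x_1)\sin(2\pi x_2)$ on $[0,1]^2$; every axis-parallel bisection produces equal left/right averages, yet $q$ is far from constant. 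So your proposed mechanism --- deriving from the constancy relations that ``$g'\circ m$ is constant along each coordinate direction, hence $q$ has vanishing partial derivatives a.e.'' --- cannot work as stated (it would prove a false statement for general $q$, and in any case $m_d$ is only assumed continuous, so $\partial_d q=g'(m)m_d'$ is not even available). The monotone-additive structure $q=g(\sum_d m_d)$ must be deployed pointwise, not merely through the averaged relations.

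The paper closes this gap by contradiction: suppose $m_d(c_d^{(1)})<m_d(c_d^{(2)})$. Because $m$ is additive and $g=g_{h,\mathbb F_w}$ is strictly increasing with $g'$ continuous and positive on the compact set $[-M,M]\supseteq m([0,1]^D)$ (Assumption \ref{item:assumption_differentiability}), the mean value theorem gives a \emph{uniform} lower bound $q(\bx^{(2)})-q(\bx^{(1)})\geq M'\,(m_d(c_d^{(2)})-m_d(c_d^{(1)}))>0$ with $M'=\inf g'>0$, valid for every fixed configuration of the remaining coordinates. Summing this pointwise gap over a Riemann partition of the other coordinates and letting the mesh shrink contradicts $\bar q_d(c_d^{(1)})=\bar q_d(c_d^{(2)})$. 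This also disposes of your worry about flat regions of the link: $g'$ is bounded away from zero on the relevant compact range, so no degeneracy occurs. In short, the scaffolding of your proposal is right, but the decisive step --- the uniform positive derivative bound turning a variation of $m_d$ into a strictly positive gap in the partial average --- is missing, and without it the constancy relations alone do not pin down $q$.
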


After addressing these challenges specific to the binary data setting,
 the rest of the proof for consistency of the mean function (Theorem \ref{th:main_gh}) roughly adapts this established theory of consistency of least square estimators \cite{gyorfi2006distribution}, throughout making careful adjustments for the binary nature of the data. 

{\color{black} Our results in Section \ref{sec:main} only rely on  
a distribution-free moment-based data generating process for the correlated binary outcome. However, the main special case of interest is the generalized mixed effects model of the form \eqref{eqn:hgnlm} for which we present the results in Section \ref{sec:mm}. For these results we need to show that with suitable assumptions on the random effect $w$ in \eqref{eqn:hgnlm}, the data generating model in \eqref{eqn:hgnlm} satisfies Assumption \ref{as:data_gen}. This establishes the consistency of $p(X) = \mathbb{E}(Y|X)$ estimation in mixed effects model framework \eqref{eqn:hgnlm} as a corollary of Theorem \ref{th:main_gh}:\\

\begin{corollary}[Consistency for mean function for mixed effects models]\label{th:main_gh_mixed_corollary}
	For data generated from the mixed effects model \eqref{eqn:hgnlm}, under Assumptions \ref{item:assumption_x} - \ref{item:assumption_h} and Assumptions \ref{as:working_cov} - \ref{as:tn_rate}, RF-GP is $\mathbb{L}_2$-consistent for the mean function, i.e., 
	$
	\lim_{n \to \infty} \mathbb{E} \int \left(\bar{p}_n(X) - p(X) \right)^2 \, dX = 0$, if the spatial random effect $w$ satisfies the following
    \begin{enumerate}[label=\textbf{*1.\alph*}]\setcounter{enumi}{3}
        \item \label{item:assumption_w_new}  $w \independent{X}$, and $\{w_i\}$ is a stationary, ergodic, {\em absolutely regular} ($\beta$-mixing) spatial process.
\item \label{item:assumption_differentiability_new}  $w$ has finite first moment, i.e $\mathbb E |w| < \infty$, and there exists $M< \infty$, such that $z \mapsto \mathbb{E}_W(h(z + W))$ is continuously differentiable on $[-M, M]$, and $|m(\bx)|\leqslant M, \: \forall \bx \in [0,1]^D$.
    \end{enumerate}
\end{corollary}

To prove Corollary \ref{th:main_gh_mixed_corollary}, we show that if $w$ in \eqref{eqn:hgnlm} is a $\beta-$ mixing process, then the resulting outcome $\{ Y_i\}$ is also a $\beta-$ mixing process using the following technical lemma:\\

\begin{lemma}[Latent $\beta$-mixing representation]
\label{lemma:joint_mixing}
If $\{ X_i\}$  satisfies Assumption \ref{item:assumption_x} and $\{ w_i\}$ satisfies Assumption \ref{item:assumption_w_new},
then $\{ (Y_i,X_{i-l})\}_i$ is a stationary ergodic $\beta$-mixing process for any fixed lag $l \in \mathbb Z$. \\
\end{lemma}

Lemma \ref{lemma:joint_mixing} proves that assumptions \ref{item:assumption_w_new} and \ref{item:assumption_differentiability_new} on the spatial process $w$ in the data generating model \eqref{eqn:hgnlm} are sufficient conditions to satisfy Assumptions \ref{item:assumption_w} and \ref{item:assumption_differentiability} respectively, which in turn proves that Corollary \ref{th:main_gh_mixed_corollary} is a direct consequence of Theorem \ref{th:main_gh}. }

To extend the result on the consistency of the mean function $\mathbb{E}(Y|X)$ estimation to the consistency of the covariate effect $m(X)$ (Theorem \ref{th:main_m}) estimation in mixed effects model framework in \eqref{eqn:hgnlm} via link-inversion, a final component is regularity (continuity and boundedness) of the inverse function. Continuity is immediate as the mean function $\mathbb{E}(Y|X)$ is continuous under Assumptions \ref{item:assumption_m}, \ref{item:assumption_h}, and \ref{item:assumption_differentiability_new}. To prove boundedness, we first prove that the GLS estimator (\ref{eqn:DART_mean}) is bounded. \\

\begin{lemma}[Uniform bound of the estimator] \label{lemma:bounded_estimate} \blue{Under Assumption 
\ref{as:working_cov}, } 
the GLS \blue{regression tree} estimator (\ref{eqn:DART_mean}) \blue{for binary outcomes $Y_i$} 
is uniformly bounded in $n$. 
\end{lemma}

This result in turn allows the inverse to be defined for a compact support which combined with continuity ensures uniform bounds for the inverse function and helps prove consistency of the covariate effect (Theorem \ref{th:main_m}) via application of the continuous mapping Theorem. The uniform bound also improves the required scaling (Assumption \ref{as:tn_rate}) compared to that required in the continuous case. 

\section{Time-series GP with autoregressive error}\label{sec:supar}
We also study the property of RF-GP for binary time series with an autoregressive covariance structure for the temporal random effects. Assuming a lag or order $q$, i.e., $\mathrm{AR}(q)$ structure, the hierarchical model can be expressed as
\begin{equation}\label{eq:ar}
\begin{aligned}
Y_i &\overset{ind}{\sim} \mbox{Bernoulli}(h(m(X_i) + w_i)) \\
w_i &= a_1 w_{i-1} + a_2 w_{i-2} + \ldots + a_q w_{i-q} + \eta_i 
\end{aligned}
\end{equation}
Here $\eta_i$ is a realization of a white noise process at time $i$. It can be shown that $w_i$ is a $\beta$-mixing process \citep{mokkadem1988mixing}. When fitting RF-GP, we use a working covariance matrix also having an autoregressive structure (as justified in Section \ref{sec:correlation_estimation}). The lag order for the working covariance matrix need not match the auto-regression lag assumed for the data generation, and for all choices of lag the working covariance matrix satisfies the regularity assumptions \ref{as:working_cov}. Hence, we immediately have the following result:
\begin{corollary}\label{prop:ar_application} Consider a binary time series (\ref{eq:ar}) in a generalized mixed effects model setup with probit or logit link and with temporal random effects distributed as a 
sub-Gaussian stable $\mathrm{AR}(q)$ process. Let $\bQ$ denote a diagonally dominant working precision matrix from a stationary $\mathrm{AR}(q')$ process. 
Then RF-GP using $\bQ$ produces an $\mathbb L_2$ consistent estimate of $\mathbb E(Y | X)$ and $m(X)$. 
\end{corollary}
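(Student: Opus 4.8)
The plan is to obtain Corollary \ref{prop:ar_application} as a direct application of Theorems \ref{th:main_gh} and \ref{th:main_m}, by checking that the autoregressive generating mechanism \eqref{eq:ar} together with the stationary $\mathrm{AR}(q')$ working precision matrix $\bQ$ satisfy Assumptions \ref{as:data_gen}--\ref{as:tn_rate}. Assumption \ref{as:tn_rate} is a condition on the tree-growing algorithm and is imposed directly. Assumptions \ref{item:assumption_x} and \ref{item:assumption_m} on the covariates and on additivity of $m$ are part of the hypotheses of the corollary, and \ref{item:assumption_h} holds because both the probit and the logit links are strictly increasing. The substance is therefore in verifying \ref{item:assumption_w}, \ref{item:assumption_differentiability}, and \ref{as:working_cov}.

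For Assumption \ref{item:assumption_w} I would argue that the temporal random effect $\{w_i\}$ in \eqref{eq:ar} is stationary because the $\mathrm{AR}(q)$ recursion is assumed stable (autoregressive polynomial with roots outside the unit circle), and that it is absolutely regular ($\beta$-mixing) with geometrically decaying coefficients by the classical result of \citet{mokkadem1988mixing} for stable autoregressions; the sub-Gaussianity of the innovation process $\{\eta_i\}$ supplies the regularity of the innovation law needed for that result and also gives $\mathbb{E}|w_i| < \infty$ (indeed moments of all orders). Independence of $\{w_i\}$ and $\{X_i\}$ is built into \eqref{eq:ar}. For Assumption \ref{item:assumption_differentiability}, since each $m_d$ is continuous on $[0,1]$ the additive $m$ is bounded on $[0,1]^D$, so one may take any $M \geq \sup_{\bx \in [0,1]^D}|m(\bx)|$; continuous differentiability of $z \mapsto \mathbb{E}_W(h(z + W))$ on $[-M,M]$ then follows exactly as in the Mat\'ern case of Corollary \ref{prop:spatias_application} --- for the probit link one gets the closed form $z \mapsto \Phi\!\left(z/\sqrt{1+\sigma_w^2}\right)$ with $\sigma_w^2 = \mathrm{Var}(w_i)$ finite, and for the logit link differentiation under the integral sign is justified by boundedness of the logistic density and dominated convergence.

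It remains to verify Assumption \ref{as:working_cov} for the $\mathrm{AR}(q')$ working precision matrix. Diagonal dominance (\ref{item:assumption_diagonas_dominance}) is assumed in the statement. For the Cholesky sparsity structure (\ref{item:assumption_chol}) I would use the innovations/state-space representation of a stationary $\mathrm{AR}(q')$ process: conditioning the $i$-th coordinate on the preceding $q'$ coordinates yields, for $i > q'$, a constant autoregressive coefficient vector and a constant conditional variance, so the lower-triangular Cholesky factor $\bQ^{1/2}$ is banded with bandwidth $q'$ and all rows past the first $q'$ are shifts of a single vector $\brho \in \mathbb{R}^{q'+1}$ --- precisely the form \eqref{eqn:assumption_chol} with $q = q'$ and $\bL$ the $q' \times q'$ block coming from the stationary initialization. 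With Assumptions \ref{as:data_gen}--\ref{as:tn_rate} in place, Theorems \ref{th:main_gh} and \ref{th:main_m} give $\mathbb{L}_2$ consistency of $\bar{p}_n$ for $\mathbb{E}(Y \mid X)$ and of $\kappa \circ \bar{p}_n$ for $m$, which is the claim.

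The main obstacle I anticipate is not a single hard estimate but making the mixing step airtight: \citet{mokkadem1988mixing}-type results for $\beta$-mixing of autoregressions require an absolute-continuity/smoothness condition on the innovation distribution that must be read into the phrase ``sub-Gaussian stable $\mathrm{AR}(q)$'', and one must confirm the $\beta$-mixing coefficients decay fast enough (summability) to feed the uniform law of large numbers underlying Theorems \ref{th:main_gh}--\ref{th:main_m} --- geometric decay suffices comfortably. A secondary technical point is the careful treatment of the first $q'$ boundary rows of $\bQ^{1/2}$, so that the representation genuinely matches \eqref{eqn:assumption_chol} exactly rather than only asymptotically; this is the AR analog of the lattice--NNGP computation already carried out for Corollary \ref{prop:spatias_application}.
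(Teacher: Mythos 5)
Your proposal is correct and follows essentially the same route as the paper, which simply observes that the stable autoregressive random effects are $\beta$-mixing by \cite{mokkadem1988mixing}, that any stationary $\mathrm{AR}(q')$ working precision matrix has the banded, repeated-row Cholesky structure of Assumption \ref{item:assumption_chol}, and then invokes Theorems \ref{th:main_gh} and \ref{th:main_m}; your write-up merely fills in these checks (and the Assumption \ref{item:assumption_differentiability} verification, which is identical to the Mat\'ern case) more explicitly. One small caveat: the closed form $\Phi\left(z/\sqrt{1+\sigma_w^2}\right)$ for the probit case requires the stationary law of $W$ to be Gaussian, which a merely sub-Gaussian $\mathrm{AR}(q)$ process need not satisfy, but your dominated-convergence argument establishes Assumption \ref{item:assumption_differentiability} regardless.
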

\section{Technical Proofs}\label{sec:proofs}
\subsection{Proof of Theorem \ref{theorem:Classification_equals_CART}}
\label{Appendix:Theorem_1}
\begin{proof}
In order to prove this, we begin with the following proposition showing that the Gini node impurity measure is a constant multiplier of the node variance. 
\begin{proposition}
\label{proposition:Gini_equas_variance} 
For given data $(Y_i, X_i) \in \{0, 1\} \times \mathbb R^{D}; i = 1, 2, \cdots, n$, and a node $\mathcal T \subseteq \mathbb R^D$
\begin{equation}
    \label{eqn:Gini_equas_variance}
     \frac{1}{2}I(\mathcal{T}) = \frac{1}{\sum_{i = 1}^n \mathds 1 _{\{X_i \in \mathcal T\}}} \sum_{i = 1}^n \mathds 1 _{\{X_i \in \mathcal T\}} (Y_i - \bar{Y}^{(\mathcal T)})^2,
\end{equation}

where, $ \bar{Y}^{(\mathcal T)} = \frac{1}{\sum_{i = 1}^n \mathds 1 _{\{X_i \in \mathcal T\}}}\sum_{i = 1}^n \mathds 1 _{\{X_i \in \mathcal T\}}Y_i$.
\end{proposition}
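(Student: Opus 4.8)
The plan is to collapse both sides of \eqref{eqn:Gini_equas_variance} to the common expression $p^{(\mathcal{T})}(1 - p^{(\mathcal{T})})$ by a direct computation that exploits the binary nature of the outcomes. Throughout, write $N_{\mathcal{T}} = \sum_{i=1}^n \mathds{1}_{\{X_i \in \mathcal{T}\}}$ for the node size, which we may take to be positive since empty nodes do not arise in the splitting process.

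First I would note that the node mean and the node ``fraction of ones'' coincide. Since $Y_i \in \{0,1\}$, we have $\mathds{1}_{\{Y_i = 1\}\cap\{X_i \in \mathcal{T}\}} = Y_i\,\mathds{1}_{\{X_i \in \mathcal{T}\}}$, so comparing the stated formula for $\bar{Y}^{(\mathcal{T})}$ with \eqref{eqn:Gini_p} gives $\bar{Y}^{(\mathcal{T})} = p^{(\mathcal{T})}$ outright.

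Next I would expand the right-hand side of \eqref{eqn:Gini_equas_variance} using the usual bias-variance identity $\frac{1}{N_{\mathcal{T}}}\sum_{i}\mathds{1}_{\{X_i \in \mathcal{T}\}}(Y_i - \bar{Y}^{(\mathcal{T})})^2 = \frac{1}{N_{\mathcal{T}}}\sum_{i}\mathds{1}_{\{X_i \in \mathcal{T}\}}Y_i^2 - (\bar{Y}^{(\mathcal{T})})^2$. The binary structure enters once more through $Y_i^2 = Y_i$, which collapses the first term to $\bar{Y}^{(\mathcal{T})} = p^{(\mathcal{T})}$; hence the right-hand side equals $p^{(\mathcal{T})} - (p^{(\mathcal{T})})^2 = p^{(\mathcal{T})}(1 - p^{(\mathcal{T})})$. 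On the left-hand side, \eqref{eqn:Gini} gives $\tfrac{1}{2}I(\mathcal{T}) = p^{(\mathcal{T})}(1 - p^{(\mathcal{T})})$ directly, so the two sides agree and the proposition follows.

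There is no real obstacle here: the entire content is the observation $Y_i^2 = Y_i$, which is precisely what forces the node variance to equal $p^{(\mathcal{T})}(1 - p^{(\mathcal{T})})$ and hence to be half the Gini impurity. Theorem \ref{theorem:Classification_equals_CART} then follows by applying \eqref{eqn:Gini_equas_variance} to each of $\mathcal{T}$, $\mathcal{L}$, $\mathcal{R}$ and matching the three weighted terms of the Gini split criterion \eqref{eqn:split_Gini} against those of the least-squares split criterion \eqref{eqn:CART_original}, picking up the overall factor of $2$; I would carry out that bookkeeping after establishing the proposition.
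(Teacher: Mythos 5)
Your proposal is correct and follows essentially the same route as the paper: both identify $\bar{Y}^{(\mathcal{T})} = p^{(\mathcal{T})}$ using the binary nature of $Y_i$ and then reduce the node sum of squares to $p^{(\mathcal{T})}(1-p^{(\mathcal{T})})$ by direct computation, the only cosmetic difference being that you invoke the identity $\frac{1}{N_{\mathcal{T}}}\sum_i \mathds{1}_{\{X_i \in \mathcal{T}\}}(Y_i-\bar{Y}^{(\mathcal{T})})^2 = \frac{1}{N_{\mathcal{T}}}\sum_i \mathds{1}_{\{X_i \in \mathcal{T}\}}Y_i^2 - (\bar{Y}^{(\mathcal{T})})^2$ together with $Y_i^2 = Y_i$, whereas the paper splits the sum over the cases $Y_i=1$ and $Y_i=0$. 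Both computations are valid and yield the identity immediately.
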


\begin{proof}
We begin our proof with the following observation:

$$
\begin{aligned}
\bar{Y}^{(\mathcal T)} &= \frac{1}{\sum_{i = 1}^n \mathds 1 _{\{X_i \in \mathcal T\}}}\sum_{i = 1}^n \mathds 1 _{\{X_i \in \mathcal T\}}Y_i= \frac{\sum_{i = 1}^n \mathds 1 _{\{Y_i = 1\} \cap \{X_i \in \mathcal T \}}}{\sum_{i = 1}^n \mathds 1 _{\{X_i \in \mathcal T\}}}= p^{(\calT)},
\end{aligned}
$$

where $p^{(\mathcal T)}$ is 
\begin{equation*}
    p^{(\mathcal{T})} = \frac{\sum_{i = 1}^n \mathds 1 _{\{Y_i = 1\} \cap \{X_i \in \mathcal T \}}}{\sum_{i = 1}^n \mathds 1 _{\{X_i \in \mathcal T\}}}.
\end{equation*}  and the second inequality follows from the binary nature of $Y$. Substituting $p^{(\calT)}$ for $\bar Y^{(\mathcal T)}$ in \eqref{eqn:Gini_equas_variance} we have,
$$
\begin{aligned}
&\frac{1}{\sum_{i = 1}^n \mathds 1 _{\{X_i \in \mathcal T\}}} \sum_{i = 1}^n \mathds 1 _{\{X_i \in \mathcal T\}} (Y_i - \bar{Y}^{(\mathcal T)})^2\\
=& \frac{1}{\sum_{i = 1}^n \mathds 1 _{\{X_i \in \mathcal T\}}} \sum_{i = 1}^n \mathds 1 _{\{X_i \in \mathcal T\}} (Y_i - p^{(\calT)})^2 \\
=& \frac{1}{\sum_{i = 1}^n \mathds 1 _{\{X_i \in \mathcal T\}}} \left[\sum_{i = 1}^n \mathds 1 _{\{Y_i = 1\} \cap \{X_i \in \mathcal T \}} (1 - p^{(\calT)})^2 + \mathds 1 _{\{Y_i = 0\} \cap \{X_i \in \mathcal T \}} ( - p^{(\calT)})^2\right] \\ 
=& \left(1 - p^{(\calT)}\right)^2 \frac{\sum_{i = 1}^n \mathds 1 _{\{Y_i = 1\} \cap \{X_i \in \mathcal T \}}}{\sum_{i = 1}^n \mathds 1 _{\{X_i \in \mathcal T\}}} +  (- p^{(\calT)})^2 \left( 1 - \frac{\sum_{i = 1}^n \mathds 1 _{\{Y_i = 1\} \cap \{X_i \in \mathcal T \}}}{\sum_{i = 1}^n \mathds 1 _{\{X_i \in \mathcal T\}}} \right)\\
&= p^{(\calT)}\left(1 - p^{(\calT)}\right)^2 + p^{{(T)}^2}\left(1 - p^{(\calT)}\right) \\
&= p^{(\calT)}(1-p^{(\calT)})\\
&= \frac{1}{2}I(\mathcal{T})\\
\end{aligned}
$$

where, the last equality follows from \eqref{eqn:Gini}. This completes the proof of the Proposition \ref{proposition:Gini_equas_variance}.
\end{proof}

Substituting the value of $I(\mathcal T), I(\mathcal L)$ and $I(\mathcal R)$ from Proposition \ref{proposition:Gini_equas_variance} in \eqref{eqn:split_Gini}, we have,

\begin{equation}
    \begin{aligned}
        \Delta_n^{CT}((d,c)) &= \frac{2}{\sum_{i = 1}^n \mathds 1 _{\{X_i \in \mathcal T\}}} \sum_{i = 1}^n \mathds 1 _{\{X_i \in \mathcal T\}} (Y_i - \bar{Y}^{(\mathcal T)})^2 \\ & - \frac{\sum_{i = 1}^n \mathds 1 _{\{X_i \in \mathcal L\}}}{\sum_{i = 1}^n \mathds 1 _{\{X_i \in \mathcal T\}}} \frac{2}{\sum_{i = 1}^n \mathds 1 _{\{X_i \in \mathcal L\}}} \sum_{i = 1}^n \mathds 1 _{\{X_i \in \mathcal T\}} (Y_i - \bar{Y}^{(\mathcal L)})^2 \\ &- \frac{\sum_{i = 1}^n \mathds 1 _{\{X_i \in \mathcal R\}}}{\sum_{i = 1}^n \mathds 1 _{\{X_i \in \mathcal T\}}}\frac{2}{\sum_{i = 1}^n \mathds 1 _{\{X_i \in \mathcal R\}}} \sum_{i = 1}^n \mathds 1 _{\{X_i \in \mathcal R\}} (Y_i - \bar{Y}^{(\mathcal R)})^2 \\ &= \frac{2}{\sum_{i = 1}^n \mathds 1 _{\{X_i \in \mathcal T\}}} \bigg[ \sum_{i = 1}^n \mathds 1 _{\{X_i \in \mathcal T\}} (Y_i - \bar{Y}^{(\mathcal T)})^2 - \sum_{i = 1}^n \mathds 1 _{\{X_i \in \mathcal L\}} (Y_i - \bar{Y}^{(\mathcal L)})^2 \\ &- \sum_{i = 1}^n \mathds 1 _{\{X_i \in \mathcal R\}} (Y_i - \bar{Y}^{(\mathcal R)})^2 \bigg]\\ & = 2 v_{n}^{RT}((d,c))
    \end{aligned}
\end{equation}

This completes the proof of the Theorem \ref{theorem:Classification_equals_CART}.
\end{proof}

\subsection{Proof of Proposition \ref{lemma:link_inverse_existance}}
\begin{proof}
Part (a): Since, $1 > h > 0 $, and $f(t) \geqslant 0$, we have 
$$
0 < \int 
h(z_0 + t)d\mathbb F_w(t) = g_{h,\mathbb F_w}(z_0) < \int_{\mathbb R} d\mathbb F_w(t) = 1.
$$
Additionally, for any $ z_1 > z_0$ we have 
$$
g_{h,\mathbb F_w}(z_1) - g_{h,\mathbb F_w}(z_0) = \int
\left(h(z_1 + t) - h(z_0 + t)\right)d\mathbb F_w(t) > 0
$$
as $h$ is strictly increasing. 
Hence $g_{h,\mathbb F_w}$ is strictly increasing. Letting $I$ be the support of $m(\cdot)$ and $im(I)$ be its image after applying $m$, there exists an inverse map $g_{h,\mathbb F_w}^+ : im(I) \mapsto I$, such that $g_{h,\mathbb F_w}^+(g_{h,\mathbb F_w}(z_0)) = z_0, \forall z_0 \in I$.

Part (b): For any $\bx \in \mathbb R^D$, we have,
\begin{equation*}
    \begin{aligned}
p(\bx) = E (Y | X = \bx) = \mathbb E_W \left(\Phi(m(\bx) + w)\right) &= \int_{- \infty}^\infty \Phi(m(\bx) + w)d \mathbb F w \\
&= \int_{- \infty}^\infty \int_{-\infty}^{m(\bx) + w} \phi(z) dz d \mathbb F w \\
&= Pr(Z-w\leq m(\bx)) \mbox{ where } Z \sim N(0,1) \independent w\\
&= \Phi\left(\frac{m(\bx)}{\sqrt{1+ \sigma^2}}\right). 
\end{aligned}
\end{equation*}
 
We then have the following expression of $g_{h,\mathbb F_w}^{+}(\cdot)$ for probit link:
\begin{equation*}
    m(\bx)  = g_{h,\mathbb F_w}^{+}(p(\bx)) 
    = \left( 1 + \sigma^2\right)^{\frac{1}{2}} \Phi^{-1} \left( p(\bx) \right).
\end{equation*}
\end{proof}

\subsection{Proofs for results of Section \ref{sec:outline}}\label{sec:proofsoutline}

\begin{proof}[Proof of Lemma \ref{lemma:DART-theoretical}]
We note that the split criterion (\ref{eqn:DART}) is a function of expressions of the form $\bZ^\top \bQ \bY$, $\bZ^\top \bQ \bZ$, $\bY^\top \bQ \bY$. Each entry of all these expressions is a quadratic form in $\bQ$. 

For any two $n \times 1$ vectors $\bu$ and $\bv$, under Assumption \ref{item:assumption_chol}, we can write a quadratic form of $\bQ$ as \begin{align}\label{eq:qf}
\bu^\top\bQ\bv 
&= \gamma \sum_i u_iv_i + \sum_{j \neq j' = 0}^q \rho_j\rho_{j'} \sum_{i} u_{i-j}v_{i-j'}+\sum_{i \in \mathcal{A}}\sum_{i' \in \mathcal{A}'} \tilde\gamma_{i,i'}u_iv_{i'},
\end{align}
where $\gamma=\|\rho\|_2^2$, ${\mathcal{A}}, {\mathcal{A}}' \subset \{1,2,\cdots,n \}$ with $|{\mathcal{A}}|, |{\mathcal{A}}'| \leq 2q$, $\tilde \gamma_{i,i'}$'s are fixed (independent of $n$) functions of $\bL$ and $\rho$ in \eqref{eqn:assumption_chol}, and $u_i$, $v_i$ for $i \leq 0$ are defined to be zero. \pstar

First, we focus on the asymptotic limit of the node representative of the leaf nodes given by (\ref{eqn:DART_mean}). We show that alike the continuous case, in the binary case too, the GLS-loss-based node representative converges to the mean of the response, conditional on the covariate belonging to the specific leaf node. We will show that for a data-independent, a pre-fixed tree with leaf nodes denoted by $\calB_1, \calB_2,  \ldots, \calB_K$, we have 
\begin{equation}
\label{eqn:lim_beta}
    {\hat\beta}_{{l}} \overset{a.s.}{\to} \mathbb{E}(Y | X \in \mathcal{B}_{{l}}) \text{ as $n \to \infty$} \mbox{ for } l=1,\ldots,K.
\end{equation}
where, ${\hat\beta}_{{l}}$ corresponds to the estimate (\ref{eqn:DART_mean}) corresponding to node $\calB_l, l = 1, 2, \cdots, K$. 

As ${\hat\beta} = \left( {Z}^\top\bQ {Z}\right)^{-1} {Z}^\top\bQ {Y}$, we consider the asymptotic limits of $\left(\frac{1}{n}{Z}^\top\bQ {Z}\right)^{-1}$ and $\frac{1}{n}{Z}^\top\bQ {Y}$ separately. 
 We first consider the asymptotic limit of $\frac{1}{n}{Z}^\top\bQ {Y}$. 
Following \ref{eq:qf} we have: \pstar

\begin{equation}\label{eq:cor}
	\begin{aligned}
	\frac{1}{n}({Z}^\top\bQ{Y})_l &= \frac{1}{n}\left[\gamma\sum_i{Z}_{i,l}Y_{i} + \sum_{j \neq j' = 0}^q \rho_j\rho_{j'} \sum_i{Z}_{i-j,l} Y_{i-j'} + O_b(1) \right]\\
	\end{aligned}
	\end{equation}
where $O_b(1)$ denotes a sequence of random variables which are uniformly bounded.

\begin{enumerate}
    \item \textbf{Asymptotic limit of $\frac{1}{n}\sum_i{Z}_{i,l}Y_{i}$ :}    
    We use the Uniform Law of Large Number (ULLN) for $\beta$-mixing process in Theorem 1 from \cite{nobel1993note}, which shows that if a class of functions with uniformly integrable envelop satisfies uniform law of averages with respect to an i.i.d. process, then it also satisfies the uniform law of averages for stationary, $\beta$-mixing stochastic process with the identical marginal distribution. {\color{black} As $(Y_i,X_i)$ is stationary ergodic and $\beta$-mixing, we consider the class of function with a singleton function $(Y_i, X_i) \mapsto {Z}_{i,l}Y_{i}$. Since ${Z}_{i,l}Y_{i} \leqslant {Z}_{i,l} \leq 1$}, which is uniformly integrable, we can apply Theorem 1 from \cite{nobel1993note} and have that strong law holds for ${Z}_{i,l}Y_{i}$, i.e.
    
    \begin{equation}
    \label{eqn:limit_beta_diagonal}
        \lim_{n \to \infty}\frac{1}{n}{Z}_{i,l}Y_{i} \overset{a.s.}{=} \mathbb{E}\left({Z}_{1,l}Y_1 \right) =  \mathbb{E}\left(Y | X \in \mathcal{B}_l\right)Vol(\mathcal{B}_l)
    \end{equation}

    \item \textbf{Asymptotic limit of $\frac{1}{n} \sum_i{Z}_{i-j,l} Y_{i-j'}$ :}  \blue{We can use a very similar ULLN as above using the sequence $\{Y_i,Z_{i-(j-j')}\}_i$ which is also stationary, ergodic and $\beta$-mixing under Assumption \ref{as:data_gen}. We then have}
    
        \begin{equation}
            \label{eqn:limit_beta_offdiagonal}
        \lim_{n \to \infty}\frac{1}{n} \sum_i{Z}_{i-j,l} Y_{i-j'} \overset{a.s.}{=} \mathbb{E}\left({Z}_{2q - j ,l}Y_{2q - j'}  \right) =  \mathbb{E}\left(Y\right)Vol(\mathcal{B}_l)
    \end{equation}
    where, the last equality follows from the independence of ${Z}_{i-j,l}$ and $ Y_{i-j'}$ for $j \neq j'$.
\end{enumerate}

Combining \eqref{eqn:limit_beta_diagonal} and \eqref{eqn:limit_beta_offdiagonal}, we have:

    \begin{equation}
    \label{eqn:lim_ZtQy}
        \lim_{n \to \infty}\frac{1}{n}({Z}^\top\bQ{Y})_l \overset{a.s.}{=} \gamma \mathbb{E}(Y\mathds{I}(X \in \mathcal{B}_l))+Vol(\mathcal{B}_l) \mathbb{E}(Y)\sum_{j \neq j' = 0}^q \rho_j\rho_{j'}.
    \end{equation}

This is the same limit as that for continuous $Y$ in \cite{saha2023random}. For prefixed data-independent partition, $\left( {Z}^\top\bQ {Z}\right)^{-1}$ is not affected by distribution of the outcome $Y$. Hence, this quantity also asymptotes to the same limit as in the continuous case. 

\begin{equation}
\label{eqn:lim_ZtQZ}
    	\begin{aligned}
	\left(\lim_{n \to \infty}\frac{1}{n}{Z}^\top\bQ {Z}\right)^{-1}
	&= \gamma ^{-1} \left[diag({b}^{-1}) -\frac{\left(\sum_{j \neq j' = 0}^q \rho_j\rho_{j'}\right)}{\gamma + \sum_{j \neq j' = 0}^q \rho_j\rho_{j'}}{1}{1}^\top \right].
	\end{aligned}
\end{equation}
where ${b}$ is the diagonal of the volume of the leaf nodes.

Multiplying \eqref{eqn:lim_ZtQZ} and  \eqref{eqn:lim_ZtQy} and simplifying, we have \eqref{eqn:lim_beta}. The GLS cost function corresponding to membership $\bZ$ is $\left({Y} - {Z}{\hat{\beta}}( Z) \right)^\top \bQ\left({Y} - {Z}{\hat{\beta}}( Z) \right)
	= {Y}^\top\bQ{Y} -  {Y}^\top\bQ{Z} {\hat{\beta}}( Z)$. Since $v_{n, Q}^{DART}((d,c))$ in \eqref{eqn:DART} is a difference of cost function corresponding to the membership matrix $\bZ$ before and after the split, the ${Y}^\top\bQ{Y}$ component cancels out and it is enough to focus on asymptotic limit of differences of ${Y}^\top\bQ{Z} {\hat{\beta}}( Z)$. Since we have already shown the asymptotic limits of $\frac{1}{n}({Z}^\top\bQ{Y})_l$ and ${\hat\beta}$ to be identical for both the continuous and binary outcome, following \cite{saha2023random}, we have that the asymptotic limit of $v_{n, Q}^{DART}((d,c))$ in  \eqref{eqn:DART} will coincide for both binary and continuous outcome, i.e. $\lim_{n \to \infty} v_{n, Q}^{DART}((d,c)) \overset{a.s.}{=}	\Delta_{\bQ}^*((d,c)) $.\\
\end{proof}

\begin{proof}[\blue{Proof of Lemma \ref{lemma:noves_mean_noseparable}}]{\color{black}
From Assumption \ref{as:data_gen}, 
we have $p(X_i) := \mathbb E(Y_i| X_i) = g(m(X_i))$. As $m$ is continuous, and image of $g$ lies in $[0,1]$, $p$ }is continuous by the dominated convergence theorem. 

From Lemma \ref{lemma:DART-theoretical} we have 
$$
\begin{aligned}
\Delta_{\bQ}^*((d,c))=& \gamma(\bQ) Vol(\mathcal A) \Big[\mathbb{V}(Y | {X} \in \mathcal A) - \mathbb{P}({X} \in \mathcal A^R | {X} \in \mathcal A)\mathbb{V}(Y | {X} \in \mathcal A^R)\\
	&- \mathbb{P}({X} \in \mathcal A^L |  {X} \in \mathcal A)\mathbb{V}(Y | {X} \in \mathcal A^L) \Big]
\end{aligned}
$$

We begin with the simple case with $D = 1$ covariate. In this scenario, $d$ is always $1$ and we are only optimizing $\Delta_{\bQ}^*((d,c))$ over $c$. So any node $\mathcal A$ will be an interval by construction and without loss of generality we assume $\mathcal A = [\gamma_1, \gamma_2]$. \blue{Writing $\gamma=\gamma(Q)$ we have}

\begin{equation}\label{eq:theoretical_split}
\begin{aligned}
&\Delta^*((1,c))\left(\gamma Vol(\mathcal A)\right)^{-1}\\ &=   \mathbb{V}(Y | {X} \in \mathcal A) - \mathbb{P}({X} \in \mathcal A^R | {X} \in \mathcal A)\mathbb{V}(Y | {X} \in \mathcal A^R) - \mathbb{P}({X} \in \mathcal A^L |  {X} \in \mathcal A)\mathbb{V}(Y | {X} \in \mathcal A^L) \\
	& = \mathbb{E}(Y^2 | {X} \in \mathcal A) - \mathbb{E}(Y | {X} \in \mathcal A)^2 - \mathbb{P}({X} \in \mathcal A^R | {X} \in \mathcal A) \left( \mathbb{E}(Y^2 | {X} \in \mathcal A^R) - \mathbb{E}(Y | {X} \in \mathcal A^R)^2\right)\\
	&- \mathbb{P}({X} \in \mathcal A^L | {X} \in \mathcal A) \left( \mathbb{E}(Y^2 | {X} \in \mathcal A^L) - \mathbb{E}(Y | {X} \in \mathcal A^L)^2\right)\\
	&- \mathbb{P}({X} \in \mathcal A^L | {X} \in \mathcal A) \left( \mathbb{E}(Y | {X} \in \mathcal A^L) - \mathbb{E}(Y | {X} \in \mathcal A^L)^2\right)\\
	&= - \mathbb{E}(Y | {X} \in \mathcal A)^2  + \mathbb{P}({X} \in \mathcal A^R | {X} \in \mathcal A) \mathbb{E}(Y | {X} \in \mathcal A^R)^2 + \mathbb{P}({X} \in \mathcal A^L | {X} \in \mathcal A) \mathbb{E}(Y | {X} \in \mathcal A^L)^2\\
	& = - \frac{1}{\left(\gamma_2 - \gamma_1\right)^2} \left( \int_{\gamma_1}^{\gamma_2} p(t)\partial t\right)^2 + \frac{\left(c - \gamma_1\right)}{\left(\gamma_2 - \gamma_1\right)} \frac{1}{\left(c - \gamma_1\right)^2} \left( \int_{\gamma_1}^{c} p(t)\partial t\right)^2\\
	& \quad + \frac{\left(\gamma_2 - c\right)}{\left(\gamma_2 - \gamma_1\right)} \frac{1}{\left(\gamma_2 - c\right)^2} \left( \int_{c}^{\gamma_2} p(t)\partial t\right)^2.
\end{aligned}
\end{equation}

Following \cite{scornet2015consistency}, we have that

\begin{equation}\label{eq:th_split_2}
\Delta_{\bQ}^*((1,c))\left(\lambda Vol(\mathcal A)\right)^{-1} = \frac{1}{\left(c - \gamma_1 \right)\left(\gamma_2 - c \right)} \left(H(c) - J \frac{c - \gamma_1}{\gamma_2 - \gamma_1} \right)^2
\end{equation}

where, $\lambda:=\gamma(\bQ)$, $H(c) = \int_{\gamma_1}^c  p(t)\partial t$ and $J = \int_{\gamma_1}^{\gamma_2} p(t)\partial t$. 

Since $\lambda, Vol(\mathcal A)$ are constants and $\Delta^*((1,c)) = 0;\:\: \forall c \in [\gamma_1, \gamma_2]$, we have that $H(c) =  J \frac{c - \gamma_1}{\gamma_2 - \gamma_1}$. This proves the linearity of $H(\cdot)$ on $[\gamma_1, \gamma_2]$, which, combined with continuity of $p(t)$ implies that $p(t)$ is constant on $[\gamma_1, \gamma_2]$.

To extend the aforementioned $1$-dimensional results for general multi-dimensional covariate case, the proof deviates from the continuous response setting of \cite{scornet2015consistency}. 
For continuous (Gaussian) $Y$, under assumption \ref{item:assumption_m} of additivity of $m(\cdot)$, i.e. $m(\bx) = \sum_{d=1}^D m_d (x_d)$, due to the linear nature of the integrand {\color{black}($g(\bx) = \bx$) we have $ \mathbb E \left(Y | X\right) = m(X)$.} Thus $\mathbb{E}(Y|X)$ can be expressed as a sum of $m_d(x_d)$ and the term $\sum_{l \neq d} m_l(x_l)$ which is a constant w.r.t $x_d$. This is convenient as we see from (\ref{eq:theoretical_split})  that the theoretical split only depends on the conditional means in the parent and children nodes. Due to additivity, the conditional mean will be integrals of only the component $m_d(x_d)$ plus some constant. To elucidate, for $D > 1$ and choosing, without loss of generality, $d=1$, one can obtain
\begin{equation*}
\psi(t) = \int_{x_2,\ldots,x_D} \mathbb{E}(Y|X=(t,x_2,\ldots,x_D)) dx_2\ldots dx_D = m_1(t) + \mbox{constant}.
\end{equation*}
This reduces $\Delta_{\bQ}^*((1,c))$ to the same form as (\ref{eq:th_split_2}) with the function $\psi(t)$ replacing $p(t)$ and proves that the one dimensional $\phi(t)$ is constant implying that $m_1(t)$ is a constant.

In our scenario for binary data, this reduction of the problem from the multi-dimensional case to the established one-dimensional case is not straightforward due to non-additivity. The link function $g(\cdot)$ is usually non-linear for binary outcomes. Additivity of $m(\cdot)$, i.e. $m(\bx) = \sum_{d=1}^D m_d (x_d)$ in Assumption \ref{item:assumption_m}, does not translate to separability structure of $p(\cdot)$. This poses a major roadblock in further development. 

We present a novel proof, that relaxes the assumption on additivity of $\mathbb E (Y| X)$. 
For a general multivariate case, we consider a node $\mathcal A = \prod_{d = 1}^D [\gamma_1^{(d)}, \gamma_2^{(d)}]$. 
Since $m(\cdot)$ is additive by Assumption \ref{item:assumption_m}, it is enough to prove that $m_d(\cdot)$ is constant on $[\gamma_1^{(d)}, \gamma_2^{(d)}]$ for $d \in \{1,2,\cdots, D \}$. We prove this by contradiction. Let us assume, that $\exists d \in \{1,2,\cdots, D \}$, and corresponding $c^{(1)}_d$ and $c^{(2)}_d$ such that $m_d(c_d^{(1)}) \neq m_d(c_d^{(2)})$. Without loss of generality, $m_d(c_d^{(1)}) < m_d(c_d^{(2)})$. Now, using the same steps as equations (\ref{eq:theoretical_split}) and (\ref{eq:th_split_2}) we have, 

\begin{subequations}
\begin{equation}
\label{eqn:constant}
  \psi_d(x_d)=   \int_{\gamma_1^{(1)}}^{\gamma_2^{(1)}}\ldots \int_{\gamma_{1}^{(d - 1)}}^{\gamma_{2}^{(d - 1)}}\int_{\gamma_{1}^{(d + 1)}}^{\gamma_{2}^{(d + 1)}}\ldots \int_{\gamma_{1}^{(D)}}^{\gamma_{2}^{(D)}} p(\bx) \partial x_1\ldots \partial x_{d-1}\partial x_{d+1}\partial x_{D} = J_d
\end{equation}
where $J_d$ is a constant. 

From the definition of Riemann integrability, we can approximate the aforementioned integral through a Riemann sum as follows: 
We divide each of $[\gamma_1^{(l)}, \gamma_2^{(l)}]; l \neq d$ into equal $K$ parts and consider the Riemann sum corresponding to the obtained $K^{D - 1}$ partitions. For any $\varepsilon > 0$, for $l = 1, 2 $, $\exists N_l \in \mathbb N$, such that for all $K > N_l$ we have,
\begin{equation}
    \label{eqn:Riemann_sum}
    \begin{aligned}
    \bigg| \psi_d(c_d^{(l)}) - \sum_{j^{(1)} = 1}^K\ldots\sum_{j^{(d - 1)} = 1}^K\sum_{j^{(d + 1)} = 1}^K\ldots \sum_{j^{(D)} = 1}^K p(\bx^{(l,d)}_{j^{(1)}\ldots j^{(d - 1)}j^{(d + 1)}\ldots j^{(D)}}) \prod_{j \neq d} (\gamma_2^{(j)} - \gamma_1^{(j)})/n^{D - 1}\bigg| < \eps,
    \end{aligned}
\end{equation}
where 
$\bx^{(l,d)}_{j^{(1)}\ldots j^{(d - 1)}j^{(d + 1)}\ldots j^{(D)}} = $ \\ $ (\alpha_1^{j^{(1)}}, \ldots, \alpha_{d - 1}^{j^{(d - 1)}}, c_d^{(l)}, \alpha_{d+1}^{j^{(d + 1)}}, \ldots, \alpha_D^{j^{(D)}})$, with $\alpha_q^{j^{(q)}} = \gamma_1^{(q)} + (\gamma_2^{(q)} - \gamma_1^{(q)})j^{(q)}/n$, $\forall q = 1,\ldots,d-1, d+1, \ldots, D $ and $\forall j^{(q)} = 1, 2, \cdots, K$.

From \eqref{eqn:constant}, we have that $\psi_d(c_d^{(1)}=\psi_d(c_d^{(2)} = J_d$. 
Hence, from \eqref{eqn:Riemann_sum}, for $K > max(N_1, N_2)$, we have:
\begin{equation}
    \label{eqn:Riemann_sum_difference}
    \begin{aligned}
        2\varepsilon \geqslant \prod_{j \neq d} (\gamma_2^{(j)} - \gamma_1^{(j)})/K^{D - 1} \bigg|&\sum_{j^{(1)} = 1}^K\ldots\sum_{j^{(d - 1)} = 1}^K\sum_{j^{(d + 1)} = 1}^K\ldots \sum_{j^{(D)} = 1}^K p(\bx^{(2,d)}_{j^{(1)}\ldots j^{(d - 1)}j^{(d + 1)}\ldots j^{(D)}})\\  
        &- \sum_{j^{(1)} = 1}^K\ldots\sum_{j^{(d - 1)} = 1}^K\sum_{j^{(d + 1)} = 1}^K\ldots \sum_{j^{(D)} = 1}^K p(\bx^{(1,d)}_{j^{(1)}\ldots j^{(d - 1)}j^{(d + 1)}\ldots j^{(D)}}) \bigg|  
    \end{aligned}
\end{equation}

Note that {\color{black}$p(x)=g(m(x))$ and $g$ is differentiable} (Assumption \ref{item:assumption_differentiability}). Hence, for any choice of  $(j^{(1)},\ldots, j^{(d - 1)},j^{(d + 1)},\ldots, j^{(D)}) \in \{ 1,2,\cdots, K\}^{D-1}$, using Mean Value Theorem along with additivity of $m(\cdot)$, we have,

\begin{equation}
    \label{eqn:MVT}
\begin{aligned}
        &p(\bx^{(2,d)}_{j^{(1)}\ldots j^{(d - 1)}j^{(d + 1)}\ldots j^{(D)}}) - p(\bx^{(1,d)}_{j^{(1)}\ldots j^{(d - 1)}j^{(d + 1)}\ldots j^{(D)}})\\
        &={\color{black}g}(m(\bx^{(2,d)}_{j^{(1)}\ldots j^{(d - 1)}j^{(d + 1)}\ldots j^{(D)}})) - {\color{black}g}(m(\bx^{(1,d)}_{j^{(1)}\ldots j^{(d - 1)}j^{(d + 1)}\ldots j^{(D)}}))\\
    &=  {\color{black}g}'(m^{(0)}_{j^{(1)}\ldots j^{(d - 1)}j^{(d + 1)}\ldots j^{(D)}}, \mathbb F_W) \left(m(\bx^{(2,d)}_{j^{(1)}\ldots j^{(d - 1)}j^{(d + 1)}\ldots j^{(D)}}) - m(\bx^{(1,d)}_{j^{(1)}\ldots j^{(d - 1)}j^{(d + 1)}\ldots j^{(D)}}) \right)\\
    & = {\color{black}g}' ( m^{(0)}_{j^{(1)}\ldots j^{(d - 1)}j^{(d + 1)}\ldots j^{(D)}}) \left(m_d(c_d^{(2)}) - m_d(c_d^{(1)})\right)\\
    & \geqslant M' \left(m_d(c_d^{(2)}) - m_d(c_d^{(1)})\right)
\end{aligned}
\end{equation} 
where, $m^{(0)}_{j^{(1)}\ldots j^{(d - 1)}j^{(d + 1)}\ldots j^{(D)}}$ is a point between $m(\bx^{(1,d)}_{j^{(1)}\ldots j^{(d - 1)}j^{(d + 1)}\ldots j^{(D)}})$ and $m(\bx^{(1,d)}_{j^{(1)}\ldots j^{(d - 1)}j^{(d + 1)}\ldots j^{(D)}})$, $M'$ is an uniform positive lower bound on the derivative ${\color{black}g}'$. Existence of $M'$ is guaranteed from combining the fact that ${\color{black}g}$ is strictly increasing, hence ${\color{black}g}' > 0$, and ${\color{black}g}'$ is continuous (Assumption \ref{item:assumption_differentiability}) on $m([0,1]^D) \subseteq [-M,M]$. 

Since we assumed, $m_d(c_d^{(1)}) < m_d(c_d^{(2)})$, the additivity of $m(\cdot)$ along with the strictly increasing nature of $h$ implies $p(\bx^{(1,d)}_{j^{(1)}\ldots j^{(d - 1)}j^{(d + 1)}\ldots j^{(D)}}) < p(\bx^{(2,d)}_{j^{(1)}\ldots j^{(d - 1)}j^{(d + 1)}\ldots j^{(D)}})$. Hence, using \eqref{eqn:MVT}, \eqref{eqn:Riemann_sum_difference} can be simplified as

\begin{equation}
    \label{eqn:lower_bound}
    \begin{aligned}
        2\varepsilon \geqslant &\prod_{j \neq d} (\gamma_2^{(j)} - \gamma_1^{(j)})/K^{D - 1}  \sum_{j^{(1)} = 1}^K\ldots\sum_{j^{(d - 1)} = 1}^K\sum_{j^{(d + 1)} = 1}^K\ldots \sum_{j^{(D)} = 1}^K \bigg(p(\bx^{(2,d)}_{j^{(1)}\ldots j^{(d - 1)}j^{(d + 1)}\ldots j^{(D)}})\\ 
        &\:\:\:\:\:\:\:\:\:\:\:\:\:\:\:\:\:\:\:\:\:\:\:\:\:\:\:\:\:\:\:\:\:\:\:\:\:\:\:\:\:\:\:\:\:\:\:\:\:\:\:\:\:\:\:\:\:\:\:\:\:\:\:\:\:\:\:\:\:\:\:\:\:\:\:\:\:\:\:\:\:\:\:\:\:\:\:\:\:\:\:\:\:\:\:\:\:\:\:\:\:\:\:\:\:\:\:\:\:\:\:\:- p(\bx^{(1,d)}_{j^{(1)}\ldots j^{(d - 1)}j^{(d + 1)}\ldots j^{(D)}})\bigg)\\
        &\geqslant \prod_{j \neq d} (\gamma_2^{(j)} - \gamma_1^{(j)})/K^{D - 1}  \sum_{j^{(1)} = 1}^K\ldots\sum_{j^{(d - 1)} = 1}^K\sum_{j^{(d + 1)} = 1}^K\ldots \sum_{j^{(D)} = 1}^K M' \left(m_d(c_d^{(2)}) - m_d(c_d^{(1)})\right) \\
        & = M'(m_d(c_d^{(2)}) - m_d(c_d^{(1)}))\prod_{j \neq d} (\gamma_2^{(j)} - \gamma_1^{(j)})
\end{aligned}
\end{equation}

Since we assumed $m_d(c_d^{(1)}) < m_d(c_d^{(2)})$ and \eqref{eqn:lower_bound} holds for arbitrary $\varepsilon > 0$, choosing small enough $\varepsilon$, we have a contradiction. \\
\end{subequations}

\end{proof}

{\color{black}\begin{proof}[Proof of Lemma \ref{lemma:joint_mixing}]  
Let $A_i = (X_i,X_{i-l})$ for any fixed $l \in \mathbb Z$. From Assumption \ref{item:assumption_x}, $\{ X_{i-l}\}$ is an i.i.d. process. Then $A_i \independent A_{i+l+1}$ for all $i$, i.e., $A_i$ is an $m$-dependent process which implies it is also $\beta$-mixing \citep{bradley2005basic}. As 
 Assumption \ref{item:assumption_w_new} dictates that $\{ w_i\}$ is also a $\beta$-mixing ergodic process, independent of  $\{ X_i\}$. 
 $\{w_i,A_{i} \} = \{w_i,X_i,X_{i-l} \}$ is a $\beta$-mixing ergodic process \citep[Section 5,][]{bradley2005basic}. Next, we consider the following latent variable representation common for binary data \citep{albert1993bayesian}
\begin{equation}
    \label{eqn:new_formulation}
    Y_i = 
     \begin{cases}
       1 &\quad U_i \le h(m(X_i) + w_i)\\
       0 &\quad\text{otherwise,} \\ 
     \end{cases}, \qquad U_i \iid U[0,1], \qquad U \independent{\left\{X, w\right\}}.
\end{equation}
By construction in \eqref{eqn:new_formulation}, the i.i.d. process $\{ U_i\}$ is independent of the $\beta$-mixing process $\{w_i,X_i,X_{i-l}\}$.  
Hence, a second application of the aforementioned result proves that $\{U_i, W_i, X_i, X_{i-l}\}$ is a $\beta$-mixing ergodic process. Next we observe that if $\{ H_i\}$ is a $\beta-$ mixing ergodic process, so is $\{ f \left(H_i\right)\}$ as long as $E|f(H_i)|$ exists, since the $\sigma-$algebra generated by $f \left(H_i\right)$ is contained in the $\sigma-$algebra generated by $H_i$. As $Y_i = \mathds 1 \left( U_i \geq h \left(m \left( X_i\right) + w_i \right)\right)$, this implies that $(Y_i,X_{i-l})$ is a function of $\{U_i, W_i, X_i, X_{i-l}\}$ and thus $\left\{ (Y_i,X_{i-l})\right\}$ is a $\beta$-mixing ergodic process.\\ 
\end{proof}}

\begin{proof}[Proof of Lemma \ref{lemma:bounded_estimate}]
Let $\bp_n$ denote the estimate (\ref{eqn:DART_mean}) of the node representatives. This quantity has the same functional form as in the continuous case. Hence from Equation (S14) of \cite{saha2023random}, we have the following \blue{under Assumption \ref{as:working_cov}}:

\begin{equation}\label{eq:bound}
\begin{aligned}
\| \bp_n\|_{\infty} &\leqslant \frac{\left(\gamma + \sum_{j \neq j' = 0}^q |\rho_j\rho_{j'}| +\sum_{i \in {\mathcal{A}}_1}\sum_{i' \in {\mathcal{A}}_2} |\tilde\gamma_{i,i'}| \right)}{\xi}\max_{i} |y_i|,
\end{aligned}
\end{equation}

where, $\xi = \min_{i=1,\ldots,q+1} (\bQ_{ii} - \sum_{j \neq i} |\bQ_{ij}|) > 0$.  
As $|y_i| \leqslant 1$, from \eqref{eq:bound},  we have that $p_n(\cdot)$ is bounded. \pstar
\end{proof}

\subsection{\blue{Detailed outline of} proof of consistency of \blue{the mean function estimate}}
\label{sec:main_consistency}
In this Section, we provide the proofs of the main results, Theorems \ref{th:main_gh} and \ref{th:main_m} and Corollary \ref{prop:spatias_application}. Least square estimators try to minimize the $\mathbb L_2$ risk. Since the true distribution of the data is unknown, the theoretical $\mathbb L_2$ risk is estimated through empirical $\mathbb L_2$ risk. In order to produce consistent estimators, first we choose a class of function $\mathcal F_n$ and then select the function belonging to that class of function, which minimizes the empirical risk. This class is often data-driven and increases as the sample size increases (method of sieves, \cite{grenander1981abstract}). In this estimation process, two kinds of errors contribute to the $\mathbb L_2$ risk:

\begin{enumerate}
    \item \textbf{Approximation error}: This measures how well the best estimator from the selected class of functions ($\mathcal F_n$) approximates the true function $m(\cdot)$.
    \item \textbf{Estimation error}: This measures the distance of the proposed estimator from the best estimator in $\mathcal F_n$. This is usually controlled by controlling the maximum deviation of the empirical $\mathbb L_2$ risk from the theoretical $\mathbb L_2$ risk for $\mathcal F_n$. 
\end{enumerate}

In order to ensure consistency, we need to control both kinds of error and ensure both vanish asymptotically, as sample size increases. This requires a careful choice of $\mathcal F_n$, as there is a tradeoff between these two types of errors. For a large $\mathcal F_n$, the Approximation error will be very small (as there it may very well contain $m(\cdot)$ or a nearby member), but might not uniformly control the Estimation error, as there are too many members in $\mathcal F_n$. Whereas, for a small $\mathcal F_n$, though it is easy to control the Estimation error, the class of function might be too small to control the approximation error.

This framework of consistency analysis of the least square-based approach was first introduced in \cite{nobel1996histogram} and later generalized in  \cite{gyorfi2006distribution}. \cite{scornet2015consistency} adopted the theory developed in \cite{gyorfi2006distribution} to establish consistency results for Breiman's RF for continuous outcomes with i.i.d. errors. In \cite{saha2023random}, the results of \cite{gyorfi2006distribution} were further generalized to accommodate GLS estimators which minimize a quadratic form loss. They also relaxed the i.i.d Assumption for continuous data and established consistency for  absolutely regular ($\beta$-mixing) dependent stochastic processes. In both \cite{scornet2015consistency} and \cite{saha2023random}, the structure of the data generation procedure, the continuous outcome $Y$ was modeled as an additive function of the covariate effect and the error process. This played a pivotal role \blue{in controlling variation of the mean function in the leaf nodes of the tree.}. For binary outcome, due to the inherent dependency of $Y$ on a non-linear function of covariate effect and the dependent error, \blue{the mean of $Y$ is not additive in the components of $X$. In Lemma \ref{lemma:noves_mean_noseparable} we provided a novel proof for controlling leaf node variation that relaxes additivity of the mean.} We highlighted additional technical challenges in Section \ref{sec:outline} and provided their proofs in Section \ref{sec:proofsoutline}. Equipped with these additional pieces, in this section, we reconcile the proof of the consistency of the RF-GP approach for correlated binary outcomes with the proof for continuous outcomes in \cite{saha2023random}. 

\noindent\underline{\textbf{{Approximation error}\label{sec:approx}}}\newline

In order to control the approximation error, we need to:

\begin{enumerate}[label=\textbf{A.\roman*}]
    \item \textbf{Variation of the mean in leaf nodes}\label{item:mean_variation_node}: Show that the variation of the mean function vanishes within the leaf nodes, as the sample size increases. 
    
    \item \textbf{Mean approximation with $\mathcal{F}_n$}\label{item:fn_approximation}: Consider a suitable class of functions $\calF_n$, for which the approximation error can be controlled by controlling the variation of average within the leaf nodes. Since by \ref{item:mean_variation_node}, this variation becomes arbitrarily close to $0$ as sample size increases, the approximation error is controlled asymptotically.
\end{enumerate}

Establishing \ref{item:mean_variation_node} can be broken down into the following steps:

\begin{enumerate}[label=\textbf{A.i.\arabic*}]\addtocounter{enumi}{-1}
    \item \textbf{Theoretical split criterion}:\label{item:theoreticas_split} The very first step of the argument involves defining a theoretical version $v^{*}$ (that does not depend on $n$) of the empirical split criteria (dependent on $n$). Lemma \ref{lemma:DART-theoretical} provides this limit of the empirical split criterion (\ref{eqn:DART}) and shows that it is equivalent to the limit of the split criterion $v_n^{RT}$ in \eqref{eqn:CART} regression trees in Classical RF for continuous data. 
    
    \item \textbf{Variation of the mean function in theoretical tree}: \label{item:theoreticas_tree_m} If a {\em theoretical tree} is built based on the limiting theoretical split criteria (\ref{eqn:lim_DART}), the variation of the true mean function within the leaf nodes needs to vanish as the depth of the tree grows. The central argument for proving this is to show that is the theoretical split criterion is zero for all permissible splits of a cell, then the mean function is constant on the given cell. The theoretical split criterion (\ref{eqn:lim_DART}) for binary data has the same functional form (up to a constant) as that for the continuous case. However, due to the non-linear link $h$, the mean function for binary data is no longer additive in the components of $m$ even if $m$ is additive. Hence, the proof for this result differs from the continuous case and is provided in Lemma \ref{lemma:noves_mean_noseparable}.

    \item \textbf{Stochastic equicontinuity of empirical split criteria}:  \label{item:equicontinuity} For a suitable subset of splits, the empirical split criteria (\ref{eqn:DART}) needs to be stochastically equicontinuous with respect to the splits (collection of covariate choices and cutoff values used to sequentially build the tree). Stochastic equicontinuity postulates that as the sample size increases, the empirical split criteria for two splits can be arbitrarily close with high probability by choosing set of cutoff values for the respective splits to be close enough to each other. The split criteria is primarily a function of the quantities $\hat \beta(\bZ)$, $\bZ^\top\bQ\bY/n$, $\bZ^\top\bQ\bZ/n$, \pstar and $\bY^\top\bQ\bY/n$. 
    The proof of Lemma \ref{lemma:DART-theoretical} shows that (\ref{eqn:DART}) has the first two quantities has the same limit as in the continuous case, the third quantity does not depend on the outcome type and the last quantity is bounded in $n$ as $Y$ is binary. 
    These are sufficient to establish stochastic equicontinuity for the binary case.

    \item \textbf{Asymptotic equality of theoretical and empirical optimal splits}: \label{item:theoreticas_empiricas_cut} We consider the theoretical and empirical optimal splits obtained by optimizing the theoretical ($v^*$) and optimal split criteria ($v_n$). As the sample size increases, at any fixed depth of the tree, the theoretical and empirical optimal splits need to become arbitrarily close with high probability. The proof for this is the same is in the continuous case for i.i.d. data \citep{scornet2015consistency} as it  doesn't involve finite sample structure of the split criterion, but rather depends on its stochastic equicontinuity \ref{item:equicontinuity} of the split criterion. As our finite sample split criterion with binary outcome to have similar regularity, we have \ref{item:theoreticas_empiricas_cut} for RF-GLS with binary outcome. 
\end{enumerate}

Finally, for \ref{item:fn_approximation}, we can build a function class $\calF_n$ and a valid candidate member $f_0^{(n)}$ in  for estimating the mean function $p(\cdot)$  in the same way as in \cite{saha2023random} for continuous data, as the split criteria used are the same as are their limits under the two different data generation procedure (Lemma \ref{lemma:DART-theoretical}). \\

\noindent\underline{\textbf{{Truncation error}\label{sec:trunc}}}\\
\newline
The analysis of the estimator error of RF-GP for binary data benefits from the bounded nature of the response. Unlike the continuous case analysis, which accommodates unbounded errors, an additional level of complexity through truncation. First, both approximation error and estimation error are controlled under truncation  to establish $\mathbb L_2$ consistency. Next, the results are extended to the untruncated scenario by showing that the truncated estimator is asymptotically equal to the untruncated estimator by imposing assumptions on the tail decay of the error process.

In our scenario, due to the binary nature of the outcome, $Y$ is bounded. We show in Lemma \ref{lemma:bounded_estimate} that this leads to a uniformly (in $n$) bounded RF-GP estimate of $\mathbb E (Y|X)$. This simplifies the proof of consistency by mitigating the need for any truncation.   
Consequently, we do not also need any assumption on the tail behavior of the dependent error process and can work with improved scaling for $t_n$, i.e. the maximum number of nodes of a tree. Since the binary outcome is bounded by definition and RF-GP estimates are also bounded (Lemma \ref{lemma:bounded_estimate}), we can take $\beta_n$ in Theorem 3 of \citet{scornet2015consistency} to be bounded, which implies the improved scaling for $t_n$.\\

\noindent\underline{\textbf{{Estimation error}\label{sec:est}}}\\\newline
For tree estimators based on OLS loss, it suffices to control the squared error loss estimation error. When using a split criterion based on GLS loss (\ref{eqn:DART}), one needs to control estimation error corresponding to both squared error loss and cross-product errors arising from minimizing the quadratic forms \blue{in the GLS loss and estimator.} These terms respectively correspond to the diagonal and off-diagonal terms of the working precision matrix $\bQ$ and, under Assumption \ref{item:assumption_chol}, it is equivalent to proving the following \pstar

\begin{enumerate}[label=(\alph*)]
    \item \textbf{Squared error components:}
     \begin{equation}
        \label{eqn:dependent_squared_error_component}
        			 \lim_{n \to \infty}\mathbb{E} \left[ \sup_{f \in \mathcal{F}_n}  |\frac{1}{{n}}\sum_i|f(X_i) - Y_{i}|^2  - \mathbb{E}_{ \dot\calD_n}|f(\dot X_1) - \dot Y_{1}|^2|\right] = 0.
    \end{equation}
    \item \textbf{Cross-product components: } For all $1 \leqslant j \leqslant q$
        \begin{equation}
        \label{eqn:dependent_cross-product_component}
	\begin{aligned}
				 \lim_{n \to \infty}\mathbb{E}\Bigg[& \sup_{f \in \mathcal{F}_n}  \Big|\frac{1}{n}
				\sum_{i}
				(f( X_{i}) - Y_{i}) (f(X_{i-j}) - Y_{i-j})\\
				&- \mathbb{E}_{\dot \calD_n}(f(\dot X_{1+j}) - \dot Y_{1+j})(f(\dot X_{1}) - \dot Y_{{1}})\Big|\Bigg] = 0.
				\end{aligned}
    \end{equation}
\end{enumerate}
where,         {\color{black} $\dot X_i$, and $\dot Y_{i} \in \{ 0, 1 \}; \: \mathbb E ( \dot Y_{i}) =g(m(\dot X_{i}))$} be such that $\dot \calD_n:=\{(\dot X_i, \dot Y_i) | i=1,\ldots,n\}$ be identically distributed as $\calD_n$ and $\dot \calD_n \independent \calD_n$.\\

\noindent\underline{\textbf{{$\mathbb{L}_2$-consistency of RF-GP}}}\\

We present a general technical result on $\mathbb{L}_2$-consistency of a wide class of GLS estimates for binary outcome under $\beta$-mixing (absolutely regular) \blue{binary} processes. We will use this result to prove the consistency of the RF-GP estimate (Theorem \ref{th:main_gh}), but the result is not specific to forest and tree estimators and can also be used for general nonparametric GLS estimators of binary outcomes. 

We will consider an optimal estimator $p_n \in \calF_n$ with respect to quadratic loss:
\begin{equation}\label{eqn:p_n}
		p_n \in 
		 \arg \min_{f \in \mathcal{F}_n} \frac{1}{{n}}( f ( X) - Y)^\top \bQ( f ( X) - Y).
\end{equation}

\begin{theorem}\label{th:gyorfi}
Consider a stationary, binary $\beta$-mixing process from the data generating mechanism in \eqref{eqn:true_model}. Let $p(X)=\mathbb{E}(Y|X)$ and  $p_n(.,\Theta) : \mathbb{R}^D \to \mathbb{R}$ denote a GLS optimizer of the form \eqref{eqn:p_n} with the matrix $\bQ$ satisfying Assumption \ref{as:working_cov} \pstar and $\calF_n$ being a data-dependent function class. If $p_n$ and $\calF_n$ satisfies the following conditions:
	\begin{enumerate}[label=\textbf{C.\roman*}]
	\item \label{item:gyorfi_assumption_fn_bounded}\textbf{Bounded $\mathcal{F}_n$:}
$\mathcal{F}_n$ is uniformly bounded, i.e. there exists $B < \infty$, such that $B \geq \sup_{f \in \calF_n}$.
	    \item \label{item:gyorfi_assumption_approximation_error}\textbf{Approximation error:}
	    $
	    \lim_{n \to \infty} \mathbb E_{\Theta} \left[\inf_{f \in \mathcal{F}_n}  \mathbb{E}_{X} |f(X) -p(X) |^2 \right] =0.
	    $

\item \label{item:gyorfi_assumption_estimation_error}\textbf{Estimation error:} Conditions \eqref{eqn:dependent_squared_error_component} and \eqref{eqn:dependent_cross-product_component} hold. \\

Then we have 
$$
		\begin{aligned}
		\lim_{n \to \infty}\mathbb{E}\left[\mathbb{E}_{X} ( p_n( X, \Theta) - p( X))^2)\right]&=0, \mbox{ and }\\
		\lim_{n \to \infty} \mathbb{E}_{X} ( \bar{p}_n( X) - p( X))^2 &= 0; 
		\end{aligned}
		$$
		where $\bar{p}_n( X) =\mathbb{E}_{\Theta} p_n( X, \Theta)$ and $X$ is a new sample independent of the data.
			\end{enumerate}
\end{theorem}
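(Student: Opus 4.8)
The plan is to run the sieve–estimation argument of \cite{gyorfi2006distribution}, in the GLS form of \cite{saha2021random}, with two binary-specific modifications: the additive decomposition $Y=m(X)+w+\eps$ is replaced throughout by the conditional-mean identity $\mathbb E(Y_i\given X_i)=p(X_i)$, and the truncation layer of the continuous-outcome proof is dropped, since here both the response and (by Lemma \ref{lemma:bounded_estimate}) the estimator are bounded. Write $\calL_n(f)=\tfrac1n(f(\bX)-\bY)^\top\bQ(f(\bX)-\bY)$ for the empirical GLS risk, so $p_n\in\arg\min_{f\in\calF_n}\calL_n(f)$ by \eqref{eqn:p_n}, and let $\bar\calL_n(f)=\tfrac1n\mathbb E_{\dot\calD_n}[(f(\dot{\bX})-\dot{\bY})^\top\bQ(f(\dot{\bX})-\dot{\bY})]$ be its population version under an independent copy $\dot\calD_n$ of the data, where $\bp(\bx)$ denotes componentwise application of $p$. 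The argument decomposes the $\mathbb L_2$ risk of $p_n$ into an approximation part (controlled by C.ii) and an estimation part (controlled by C.iii), glued together by comparing $\bar\calL_n(\cdot)$ with $\mathbb E_X(f-p)^2$.

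\textbf{Step 1: the population GLS risk is the $\mathbb L_2$ risk up to an $f$-free constant.} Expanding $f(\dot X_i)-\dot Y_i=(f(\dot X_i)-p(\dot X_i))+(p(\dot X_i)-\dot Y_i)$ in the quadratic form, every cross term $\bQ_{ik}\mathbb E_{\dot\calD_n}[(f(\dot X_i)-p(\dot X_i))(p(\dot X_k)-\dot Y_k)]$ vanishes: for $i=k$ because $\mathbb E(\dot Y_i\given\dot X_i)=p(\dot X_i)$, and for $i\neq k$ because, under Assumptions \ref{item:assumption_x} and \ref{item:assumption_w} together with the latent representation, $\dot X_i$ is independent of $(\dot X_k,\dot Y_k)$ while $\mathbb E(p(\dot X_k)-\dot Y_k)=0$. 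Hence
\begin{equation*}
\bar\calL_n(f)=\tfrac1n\mathbb E_{\dot\calD_n}\big[(f(\dot{\bX})-\bp(\dot{\bX}))^\top\bQ(f(\dot{\bX})-\bp(\dot{\bX}))\big]+C_n,\qquad C_n:=\tfrac1n\mathbb E\big[(\bp(\bX)-\bY)^\top\bQ(\bp(\bX)-\bY)\big],
\end{equation*}
with $C_n$ independent of $f$ and bounded in $n$ (all of $\bp,\bY$ are bounded and $\bQ$ is banded with fixed entries by Assumption \ref{item:assumption_chol}). Diagonal dominance (Assumption \ref{item:assumption_diagonas_dominance}) yields, via Gershgorin, $\xi\le\lambda_{\min}(\bQ)\le\lambda_{\max}(\bQ)\le\Lambda$ uniformly in $n$; since the $\dot X_i$ are i.i.d.\ with the law of $X$, $\mathbb E_{\dot\calD_n}\|f(\dot{\bX})-\bp(\dot{\bX})\|_2^2=n\,\mathbb E_X(f(X)-p(X))^2$, and therefore
\begin{equation*}
\xi\,\mathbb E_X(f(X)-p(X))^2\le\bar\calL_n(f)-C_n\le\Lambda\,\mathbb E_X(f(X)-p(X))^2\qquad\text{for every }f.
\end{equation*}

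\textbf{Step 2: uniform empirical--population gap.} Using the banded Cholesky form \eqref{eqn:assumption_chol}, $\calL_n(f)-\bar\calL_n(f)$ splits exactly as in \eqref{eq:qf} into $\gamma$ times the squared-error discrepancy of \eqref{eqn:dependent_squared_error_component}, the coefficients $\rho_j\rho_{j'}$ times the lag discrepancies of \eqref{eqn:dependent_cross-product_component}, and an $O_b(1/n)$ boundary term (uniformly bounded since $|Y_i|\le1$ and $\calF_n$ is bounded, condition C.i), so condition C.iii gives $\mathbb E[\sup_{f\in\calF_n}|\calL_n(f)-\bar\calL_n(f)|]\to0$. \textbf{Step 3: assembly.} Choose $f_0^{(n)}\in\calF_n$ with $\mathbb E_X(f_0^{(n)}(X)-p(X))^2\le\inf_{f\in\calF_n}\mathbb E_X(f(X)-p(X))^2+1/n$. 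Chaining the lower bound of Step 1 at $f=p_n$, the minimality $\calL_n(p_n)\le\calL_n(f_0^{(n)})$, the gap bound of Step 2 (used twice, as $p_n,f_0^{(n)}\in\calF_n$), and the upper bound of Step 1 at $f=f_0^{(n)}$, with $C_n$ cancelling throughout, one gets
\begin{equation*}
\mathbb E_X(p_n(X,\Theta)-p(X))^2\le\frac{\Lambda}{\xi}\,\mathbb E_X(f_0^{(n)}(X)-p(X))^2+\frac{2}{\xi}\,\sup_{f\in\calF_n}|\calL_n(f)-\bar\calL_n(f)|.
\end{equation*}
Taking $\mathbb E$ over $(\calD_n,\Theta)$ and $n\to\infty$, the first term vanishes by condition C.ii and the choice of $f_0^{(n)}$, the second by Step 2; this is the first claim. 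For the forest $\bar p_n=\mathbb E_\Theta p_n(\cdot,\Theta)$, Jensen's inequality gives $\mathbb E_X(\bar p_n(X)-p(X))^2\le\mathbb E_\Theta\mathbb E_X(p_n(X,\Theta)-p(X))^2$, so the second limit follows from the first.

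\textbf{Expected main obstacle.} The genuinely new point is Step 1: without the additive error structure of the continuous case one cannot write $f(X)-Y$ as ``$f$ minus the regression function'' plus a fixed mean-zero noise, so the cancellation of every cross term in the population GLS risk must be extracted purely from $\mathbb E(Y_i\given X_i)=p(X_i)$ and the i.i.d.\ covariate structure, and one must separately confirm that the leftover $f$-free term $C_n$ stays bounded so that it can be cancelled rather than bounded. Everything after Step 1 is a routine rerun of the argument in \cite{saha2021random}, with the boundedness of $Y$ and of the estimator eliminating the truncation step and the attendant tail conditions it required.
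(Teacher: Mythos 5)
Your proposal is correct and follows essentially the same route as the paper's proof: the same approximation/estimation-error split, the same cross-term cancellation extracted from $\mathbb E(Y\mid X)=p(X)$ in place of the additive-noise decomposition, the same reduction of the empirical--population gap to conditions \eqref{eqn:dependent_squared_error_component} and \eqref{eqn:dependent_cross-product_component} via the banded Cholesky structure of $\bQ$ (with the $O(1/n)$ boundary terms handled by boundedness of $Y$ and $\calF_n$), and Jensen's inequality for the forest. The only difference is bookkeeping: you sandwich the population GLS risk between $\xi$ and $\Lambda$ times the $\mathbb L_2$ risk and chain an oracle inequality directly on the risks, whereas the paper works with square roots of risks and a triangle-inequality chain through the terms $a, b_1,\dots,b_5$ (with $b_3\le 0$ by minimality of $p_n$ and $b_2,b_4$ as the boundary terms); the two assemblies are equivalent.
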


\begin{proof}[Proof of Theorem \ref{th:gyorfi}]
The statement and proof of this theorem follow that of Theorem 5.1 in \cite{saha2023random} for continuous outcomes, adapted here to a bounded function and outcome. For the sake of completeness, we provide the proof in detail here. 

Under Assumption \ref{item:assumption_diagonas_dominance}, to show $\mathbb L_2$ consistency of $p_n(\cdot)$, \blue{following Eqn. (S13) of \cite{saha2023random},} it is enough to show that 
\begin{equation}
    \label{eqn:gyorfi_quadratic}
    \lim_{n \to \infty}\mathbb{E}\left[\mathbb{E}  \left[\rho^\top\left(p_n(\dot X^{(q+1)}, \Theta) - p(\dot X^{(q+1)})\right)\right]^2\right]=0, 
\end{equation}

where $\dot X^{(q+1)} := \left( \dot X_{(q+1)}, \dot X_{q},\cdot, \dot X_{1}\right)$. 
Here the inner expectation is with respect to the distribution of the new data $\dot D_n$ and the outer expectation is with respect to the original data $\calD_n$.

Focussing on the inner expectation, we have
$$
\begin{aligned}
&\mathbb{E} \left[\rho^\top \left( p_n(\dot X^{(q+1)}) - \dot Y^{(q+1)} \right) \right]^2\\ 
&= \mathbb{E} \left[\rho^\top \left( p_n(\dot X^{(q+1)}) -  p(\dot X^{(q+1)}) \right) + \rho^\top \left( p(\dot X^{(q+1)}) - \dot Y^{(q+1)} \right)  \right]^2\\
& = \mathbb{E} \left[\rho^\top \left( p_n(\dot X^{(q+1)}) -  p(\dot X^{(q+1)}) \right)\right]^2 + \mathbb{E} \left[\rho^\top \left( p(\dot X^{(q+1)}) - \dot Y^{(q+1)} \right) \right]^2\\ 
&+ 2\mathbb{E} \left[\rho^\top \left( p_n(\dot X^{(q+1)}) -  p(\dot X^{(q+1)}) \right)\rho^\top \left( p(\dot X^{(q+1)}) - \dot Y^{(q+1)} \right)  \right]\\
& = \mathbb{E} \left[\rho^\top \left( p_n(\dot X^{(q+1)}) -  p(\dot X^{(q+1)}) \right)\right]^2 + \mathbb{E} \left[\rho^\top \left( p(\dot X^{(q+1)}) - \dot Y^{(q+1)} \right) \right]^2.
\end{aligned}
$$

Hence, we have,
$$
\begin{aligned}
\mathbb{E} \left[\rho^\top \left( p_n(\dot X^{(q+1)}) - p(\dot X^{(q+1)} \right) \right]^2
&= \mathbb{E} \left[\rho^\top \left( p_n(\dot X^{(q+1)}) - \dot Y^{(q+1)} \right) \right]^2 - \mathbb{E} \left[\rho^\top \left(p(\dot X^{(q+1)}) -\dot Y^{(q+1)} \right) \right]^2\\
& =  A^2  + 2\left( \left(\mathbb{E} \left[\rho^\top \left(p(\dot X^{(q+1)}) -\dot Y^{(q+1)} \right) \right]^2\right)^{\frac{1}{2}} \right)A
\end{aligned}
$$
where, 
$$
A := \left(\mathbb{E} \left[\rho^\top \left( p_n(\dot X^{(q+1)}) - \dot Y^{(q+1)} \right) \right]^2 \right)^{\frac{1}{2}} - \left(\mathbb{E} \left[\rho^\top \left(p(\dot X^{(q+1)}) -\dot Y^{(q+1)} \right) \right]^2\right)^{\frac{1}{2}}  
$$
As $\left( \left(\mathbb{E} \left[\rho^\top \left(p(\dot X^{(q+1)}) -\dot Y^{(q+1)} \right) \right]^2\right)^{\frac{1}{2}} \right)$ is non-random (does not depend on $\calD_n$), for the outer expectation it is enough to show $\mathbb E A^2 \to 0$. Now,
$$
\begin{aligned}
\mathbb EA^2 &\leqslant  2\mathbb{E}\Bigg[ \left(\mathbb{E}\left[\rho^\top \left( p_n(\dot X^{(q+1)}) - \dot Y^{(q+1)} \right) \right]^2\right)^{\frac{1}{2}} -  \inf_{f \in \mathcal{F}_n} \left(\mathbb{E} \left[\rho^\top\left(f(\dot X^{(q+1)}) - \dot Y^{(q+1)} \right) \right]^2\right)^{\frac{1}{2}}\Bigg]^2 \\
&+2\mathbb{E}\Bigg[ \inf_{f \in \mathcal{F}_n} \left(\mathbb{E} \left[\rho^\top\left(f(\dot X^{(q+1)}) - \dot Y^{(q+1)} \right) \right]^2\right)^{\frac{1}{2}}
- \left(\mathbb{E} \left[\rho^\top \left(p(\dot X^{(q+1)}) - \dot Y^{(q+1)} \right) \right]^2\right)^{\frac{1}{2}}\Bigg]^2
\end{aligned}
$$

By triangular inequality the second quantity can be bounded above as follows:
$$
\begin{aligned}
&\mathbb{E}\Bigg[ \inf_{f \in \mathcal{F}_n} \left(\mathbb{E} \left[\rho^\top \left(f(\dot X^{(q+1)}) - \dot Y^{(q+1)} \right) \right]^2\right)^{\frac{1}{2}}- \left(\mathbb{E} \left[\rho^\top \left(p(\dot X^{(q+1)}) - \dot Y^{(q+1)} \right) \right]^2\right)^{\frac{1}{2}}\Bigg]^2\\
& \leqslant \mathbb{E} \inf_{f \in \mathcal{F}_n} \left(\mathbb{E} \left[\rho^\top \left(f(\dot X^{(q+1)}) - p(\dot X^{(q+1)}) \right) \right]^2\right) \\
& \leqslant    
\blue{\|\rho\|^2}\left[\inf_{f \in \mathcal{F}_n}  \mathbb{E}_{\dot X_1} |f(\dot X_1) -p(\dot X_1) |^2 \right]
\end{aligned}
$$
By \ref{item:gyorfi_assumption_approximation_error}, the last term is the approximation error with asymptotically goes to $0$. Hence, it is enough to show $\mathbb E A_1^2 \to 0$, where,  
$$
\begin{aligned}
A_1 &:= \left(\mathbb{E}\left[\rho^\top \left( p_n(\dot X^{(q+1)}) - \dot Y^{(q+1)} \right) \right]^2\right)^{\frac{1}{2}} -  \inf_{f \in \mathcal{F}_n} \left(\mathbb{E} \left[\rho^\top\left(f(\dot X^{(q+1)}) - \dot Y^{(q+1)} \right) \right]^2\right)^{\frac{1}{2}}\\
 &\leqslant \sup_{f \in \mathcal{F}_n} \Bigg\{ \left(\mathbb{E} \left[\rho^\top \left( p_n(\dot X^{(q+1)}) - \dot Y^{(q+1)} \right) \right]^2\right)^{\frac{1}{2}} - \left(\frac{1}{{n}} \sum_i \left[\rho^\top ( p_n( X^{(i)})- {Y}^{(i)})\right]^2 \right)^{\frac{1}{2}}\\
&+\left(\frac{1}{{n}} \sum_i \left[\rho^\top ( p_n( X^{(i)})- {Y}^{(i)})\right]^2 \right)^{\frac{1}{2}} - \left( \frac{1}{{n}}\left( p_n( X)- {Y}\right)^\top\bQ\left( p_n( X)- {Y}\right)\right)^{\frac{1}{2}}\\
&+\left( \frac{1}{{n}}\left( p_n( X)- {Y}\right)^\top\bQ\left( p_n( X)- {Y}\right)\right)^{\frac{1}{2}} - \left( \frac{1}{{n}}\left(f( X)- {Y}\right)^\top\bQ\left(f( X)- {Y}\right)\right)^{\frac{1}{2}}\\
&+ \left( \frac{1}{{n}}\left(f( X)- {Y}\right)^\top\bQ\left(f( X)- {Y}\right)\right)^{\frac{1}{2}} -\left(\frac{1}{{n}} \sum_i \left[\rho^\top ( f( X^{(i)})- {Y}^{(i)})\right]^2 \right)^{\frac{1}{2}}\\
&+\left(\frac{1}{{n}} \sum_i \left[\rho^\top ( f( X^{(i)})- {Y}^{(i)})\right]^2 \right)^{\frac{1}{2}}  - \left(\mathbb{E} \left[\rho^\top \left(f(\dot X^{(q+1)}) - \dot Y^{(q+1)} \right) \right]^2\right)^{\frac{1}{2}}\Bigg\}
\end{aligned}
$$

Here $\bX^{(i)} = (X_i, X_{i-1}, \ldots, X_{i-q})^\top $, $\bY^{(i)}$ is similarly defined. Let $b_t$ denote the $t^{th}$ term in above equation for $t = 1,2, \cdots, 5$. The $3^{rd}$ term $b_3$ is negative as $p_n$ by definition (\eqref{eqn:p_n}) is the minimizer of the GLS loss. Also, by triangular inequality
\begin{align*}
A_1 &\geqslant \left(\mathbb{E} \left[\rho^\top \left(p(\dot X^{(q+1)}) - \dot Y^{(q+1)} \right) \right]^2\right)^{\frac{1}{2}} -  \inf_{f \in \mathcal{F}_n} \left(\mathbb{E} \left[\rho^\top \left(f(\dot X^{(q+1)}) - \dot Y^{(q+1)} \right) \right]^2\right)^{\frac{1}{2}}\\
&\geqslant  - \inf_{f \in \mathcal{F}_n} \left(\mathbb{E} \left[\rho^\top \left(f(\dot X^{(q+1)}) - p(\dot X^{(q+1)}) \right) \right]^2\right)^{\frac{1}{2}}
\end{align*}

Denoting the right-hand side of the above equation by $a$,  we have $A_1^2 \leq  a^2 + 4  (b_1^2 + b_2^2 + b_4^2 + b_5^2)$. Hence, to show $\mathbb{E}(A_1^2)$ asymptotically vanishes it is enough to show the terms $\mathbb E(a^2)$ and $\mathbb E(b_t^2)$ for $t \neq 3$ vanish asymptotically. 

By the approximation error condition  \ref{item:gyorfi_assumption_approximation_error}, $\mathbb E (a^2)$ is asymptotically $0$. 

\pstar Using the structure of $\bQ$, $\mathbb E(b_1^2)$ and $\mathbb E(b_5^2)$ are bounded above by the following: 

\begin{align*}
&\mathbb{E} \Bigg[ \sup_{f \in \mathcal{F}_n}  \Bigg|\frac{1}{{n}}\sum_i \left[\rho^\top \left(f(X^{(i)})-Y^{(i)}\right)\right]^2  - \mathbb{E}\left[\rho^\top \left(f(\dot X^{(q+1)}) - \dot Y^{(q+1)} \right) \right]^2 \Bigg| \Bigg]\\
&\leqslant  \mathbb{E} \Bigg[ \sup_{f \in \mathcal{F}_n}  \Bigg| \gamma \left(\frac{1}{{n}}\sum_i(f(X_i) - Y_{i})^2  - \mathbb{E}(f(\dot X_1) - \dot Y_{1})^2 \right)  \\
&+ 2 \sum_{j=1}^q \sum_{j'\neq j}^q \rho_{j'}\rho_{j'-j} \Bigg( \frac{1}{n}
\sum_{i}
(f(X_{i}) - Y_{i}) (f(X_{i-j}) - Y_{i-j})\\
&- \mathbb{E}_(f(\dot X_{i}) - \dot Y_{i})(f(\dot X_{i-j}) - \dot Y_{{i-j}})\Bigg) \Bigg|\Bigg]
\end{align*}
By the estimation error condition \ref{item:gyorfi_assumption_estimation_error}, this goes to $0$ asymptotically.  

The terms $b_2^2$ and $b_4^2$ are differences between the full quadratic form of $\bQ$ with one that only considers row $q+1$ onwards. Hence, each of them only involves terms from the first $q$ rows of the Cholesky matrix.
Using Assumption \ref{item:assumption_chol} \pstar, this becomes

$$
\left(\sum_{1 \leq i,j \leq q} |(\bL^\top\bL)_{ij}| \right)\frac 1n \mathbb E \max_{1 \leqslant i \leq q} \sup_{f \in \calF_n} (f(X_i) - Y_i)^2\\
$$

Using uniform bound of $\mathcal F_n$ (Condition \ref{item:gyorfi_assumption_fn_bounded}) and \blue{that $Y$ is binary}, this converges to $0$ as $n \to \infty$. This completes the proof of Theorem \ref{th:gyorfi}.
\end{proof}

\subsection{Proof of Theorem \ref{th:main_gh}: \blue{Consistency of mean function estimate}}\label{sec:main_proof}
We will establish Theorem \ref{th:main_gh} with the help of Theorem \ref{th:gyorfi} by choosing a uniformly bounded function class $\calF_n$ that contains the estimator (\ref{eqn:DART_mean}) but simultaneously controls the approximation and estimation errors. 
We consider $\mathcal F_n$ to be a suitable subset of the class of all piece-wise constant functions on the data-driven obtained partitions. For a given partition $\mathcal{P}_n(\Theta)$, we define $\mathcal F_n$ as

\begin{equation}\label{eq:class}
\calF_n = \calF_n(\Theta)= \{p_n\} \cup \bigg\{\cup_{{x}_{l} \in \mathcal{B}_l \in  \mathcal{P}_n(\Theta)} \sum_{l = 1}^{t_n} \blue{p}(\bx_l)I(\bx \in \calB_l)\bigg\},
\end{equation}

where $\forall l = 1,2, \cdots, t_n, \calB_l$ is the subset of $[0, 1]^D$ corresponding to the $l^{th}$ leaf node and $\bx_l$ is any point in $\calB_l$. 

\begin{enumerate}
    \item Using Assumption \ref{item:assumption_m}, exploiting the continuity of $m_d, d = 1, 2, \cdots, D$, we have that  $\bigg\{\cup_{{x}_{l} \in \mathcal{B}_l \in  \mathcal{P}_n(\Theta)} \sum_{l = 1}^{t_n} \blue{p}(\bx_l)I(\bx \in \calB_l)\bigg\} $ is bounded. From Lemma \ref{lemma:bounded_estimate}, we have that $p_n$ is bounded. This implies that condition \ref{item:gyorfi_assumption_fn_bounded} of Theorem \ref{th:gyorfi} is satisfied.
    \item To show the Approximation Error \ref{item:gyorfi_assumption_approximation_error} vanishes, we have shown in Section \ref{sec:approx} that variation of the true function $p$ in the leaf nodes vanishes, hence following \cite{scornet2015consistency} a function class mitigates the approximation error as long as it contains \blue{functions of of the general form} $\sum_l \blue{p}(z_l)I(x \in \calB_l)$ where $z_l$ is some fixed member of $\calB_l$.

   \item  To show that \ref{item:gyorfi_assumption_estimation_error} in Theorem \ref{th:gyorfi} is satisfied in the present scenario, we first consider the i.i.d. counterpart of  in \eqref{eqn:dependent_squared_error_component}  and \eqref{eqn:dependent_cross-product_component} as follows:

\begin{enumerate}
    \item \textbf{Squared error component under independent process: }
			    \begin{equation}
        \label{eqn:independent_squared_error_component}
        		 \lim_{n \to \infty}\mathbb{E} \left[ \sup_{f \in \mathcal{F}_n}  |\frac{1}{{n}}\sum_i|f(X_i) - Y^*_{i}|^2  - \mathbb{E}_{ \dot\calD_n}|f(\dot X_1) - \dot Y_{1}|^2|\right] = 0 ,
    \end{equation}
    where {\color{black}$ Y_i^* \in \{ 0, 1 \}; \mathbb E \left( Y_i^*\right) = g(m( X_i)))$ and $\{Y_i^* \}$} denote an i.i.d. process, independent of $\calD_n$ and $\dot \calD_n$, such {\color{black}$\left(X_i, Y^*_i\right)$ is identically distributed as $\left(X_i, Y_i\right)$}.

	\item  \textbf{Cross-product component under independent process: } For all $1 \leqslant j \leqslant q$,

			    \begin{equation}
        \label{eqn:independent_cross-product_component}
				\lim_{n \to \infty} \mathbb{E}\Bigg[ \sup_{f \in \mathcal{F}_n}  \Big|\frac{1}{n}
				\sum_{i}
				(f( \tilde X_{i}) - \tilde Y_{i}) (f(\ddot X_{i-j}) - \ddot Y_{i-j}) 
				- \mathbb{E}_{\dot \calD_n}(f(\dot X_{1+j}) - \dot Y_{1+j})(f(\dot X_{1}) - \dot Y_{{1}})\Big|\Bigg] = 0,
    \end{equation}
    where, {\color{black}$\tilde Y_i \in \{ 0, 1 \}; \mathbb E \left(\tilde Y_i\right) = g(m(\tilde X_i)))$, $\ddot Y_i \in \{ 0, 1 \}; \mathbb E \left(\ddot Y_i\right) = g(m(\ddot X_i)))$ and for all $1 \leqslant j \leqslant q$, $\{(\tilde X_{i}, \ddot X_{i-j})\}_{i \geq j+1}$ and $\{(\tilde Y_{i}, \ddot Y_{i-j})\}_{i \geq j+1}$ are bivariate i.i.d. processes, independent of $\calD_n$ and $\dot \calD_n$, such that $(\tilde X_{i},\ddot X_{i-j},\tilde Y_{i},\ddot Y_{i-j})$  is be identically distributed as  $(X_i,X_{i-j},Y_i,Y_{i-j})$ for all $i$.} 
\end{enumerate}

In \cite{nobel1993note}, a general ULLN for $\beta$-mixing process was established, which showed that a class of functions $\mathcal F_n$ satisfies ULLN corresponding to an i.i.d counterpart of a $\beta$-mixing process $\{H_i\}$, $\mathcal F_n$ also satisfies ULLN for the true process $\{H_i\}$ if the class of functions $\mathcal F_n$ has a bounded envelope. 

As for binary data, we have shown in Lemma \ref{lemma:bounded_estimate} that the RF-GP estimate (\ref{eqn:DART_mean}) is uniformly bounded in $n$. Hence choosing a function class $\mathcal F_n$ which is bounded and for which the i.i.d. ULLN's \eqref{eqn:independent_squared_error_component} and \eqref{eqn:independent_cross-product_component} holds will suffice. When considering the function class $\calF_n$ of all possible piecewise constant functions on the partitions generated by the data, \cite{scornet2015consistency} proved \eqref{eqn:independent_squared_error_component} and \cite{saha2023random} proved \eqref{eqn:independent_cross-product_component} for any i.i.d. processes without assumptions on the nature of the outcome. Hence, these results hold for our binary setting for i.i.d. process and application of the result \cite{nobel1993note} subsequently proves the  analogs \eqref{eqn:dependent_squared_error_component} and \eqref{eqn:dependent_cross-product_component} under $\beta$-mixing dependence.
 \end{enumerate}

This shows that all 3 conditions of Theorem \ref{th:gyorfi} are satisfied by the chosen function class and the result is proved. 

\subsection{\blue{Proof of Corollary \ref{cor:pdf}: Consistency of PDF estimate}}
\blue{
Let $X^{(-j)}$ denote the vector formed by removing the $j^{th}$ covariate $X^{(j)}$ from $X$. Note that the PDF for the $j^{th}$ covariate is $p_j(X^{(j)}) = \int p(X) d(X^{(1)}) \ldots dX^{(j-1)} dX^{(j+1)} \ldots dX^{(D)} = \mathbb E_{X^{(-j)}} p(X)$ and similarly for the RF-GLS estimate $\hat p_j(X^{(j)})$.
\begin{align*} \mathbb E \int \Big(p_j(X^{(j)}) -  \hat p_j(X^{(j)})\Big)^2 dX^{(j)} =& \mathbb E \int \Big(\mathbb E_{X^{(-j)}} [ p(X) - \hat p(X)]\Big)^2 dX^{(j)} \\
\leq & \mathbb E \int \mathbb E_{X^{(-j)}} \Big( p(X) - \hat p(X)\Big)^2 dX^{(j)} \\
= & \mathbb E \int \Big( p(X) - \hat p(X)\Big)^2 dX \\
 \to &\, 0.
\end{align*}
Here the inequality follows from Jensen's inequality and the last limit is from Theorem \ref{th:main_gh}.
}

\subsection{\blue{Proof of Corollary \ref{cor:cate}: Consistency of CATE estimate}}

\blue{As each of $\{Y^{[k]}_i,X^{[k]}_i\}_i$ satisfies Assumptions 1-3, from Theorem \ref{th:main_gh}, we have $ \mathbb E \int (p_k(X) - \hat p_k(X))^2 dX \to 0$ for $k \in \{0,1\}$. Then we have 
\begin{align*} \mathbb E \int (\tau(X) - \hat \tau_T(X))^2 dX =& \mathbb E \int \left[(p_1(X) - p_0(X)) - (\hat p_1(X) - \hat p_0(X))\right]^2 dX \\
<& 2 \left[\mathbb E \int (p_1(X) - \hat p_1(X))^2 dX + \mathbb E \int (p_0(X) - \hat p_0(X))^2 dX \right] \\
\to & \, 0. 
\end{align*}}

\subsection{Proof of Theorem \ref{th:main_m}}
Equipped with Corollary \ref{th:main_gh_mixed_corollary} of Theorem \ref{th:main_gh}, we can establish  Theorem \ref{th:main_m}. First we prove that $\exists \varepsilon_1, \varepsilon_2 > 0$ that do not depend on $n$, such that $p(\bx)=\mathbb E (Y | X = \bx) \in [\varepsilon_1, 1 - \varepsilon_2]$. This is a direct consequence of boundedness of $m(\cdot)$ (Assumption \ref{item:assumption_m})  and continuity of $g_{h,\mathbb F_w}$ {\color{black}(Assumption \ref{item:assumption_differentiability_new})}. As $g_{h,\mathbb F_w}(\cdot)$ is strictly monotonic (Assumption \ref{item:assumption_h}),  there exists $\eps_1$ and $\eps_2$ such that $[\eps_1,1-\eps_2] = Im(p)= g_{h,\mathbb F_w}(Im(m))$. 

Next, we define an extension of the link-inversion function $g_{h,\mathbb F_w}(\cdot)$ from Proposition \ref{lemma:link_inverse_existance} as $\kappa: \mathbb R \mapsto \mathbb R$ as follows:
\begin{subequations}
\begin{equation}
    \label{eqn:inverse_link}
    \kappa(\alpha)     = 
	\begin{cases}
       z &\quad \text{ if } \alpha \in [\varepsilon_1,1 - \varepsilon_2] \text{ such that } g_{h,\mathbb F_w}(z)=\alpha\\
       g^+_{h,\mathbb F_w}(\varepsilon_1)  &\quad\text{ if } \alpha <  \varepsilon_1\\
       g^+_{h,\mathbb F_w}(1 - \varepsilon_2)&\quad\text{ if } \alpha > 1 - \varepsilon_2
     \end{cases}
\end{equation}

Since the image of $p(\cdot)=g_{d,\mathbb F_w}(m(\cdot))$ is $[\eps_1,1-\eps_2]$, the function $\kappa(\cdot)$ restricted to $g_{h,\mathbb F_w}(Im(m))$ agrees exactly with the inverse $g^+_{h,\mathbb F_w}(\cdot)$. From Theorem \ref{th:main_gh}, $\bar{p}_n$ is an $\mathbb L_2$ consistent estimator of $\mathbb E (Y | X)$. Being inverse of a strictly increasing continuous function $z \mapsto g_{h,\mathbb F_w}$, $\kappa$ is a continuous function. Also as $\bar p_n$ is uniformly bounded (Lemma \ref{lemma:bounded_estimate}), it suffices to define $\kappa$ on a compact support that does not depend on $n$. This combined with the definition of $\kappa$ which also does not depend on $n$, we have $\kappa$  to be also uniformly bounded. Hence by continuous mapping theorem, $\kappa\circ\bar{p}_n$ is an    $\mathbb L_2$ consistent estimator of $\kappa(\mathbb E (Y | X)) = m(X)$, i.e., 
\begin{equation}
    	\lim_{n \to \infty} \mathbb{E} \int \left(\kappa(\bar{p}_n(X)) - m(X) \right)^2 \, dX = 0
\end{equation}
\end{subequations}

\subsection{Proof of Corollary \ref{prop:spatias_application}}
We only need to show that {\color{black}Assumptions \ref{item:assumption_w_new}, \ref{item:assumption_differentiability_new}, and} \ref{as:working_cov} are satisfied for a probit or logit link $h$, a Mat\'ern GP $w(
\cdot)$ and a Nearest Neighbor Gaussian Process (NNGP) working precision matrix $\bQ$. 

A Mat\'ern Gaussian Process on a 1-dimensional regular lattice with half-integer smoothness can be characterized via stochastic partial differential equations (SPDE) which coincides with the SPDE for a continuous-space Autoregressive (AR) process \citep{hartikainen2010kalman,rasmussen2003gaussian}. When such a continuous-space process is sampled on a discrete set of points (regular lattice), the sampled process becomes an Autoregressive moving average (ARMA) process \cite{ihara1993information}. As ARMA processes are $\beta$-mixing \citep{mokkadem1988mixing}, assumption \ref{item:assumption_w} is satisfied. 

Next, we check {\color{black}Assumption \ref{item:assumption_differentiability_new}} for a probit or a logit link $h(\cdot)$. Continuity of $g_{h,\mathbb F_w}(\cdot)$ is immediate due to the bounded nature of $h(\cdot)$ by application of dominated convergence theorem (DCT). To assess differentiability, we note that 
$$\frac{g_{h,\mathbb F_w}(z_2) - g_{h,\mathbb F_w}(z_1)}{z_2-z_1} = \int \frac{h(z_2+w)-h(z_1+w)}{z_2-z_1} \mathbb F_w. $$
As the derivatives of both probit $h(z) = \Pr(Z < z)$, $Z \sim N(0,1)$ and logit $h(z) = (1+\exp(-z))^{-1}$ are bounded  in $\mathbb R$ by some $H$. We have 
$\bigg |\frac{h(z_2+w)-h(z_1+w)}{z_2-z_1}\bigg| \leq H$ and $g_{h,\mathbb F_w}(\cdot)$ is differentiable by another application of DCT. Also, using continuity of the derivative of $h$ for both logit and probit link and one further application of DCT proves {\color{black}Assumption \ref{item:assumption_differentiability_new}}. 

Finally, we check Assumption \ref{item:assumption_chol} for a precision matrix $\bQ$ from a Nearest Neighbor Gaussian Process (NNGP) built from a Mat\'ern GP with covariance function $C$. When using $q$ nearest neighbors, the matrix $\bQ$ admits a Cholesky decomposition $\bV^\top\bV$ where the $i^{th}$ row of the lower triangular matrix $\bV$ has sparse rows. For subsets $A,B \subset \{1,\ldots,n\}$, let $\bC_{A,B}$ denote the matrix $(C(s_i,s_j))_{i \in A, j \in B}$. Also let $N[i]$ denote the set of $\max(q,i-1)$ neaerest neighbors of the location $s_i$ among $s_1,\ldots,s_{i-1}$. The $i^{th}$ row consists of components of the vector $\rho_i = (1, -\bC_{i,N[i]}\bC_{N[i],N[i]}^{-1}) / \sqrt{1-\bC_{i,N[i]}\bC_{N[i],N[i]}^{-1}\bC_{N[i],i}}$ respectively at the positions $i,N[i]$ and zeros elsewhere. Thus for a 1-dimensional regular lattice design, for $i \geq q+1$ this simplifies to have $N[i]=\{i-1,\ldots,i-q\}$ and $\rho_i = \rho =  (1, -\bC_{q+1,1:q}\bC_{1:q,1:q}^{-1}) / \sqrt{1-\bC_{q+1,1:q}\bC_{1:q,1:q}^{-1}\bC_{1:q,q+1}^\top}$. Hence, for this design, $\bV$ has the required banded structure with $\rho$ provided above and 
 $\bL$ such that $\bL^\top\bL = \bC_{1:q,1:q}^{-1}$. 

\subsection{\blue{Proof of Proposition \ref{prop:spatias_application_2D}}}

\blue{We first state the technical version of Proposition \ref{prop:spatias_application_2D}. \\
\begin{proposition}
\label{prop:compacttech}
Consider binary data generated on a two-dimensional lattice $\{(k,l) : 1 \leq k \leq n_1, 1 \leq l \leq n_2, k,l \in \mathbb Z\}$ 
from a generalized mixed effects model (\ref{eqn:hgnlm}) with $X$ and $m$ as in Assumptions \ref{item:assumption_x} and \ref{item:assumption_m}, a probit or logit link $h$, and $w \sim GP(0, C_{\theta_0})$ (independent of $X$), where $C_{\theta}$ is any class of stationary compactly supported covariance functions. Let $\phi:=\phi(\theta)$ denote the component of the parameter $\theta$ of $C_\theta$ which controls the magnitude of spatial correlation, i.e., $C_\theta(d) \to 0$ as $\phi \to \infty$ for any distance $d$. 
Let $p_n$ and $m_n$ denote the RF-GP estimates of respectively $p(X)=\mathbb{E}(Y | X)$ and $m(X)$, obtained using a working precision matrix $Q$ from a Nearest Neighbor Gaussian Process (NNGP) based on $C(\theta)$. Then as either $n_1 \to \infty$ or $n_2 \to \infty$, under the scaling of Assumption \ref{as:tn_rate}, there exists a $K > 0$ such that for any $\theta$ with $\phi(\theta) > K$, $p_n$ and $m_n$ are respectively $\mathbb L_2$ consistent estimates of $p(X)$ and $m(X)$.\\
\end{proposition}

\noindent \begin{proof} Without loss of generality, we consider $n=n_1n_2$ datapoints observed on a $n_1 \times n_2$ lattice (matrix) design with the column dimension $n_2 \to \infty$ and row dimension fixed at $n_1$. We enumerate processes on the lattice column-wise, i.e., first column is read first, then second column and so on. Let $a := a(\theta)$ denote the threshold of the compactly-supported function $C_\theta$. Then $w_i \independent w_j$ for any $|j-i| > n_1 a$, so $w_i$ is an $m$-dependent Gaussian process which implies ergodicity and $\beta$-mixing \citep{bradley2005basic}. So conditions \ref{item:assumption_w_new} and \ref{item:assumption_differentiability_new} are immediately satisfied. 

Hence, only Assumption \ref{as:working_cov} needs to be proved for the NNGP working precision matrix $Q$ on this lattice. However, for this two-dimensional design Assumption \ref{item:assumption_chol} is not satisfied. This is because the Cholesky factor of $Q$ does not have identical sparse and banded rows after the first few rows. To elucidate with an example, consider a $3 \times 5$ lattice and NNGP precision matrix $Q$ constructed using two nearest neighbors. As the lattice is read columnwise, the 2 nearest neighbor directed graph dictating the sparsity of the Cholesky factor of $Q$ is given in Figure \ref{fig:lattice}. This leads to 3 subsets of locations, colored differently in the figure. These will correspond to different types of rows in the Cholesky factor. The black set of locations will correspond to the first part of the sequence for which the neighbor structure is atypical (non-repetitive) because of not having enough neighbors. These are the analog of the locations which corresponded to the $L$ submatrix in Assumption \ref{item:assumption_chol}. This set (first column in the example of Figure \ref{fig:lattice} and more generally first few columns) will always be of fixed cardinality which does not grow with $n_2$ and will only depend on $n_1$ and $q$. 

The red set of locations (in this case top row except the first few column) will have a repetitive pattern of neighbors (neighbors to the west and south-west), while the remaining  locations (green set) will also have a repetitive but different pattern of neighbors (neighbors to the west and north). These will correspond to different types of rows in the Cholesky factor. 

\begin{figure}[t]
    \centering
    \includegraphics[width=0.5\linewidth]{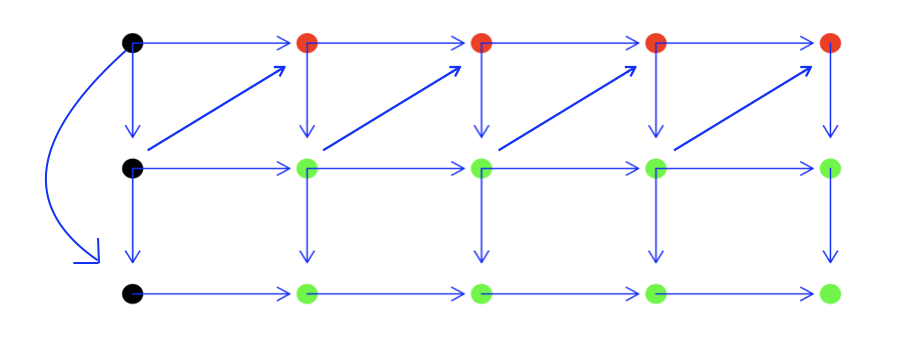}
    \caption{\blue{NNGP with $2$ nearest-neighbor directed graph on a $5\times 3$ lattice. Different colors correspond to locations which yield different rows in the Cholesky factor of the NNGP precision matrix $Q$.}}
    \label{fig:lattice}
\end{figure}
\end{proof}

So, the Cholesky factor $Q^{1/2}$ for the example in Figure \ref{fig:lattice} is given by

\begin{equation}
	\label{eqn:assumption_chol_gen}
	\bQ^{\frac 12} = \left(\begin{array}{ccccccccc}
	\multicolumn{3}{c}{\bL_{3 \times 3}} & 0 & 0 & \cdots & \cdots & \cdots & \cdots\\
    \psi_2 & \psi_1 & 0 & \psi_0 & 0 & 0 & 0 & 0 & 0  \\  
    0 & \rho_2 & 0 & \rho_1 & \rho_0 & 0 & 0 & 0 & 0  \\
    0 & 0 & \rho_2 & 0 & \rho_1 & \rho_0 & 0  & 0 & 0  \\
    0 & 0 & 0 & \psi_2 & \psi_1 & 0 & \psi_0 &  0 & 0  \\ 
	\multicolumn{3}{c}{\hdots} & \multicolumn{3}{c}{\hdots} & \multicolumn{3}{c}{\hdots} 
	\end{array} \right), 
	\end{equation}

We see that the Cholesky factor has more complex banding than what is needed in Assumption \ref{item:assumption_chol}. After the first 3 rows (for the black points), each row with $\psi=(\psi_2,\psi_1,\psi_0)'$ corresponding to the red points is intersparsed between two consecutive rows of $\rho=(\rho_2,\rho_1,\rho_0)'$ and this pattern of one row of $\psi$ followed by 2 rows of $\rho$ continues in a repetitive manner as $n_2$ grows.

We can write the matrix in (\ref{eqn:assumption_chol_gen}) as a block banded Cholesky factor of the form
	\begin{equation}
    \label{eqn:assumption_chol_block}
	R_1 = \left(\begin{array}{cccccc}
	\bL & 0 & 0 & \cdots & \cdots\\
    R_1 & R_0 & 0 & \cdots & \cdots\\
	0 & R_1 & R_0 & 0 & \cdots\\
	\vdots & \multicolumn{3}{c}{\ddots} & \vdots \\
	\cdots & 0 & 0 & R_1 & R_0
	\end{array} \right), 
	\end{equation}
Here $L$ corresponds to first $q_0$ columns of the lattice with nonrepetitive neighbor structure ($q_0=1$ in Figure \ref{fig:lattice}), the matrix $R=(R_1,R_0)$ specifies the block banding structure. In (\ref{eqn:assumption_chol_gen}), we have 

	\begin{equation}
	R_1 = \left(\begin{array}{ccccc}
	\psi_2 & \psi_1 & 0  \\
    0 & \rho_2 & 0  \\
    0 & 0 & \rho_2  \\
	\end{array} \right),\, R_0 = \left(\begin{array}{ccccc}
	\psi_0 & 0 & 0  \\
    \rho_1 & \rho_0 & 0  \\
    0 & \rho_1 & \rho_0  \\
	\end{array} \right).
	\end{equation}

This more general banding structure (\ref{eqn:assumption_chol_block}) is a block-matrix version of (\ref{eqn:assumption_chol}) with $R$ being the matrix analog of the vector $\rho$ in (\ref{eqn:assumption_chol_block}). Each block of rows with $R$ in corresponds to a column of the lattice, and thus $Q^{1/2}$ in (\ref{eqn:assumption_chol_block}) specifies a vector-autoregression (VAR) structure,   generalizing the AR structure of (\ref{eqn:assumption_chol}).  
The diagonal block $R_0$ is a lower triangular matrix and contains the lag 0 VAR coefficients (analog of $\rho_0$ in (\ref{eqn:assumption_chol}). 
The offdiagonal $R_1$ corresponds to the lag 1
VAR coefficients. In the more general case, there can be more lagged VAR coefficient matrices $R_2, R_3, \ldots, R_{q_1}$ and $R=(R_{q_1},\ldots,,R_1,R_0)$ will be the $n_1 \times (q_1+1)n_1$ coefficient matrix. 

We will go through parts of the proof of Theorem \ref{th:main_gh} that relies on the banding condition (Assumption \ref{item:assumption_chol}) and modify them for the new banding structure of the form (\ref{eqn:assumption_chol_block}).

First we note that quadratic forms involving $Q$ will not have the expression in (\ref{eq:qf}) which was central to deriving the limit of the GLS loss and GLS estimate for a fixed partition tree.  
For any vector $u$ indexed on the lattice, let $u_{[i]}$ denote the sub-vector corresponding to the $i^{th}$ column of the lattice. 
We can write a quadratic form in $Q$ as:  
\begin{align}\label{eq:qf_gen}
\bu^\top\bQ\bv 
&= \sum_i u_{[i]}'(\sum_{j=0}^{q_1} R_j'R_j) v_{[i]} + \sum_{j \neq j' = 0}^{q_1} \sum_{i} u_{[i-j]}'R_j'R_{j'}v_{[i-j']} + \sum_{i \in \mathcal{A}}\sum_{i' \in \mathcal{A}'} u_{[i]}'\tilde\Gamma_{i,i'}v_{[i']},
\end{align}
where the sum over $i$ runs from $1$ to $n_2$, ${\mathcal{A}}, {\mathcal{A}}' \subset \{1,2,\cdots,n_2 \}$ with $|{\mathcal{A}}|, |{\mathcal{A}}'| \leq 2q_0$, 
and $\tilde \Gamma_{i,i'}$'s are fixed (independent of $n_2$) matrices with entries that are functions of $L$ and $R$. For $i \leq 0$,  
$u_{[i]}$ and $v_{[i]}$ are defined to be zero vectors.

To calculate the limits of the GLS estimate ${\hat\beta} = \left( {Z}^\top\bQ {Z}\right)^{-1} {Z}^\top\bQ {Y}$ and the GLS loss, we need to derive the asymptotic limits of 
$\left(\frac{1}{n}{Z}^\top\bQ {Z}\right)^{-1}$ and $\frac{1}{n}{Z}^\top\bQ {Y}$.  
Using \eqref{eq:qf_gen} we have:

\begin{equation}\label{eq:cor_gen}
	\begin{aligned}
	\frac{1}{n} ({Z}^\top\bQ{Y})_l &= \frac 1{n_1} tr\left((\sum_{j=0}^{q_1} R_j'R_j) \frac 1{n_2} \sum_i Z_{[i],l}' Y_{[i]} \right) \\
    & \, + \frac 1{n_1} tr\left(( \sum_{j \neq j' = 0}^{q_1} R_j'R_{j'}) \frac 1{n_2} \sum_i Z_{[i-j],l}' Y_{[i-j']} \right) + \frac 1n O_b(1),
\end{aligned}
\end{equation}
where $O_b(1)$ denotes a sequence of random variables which are uniformly bounded. 

Each matrix of the form $\frac 1{n_2} \sum_i Z_{[i-j],l}' Y_{[i-j']}$ above is a fixed dimensional ($n_1 \times n_1$) matrix whose limit (as $n_2 \to \infty$) can be obtained entrywise similarly as in Lemma \ref{lemma:DART-theoretical} using the $\beta$-mixing property. We then have the following limits:

\begin{equation}\label{eq:gls_lim_block}
\begin{aligned}
    \frac{1}{n} ({Z}^\top\bQ{Y})_l &\to \gamma E(Y \given X \in \calB_l) Vol(\calB_l) + \delta E(Y) Vol(\calB_l), \\
    \frac{1}{n} ({Z}^\top\bQ{Z})_{ll'} &\to \gamma Vol(\calB_l)\mathbb I(l=l') +  \delta Vol(\calB_l)Vol(\calB_{l'}), \\
    \mbox{ where } \gamma  = \frac 1{n_1}tr&\left(\sum_{j=0}^{q_1} R_j'R_j\right) \mbox{ and } \delta= \frac 1{n_1} \sum_{j,j' = 0}^{q_1} 1'R_j'R_{j'}1 -\gamma.
\end{aligned}
\end{equation} 
 
Hence, $(\frac 1n Z'QZ)^{-1}$ and $\frac 1n Z'QY$ have the similar functional forms of limits as in the proof of Lemma \ref{lemma:DART-theoretical} but with different choices of $\gamma$ and $\delta$. Consequently, the GLS estimate and the loss will also have same limits as in Lemma \ref{lemma:DART-theoretical} as $\gamma$ and $\delta$ cancels out. 

Next we verify Assumption \ref{item:assumption_diagonas_dominance} for $Q$ in this design. Note that $Q^{1/2}$ only has a maximum of $q_0+n_1$ unique row structures. 
As the entries of all of these rows are based on kriging weights 
and the covariance matrix $C_\theta$ has a parameter $\phi$ controlling the magnitude of correlation, there exists some $K$ such that for any $\phi > K$, $Q$ is strictly diagonally dominant with a 
positive threshhold $\xi > 0$ 
\citep[see proof of Proposition 3.1 in ][for more details]{saha2023random}. 
Hence, $Q$ satisfies Assumption \ref{item:assumption_diagonas_dominance} which implies that there is an uniform bound on the GLS estimator (in a similar way as in Lemma \ref{lemma:bounded_estimate}). An important byproduct of the diagonal dominance is that 
$\gamma - |\delta| \geq \xi$. 

The final part is to prove an analog of Theorem \ref{th:gyorfi} when $Q$ does not satisfy Assumption \ref{item:assumption_chol}, instead $Q^{1/2}$ is as in \eqref{eqn:assumption_chol_gen} and the design is a two-dimensional lattice. As these only only impacts the estimation error part, the rest of the proof related to approximation error remains the same. 

Controlling the estimation error in Theorem \ref{th:gyorfi} relies on the ULLNs (\ref{eqn:dependent_squared_error_component}) and (\ref{eqn:dependent_cross-product_component}). First we note that analogs of these two can be established separately for each of the elements of the subsequences $(Y_{i]},X_{[i-j]})$ for all finite lags $j$. 
This is because these are regularly-spaced subsequences of the full stationary ergodic $\beta$-mixing process $\{Y_i,X_{i-j}\}$. 
Hence, they will also be stationary ergodic $\beta$-mixing and the subsequence-specific ULLNs can be proved as in the proof of Theorem \ref{th:main_gh}. 

Equipped with the ULLNs, we now follow the last part of the proof of Theorem \ref{th:gyorfi} making adjustment for the different banding structure in $Q^{1/2}$. Let $\dot X^{(q_1+1)}$ denote a $n_1 \times (q_1 + 1)$ random matrix identically distributed as that of the $X$-process on $q_1 + 1$ consequtive columns of the lattice, and $\dot X^{(q_1+1)}$ is independent of $X$. 
We first prove that to show $\mathbb L_2$ consistency of $p_n$ 
it is enough to show that 
\begin{equation}
    \label{eqn:gyorfi_quadratic_block}
    \lim_{n \to \infty}\mathbb{E}\left[\mathbb{E}\; tr \left[f_n(\dot X^{(q_1+1)}) R^\top Rf_n(\dot X^{(q_1+1)})^\top\right]\right]=0, \mbox{ where } f_n=p_n-p.
\end{equation}

To see this, for any fixed function $f$, we have 
$$\begin{aligned}
    \mathbb{E}\; tr \left[f(\dot X^{(q_1+1)}) R^\top R f(\dot X^{(q_1+1)})^\top\right] & =  tr \left[ R^\top R\; \mathbb{E}\;\left[f(\dot X^{(q_1+1)})f(\dot X^{(q_1+1)})^\top\right] \right]\\
    & = tr \left[ R'R\; \left[ (Var(f(X_i)) I + ( \mathbb Ef(X_i))^2 11'  \right] \right] \\
    & = (Var(f(X_i)) tr(R'R) + ( \mathbb Ef(X_i))^2 1'R'R 1 \\
    & = \mathbb E(f(X_i)^2) n_1 \gamma + ( \mathbb Ef(X_i))^2 n_1 \delta  \\ 
    & \geq n_1 (\gamma - |\delta|) \mathbb E(f(X_i)^2). 
\end{aligned}$$

As $\gamma - |\delta| > 0$, using $f=f_n$ and taking the limit we see that proving \eqref{eqn:gyorfi_quadratic_block} will imply $\mathbb L_2$ consistency. Equation (\ref{eqn:gyorfi_quadratic_block}) thus replaces (\ref{eqn:gyorfi_quadratic}) and the proof can be completed by repeating the remaining steps in Theorem \ref{th:gyorfi}.}

\section{Implementation of RF-GP}

\subsection{Working correlation matrix}
\label{sec:correlation_estimation}
The estimate of the mean function $p(\cdot)$ in Section \ref{sec:gini} requires a suitable working covariance matrix $\bQ^{-1}$ that adequately captures the spatial dependence in the data. The finite sample performance of GLS estimators depends on how closely the working covariance matrix approximates the population covariance matrix $Cov( Y \given \bX)$. 
In RF-GLS with the continuous outcome, due to the linear relationship between the outcome and the spatial random effects, the population covariance of the outcome is the sum of the covariance of $\bw$ and the noise covariance. In the case of binary outcome with a non-linear link $h$, there is no closed form expression of the marginal population covariance matrix of $ Y \given \bX$, even when $\bw$ has a structured covariance matrix $\bC$. 

For non-Gaussian data, there is a long-established history of using working covariance matrices in generalized estimating equations \citep{liang1986longitudinal}. We follow the conventional GLM approach in considering working covariance matrices of the form 
\begin{equation}\label{eq:workingcov}
    \bQ^{-1} = \sigma^2 \bV^{1/2} \bR(\zeta) \bV^{1/2} 
\end{equation}
where $\bV$ is a diagonal matrix with entries $a(p(X_i))$, $a(\cdot)$ being a working variance function, and $\bR$ is a working correlation matrix parameterized by $\zeta$, capturing the spatial dependence. Common choices of $v(\cdot)$ can be the Bernoulli ($a(p)=p(1-p)$) or Gaussian ($a\equiv 1$) working variance functions \citep{mccullagh2019generalized}. We use the latter as other choices will involve the unknown mean function $p$. We show in Section \ref{sec:consistency} that this choice of the variance function along with the choice of the correlation matrix described below leads to consistent estimates for our algorithm. 

\blue{We outline some general strategies for choosing the class of working correlation matrix $R(\zeta)$ for the RF-GLS part of the method, and for tuning the hyper-parameters $\zeta$. The specific choice will depend on the application. For example, when analyzing time-series data and using the direct auto-regressive (AR) time series model (as described in Section \ref{sec:simulation_AR}), one can chooe $R(\zeta)$ to the class of AR covariances, i.e., $R(\zeta)=(\zeta^{|i-j|})$.  
Other time-series covariance models can also be considered, like AR of higher order or ARMA models, but we do not explore that here.} If working within the generalized mixed effects model framework of Section \ref{sec:rfgp}, the working spatial correlation matrix family $\bR(\zeta)$ can be chosen based on a crude first order Taylor series approximation, as follows  
\begin{equation*}
\begin{aligned}
    Cov(Y_i, Y_j \given \bX) &= E (Cov(Y_i, Y_j \given \bX, \bw) \given \bX) + Cov( \mathbb{E}(Y_i \given \bX, \bw), \mathbb{E}(Y_j \given \bX, \bw) \given \bX)\\
    &= 0 + Cov(h(m(X_i)+w(s_i)), h(m(X_j)+w(s_j))\given \bX) \\
    &\approx Cov(h(m(X_i))+w(s_i)h'(m(X_i)), h(m(X_j))+w(s_j)h'(m(X_j))\given \bX)\\
    &= h'(m(X_i)) h'(m(X_j)) C_{ij}.
\end{aligned}
\end{equation*}
\blue{Writing $h'(m(X_i))$ as some function $a(p(X_i))$ of the mean by using the link-inversion, and letting $C_{ij}=\sigma^2 R_{ij}$ for a stationary covariance, we have  
$Cov(Y \given X) \approx \sigma^2_w V^{1/2} R(\zeta) V^{1/2}$, i.e., the same form as in (\ref{eq:workingcov}).} Hence \blue{for mean estimation}, we choose $\bR(\zeta)$  to be the same correlation family $R$ which specifies the GP distribution of the spatial  effects $\bw$. However, we allow the parameter for $\zeta$ for $\bR$ to be different from the parameters $\theta$ of the GP covariance function $C$ generating the spatial effects. We use the popular Nearest Neighbor GP (NNGP) approximation of the working spatial matrix $\bR$. Theoretical justification of using the NNGP working covariance function is provided in Proposition \ref{prop:spatias_application}. Subsequent to choosing the family of the working precision matrix $Q$, we estimate the hyper-parameters  
specifying it using cross-validation, detailed afterward.

\blue{\subsection{Computational complexity}\label{sec:compute}
In our implementation of RF-GP, we use the nearest neighbor Gaussian Process (NNGP) family of covariance functions \citep{datta2016nearest} to model the working covariance for the mean estimation using RF-GLS.  
Given a direction and a split, the time to evaluate a loss function \eqref{eqn:DART} and node representative assignment \eqref{eqn:DART_mean} is linear in $n$, as long as we use a working precision matrix with a fixed number of non-zero elements in each row.  The working precision matrix obtained using NNGP satisfies this \citep[see ][for an explanation]{finley2019efficient}. So each evaluation of the split criterion is linear in $n$. 

We need to optimize the loss function over the split direction and over each possible gap for each covariate. Evaluating the loss at $n - 1$ possible gaps of each of the $m_{TRY}$ many covariates for a tree with $t$ nodes requires $O(n(n-1)tm_{TRY})$ complexity. So the entire RF-GLS algorithm has $O(n^2)$ computational complexity. For small sample sizes $(n < 500)$, as considered in the data analysis of this article, this quadratic computational complexity does not create a problem. 

For larger $n$, a potential way to bypass would be to restrict the number of gaps the loss function needs to be evaluated at for optimization. A way to do this would be rounding or binning the covariates to the nearest $\alpha$\% quantile of the covariate. This reduces the total number of gaps in the covariate to $q = [100/\alpha]$, irrespective of the value of $n$ and does the binning in a data-driven way. For example, binning to the nearest even quantile would create $q=50$ gaps which will be much less than $n$ when the latter is large. This ensures linear order computational complexity $O(qn)$ for the whole algorithm. This strategy can be used for mean estimation using RF-GLS for both continuous and binary data. We ran some simulation experiments using this quantile binning for RF-GLS and it drastically reduces the runtimes, scaling to datasets with tens of thousands of locations.   
We will investigate this computational strategy more extensively in our future work.}

\subsection{Estimation details}\label{sec:details}
For the estimation of the mean function in RF-GP, which uses RF-GLS, we fix the number of trees in the  to be $n_{tree} = 500$; the minimum cardinality of leaf nodes to be $t_c = 20$ and
the number of features to be considered at each split to be $M_{try} = \min \{1, \left[\frac{D}{3}\right] \}$, where $D$ is the number of features in the covariate space. We assume the exponential covariance function for $w$. 
Hence, as explained in Section \ref{sec:correlation_estimation}, we also use the same covariance function form to define the working correlation matrix in \ref{sec:correlation_estimation}.  
We need to estimate two sets of spatial parameters for fitting RF-GP. The first one is the spatial decay parameter of the exponential covariance function in the working correlation matrix $ R$, denoted as $\zeta$. The other one is the set of spatial parameters corresponding to the spatial process $w$, needed for the estimation of the \blue{covariate} effect and predictions. This is given by $\theta = \left( \sigma^2, \phi\right)$ where $\sigma^2$ is the variance and $\phi$ is the spatial decay.  
We estimate the parameters by 2-fold cross-validation with respect to the misclassification error. We perform the cross-validation through a grid search with $ \zeta \in \left\{ 0.5, 1, 2, \ldots, 10, \infty\right\}$, $\sigma^2 \in \left\{ 1, 2.5, 5, 7.5, \ldots, 25\right\}$, and $\phi = \left\{ 0.5, 1, 2, \ldots, 10, \infty\right\}$. Here we note that the first part of RF-GP, i.e., estimating the mean function using RF-GLS depends only on $\zeta$. We incorporate $\zeta = \infty$ in the cross-validation parameter set as it corresponds to an identity working covariance matrix, in which case, the mean function estimation with RF-GLS becomes equivalent to the classical RF. This choice is included to account for the possible absence of spatial correlation structure in the data and in practice implemented by using a very large $\zeta (=10000)$. Similar argument holds for $\phi = \infty$ case. The methods were implemented in \texttt{R}. For model fitting, we used \texttt{RandomForestsGLS} \citep{RandomForestsGLS} to implement RF-GP and \texttt{randomForest} \citep{randomForest} to implement the remaining methods. We used probit-NNGP \citep{saha2022scalable} implemented \texttt{BRISC} \citep{brisc} for binary prediction.

Estimating the covariate effect using link inversion (\ref{eqn:probit_link_inverse}) requires the mean estimates $\widehat{p(\bx)}=\widehat{\mathbb E \left( Y | X =  x\right)}$ to lie in $\left(0, 1\right)$. Neither RF nor RF-GP estimates are guaranteed to lie in $\left(0,1\right)$. For small imbalanced data, RF mean estimate at a covariate value can coincide with $0$ or $1$, depending on the predominant class. RF-GP estimates of the mean function are obtained using RF-GLS. Being a GLS estimate, it is not guaranteed to lie in  $\left( 0, 1\right)$ although it is uniformly bounded (Lemma \ref{lemma:bounded_estimate}). For mean estimation, we truncate the original estimate at $[0,1]$. For link inversion, we interpolate the estimates to $(0,1)$  
exploits the assumed regularity of the covariate effect $m \left ( \cdot\right)$. If the mean estimate at a covariate value lies outside $(0,1)$, 
we first generate points $\bx$ uniformly in the covariate space and obtain $\widehat p(\bx)$ from the RF-GLS estimate. Next, we train any statistical interpolator $model$ with only the $\widehat p(\bx)$'s which lie in $(0,1)$ and the corresponding $\bx$'s. Then for $\bx$'s for which $\widehat p(x)$ lies outside $(0,1)$, we replace $\widehat p(x)$ with the predicted value from the interpolator. We use classical RF for this interpolation as it preserves the range of the outcomes in the training data, although any other interpolator with the same property can also be used.

\section{Simulation experiments}
\label{subsec:simulation}
\subsection{\blue{Competing methods}}\label{sec:methodssim}
\blue{We compare the performance of our method with the following state-of-the-art competitors:}
{\color{black}
\begin{enumerate}
    \item RF \citep{breiman2001random}: Breiman's RF, without any spatial information.
    \item RF-Loc: Breiman's RF, with the spatial locations used as added-spatial-features. 
    \item RF-Sp \citep{hengl2018random}: Uses buffer distances as added-spatial-features. 
    We use pairwise Euclidean distances as buffer distances.
    \item Basis function GLMM (Basis GLMM): A non-linear mixed effect model (\ref{eqn:hgnlm}) with $m$ expressed using basis functions and $w$ modeled as a GP. Fitted as a 
    generalized linear mixed effects model with covariates generated by the multivariate linear splines used by \cite{holmes2003generalized}. 
    \item GAM: Generalized additive model with the spatial effect modeled by 2d smoother of the space. 
    \item Deep Learning: Feed-forward neural network, does not use spatial information
    \item DeepKriging: Feed-forward neural network with added-spatial-features as proposed in \cite{chen2020deepkriging}. Adds radial basis functions of coordinates (using the Wendland compactly supported correlation function) added-spatial-features.
    \item BART: Bayesian Additive Regression Tree of \cite{chipman2010bart}; without using any spatial information.
    \item BART-Loc: BART with added-spatial-features, using the spatial locations used as added-spatial-features.
    \item BARTDeepKriging: BART with added-spatial-features, using the same basis functions as added-spatial-features 
    as DeepKriging, i.e., the Wendland basis functions.
\end{enumerate}

We implement the competing methods using existing R packages. RF, RF-Loc, and RF-Sp are implemented using \texttt{randomforest} package \citep{randomForest}. We use \textit{spmodel} to implement the Basis GLMM. We model the covariance using exponential covariance function option of the package. GAM is implemented via \texttt{mgcv} package \citep{wood2003thin}. Deep Learning and DeepKriging are implemented using \texttt{keras3} package \citep{keras3}. BART, BART-Loc, and BARTDeepKriging are implemented via \texttt{BART} package \citep{BART}. 

For comparing the performance of the competing methods, we focus on three objectives:

\begin{enumerate}
    \item \textbf{Mean estimation}: Given a correlated binary outcome $Y$ and covariate $X$, we focus on estimating the conditional mean $p(X) = \mathbb E \left(Y | X\right )$. This works under the general moment-based data generating mechanism in \eqref{eqn:genbin} and need not be within the mixed effects model setup.
    \item \textbf{Covariate effect estimation:} In the mixed effects model setup of \eqref{eqn:hgnlm}, we focus on estimating the covariate effect, $m \left(X \right)$.
    \item \textbf{Prediction:} In the mixed effects model setup, given new covariates $X_{new}$ at new locations $s_{new}$, we predict $\mathbb{E}(Y_{new} \given X_{new}, s_{new})$.
\end{enumerate}

All except $4$ of the competing methods cannot perform all three tasks. For example, none of the added-spatial-features methods provide  estimates of the conditional mean $\mathbb E(Y \given X)$ or the covariate effect $m$ as they are not from the mixed effects model framework and use the covariates and the spatial locations together to estimate of $\mathbb E(Y \given X,s)$. Hence, they are only suitable for spatial predictions.

We provide a list of competing methods and the tasks they can perform in Table \ref{tab:truthTables}. Existing implementation of BART and RF do not estimate the covariate effect $m$, but we have integrated our link-inversion approach into BART and RF to obtain estimates of the covariate effect. }

\begin{table}[h]
\captionsetup{labelfont={color=black},font={color=black}}
 \caption{{\color{black}Summary of competing methods and what they estimate.}}
    \label{tab:truthTables}  
\begin{center}
{\color{black}\begin{tabular}{||c | c | c | c||}
 \hline
\backslashbox{ \textbf{Method} }{\textbf{Task}} & \begin{tabular}{@{}c@{}}Mean \\ $\mathbb E \left( Y | X\right)$\end{tabular}  & \begin{tabular}{@{}c@{}}Covariate effect \\ $m\left( X\right)$ in \eqref{eqn:hgnlm}\end{tabular}  & \begin{tabular}{@{}c@{}}Prediction \\ $\mathbb{E}(Y_{new} \given X_{new}, s_{new})$\end{tabular} \\ [0.5ex]
 \hline\hline
 RF-GP & \checkmark & \checkmark & \checkmark \\
 \hline
 RF & \checkmark & \checkmark & \checkmark \\
 \hline
 RF-Loc & $\times$ & $\times$ & \checkmark \\
 \hline
 RF-Sp & $\times$ & $\times$ & \checkmark \\
 \hline
 Basis GLMM & \checkmark & \checkmark & \checkmark \\
 \hline
 GAM & $\times$ & \checkmark & \checkmark \\
  \hline
 Deep Learning & $\times$ & $\times$ & \begin{tabular}{@{}c@{}} Estimates $Y_{new} $ \\ given $ X_{new}, s_{new}$\end{tabular}  \\
 \hline
 DeepKriging & $\times$ & $\times$ & \begin{tabular}{@{}c@{}} Estimates $Y_{new} $ \\ given $ X_{new}, s_{new}$\end{tabular}  \\
 \hline
 BART & \checkmark & \checkmark & \checkmark \\
 \hline
  BART-Loc & $\times$ & $\times$ & \checkmark \\
 \hline
 BARTDeepKriging & $\times$ & $\times$ & \checkmark \\ [1ex]
 \hline
\end{tabular}}
\end{center}
\end{table}

{\color{black}\subsection{Simulation from the binary auto-regressive process}\label{sec:simulation_AR}
We first consider a data generation process which does not come from the mixed effects model of (\ref{eqn:hgnlm}). We simulate correlated binary data from \eqref{eqn:genbin} for a direct binary auto-regressive $AR(1)$ process, using the R package \texttt{bindata} \citep{bindata}. We vary the serial correlation parameters $\rho \in \{0.3, 0.5 \}$ to generate datasets of varying magnitudes of correlation. The mean $p$ in \eqref{eqn:genbin} is given as $\Phi \left( f(X)/\alpha\right)$, where $f$ is 
the Friedman function, given as
\begin{equation}
\label{eqn:friedman}
    f(X) := \left[10\sin \left( \pi *X_1 *X_2\right) + 20 (X_3 - 0.5)^2 + 10 (X_4 - 0.5) + 5 (X_5 - 0.5)\right]/5,
\end{equation}
and $\alpha = 3.5$, and $30$ for $\rho = 0.3$ and $0.5$ respectively. The i.i.d covariates are generated uniformly from $[0, 1]^5$.

In Table, \ref{tab:AR_process}, we compare the performance of the $4$ methods that estimate $p \left( X \right) = \mathbb E \left( Y | X\right)$ in this setup, with respect to the Mean Integrated Squared Errors (MISE). RF, BART and Basis GLMM perform comparably, with BART being slightly better. RF-GP significantly outperforms BART, RF, and Basis GLMM, especially in the high correlation setup. This is expected since a) RF and BART do not account for the correlation structure, and b) Basis GLMM accounts for the correlation structure but models the covariate effect via basis functions, which does not perform well even for moderately high dimension (5 in this specific example).  In the high correlation case, the MISE of BART is $100\%$ larger than that of RF-GP, while the MISE of basis function based GLMM is nearly $200\%$ larger.}

\begin{table}[t]
\captionsetup{labelfont={color=black},font={color=black}}
 \caption{{\color{black}MISE for $\mathbb E \left( Y | X\right)$ estimation for binary $AR(1)$ process. Lower is better. The best performance in each column is marked in bold.}}
    \label{tab:AR_process}  
\begin{center}
{\color{black}\begin{tabular}{||c | c | c||}
 \hline
\backslashbox{ \textbf{Method} }{\textbf{Setup}}   & $\rho = 0.3$  & $\rho = 0.5$ \\ \hline\hline
 RF-GP & \textbf{0.01124} &  \textbf{0.00869} \\
\hline
 RF &  0.01745 & 0.01819 \\
 \hline
 Basis GLMM & 0.01622 &  0.02296    \\
 \hline
 BART & 0.01488  & 0.01663 \\[1ex]
 \hline
\end{tabular}}
\end{center}
\end{table}

{\color{black}\subsection{Simulation from generalized mixed effects model in \eqref{eqn:hgnlm}}
\label{sec:simulation_mixed}
 We simulate data from the spatial generalized mixed effects model (\ref{eqn:hgnlm}) with probit link. We consider $m$ to be $f(X)$ \eqref{eqn:friedman}. The i.i.d covariates are generated uniformly from $[0, 1]^5$. The locations are generated on a unit square $[0,1]^2$. The spatial error $w(s)$ is generated from an exponential GP ($\nu=1/2$ in (\ref{eq:matern})).
 We vary the spatial decay parameter $\phi$ (i.e. the inverse of the range parameter for the exponential covariance)  $ \in \{ 2,3,4,5\}$. We vary the spatial variance $(\sigma^2)$ over $\left\{ 2,\ldots,9\right\}$ to compare how the methods compare as the spatial variance increases. In order to rule out small sample variations, we simulate $100$ times for each parameter combination. Following \cite{Saha2022}, we use a $90\% - 10\%$ test-train split. 
We first divide the spatial domain $[0,1]^2$ into $10 \times 10$ equal square boxes with and then randomly choose one box from each row and column (i.e. total $10\%$ of the boxes). We keep the data with spatial locations within those boxes as test/holdout data. In this approach, entire regions are held out for testing as opposed to random locations, which may be proximal to other locations in the training set. 

First, we assess the estimation performance of the methods
for both the covariate effect $m(X)$ and the marginal mean function $p(X) =  \mathbb{E}\left( Y | X\right)$. Tables \ref{tab:m_mixed} and \ref{tab:p_mixed} show the Mean Integrated Squared Errors (MISE) for the estimation of $m$ and $p$,  
respectively, under different choices of the spatial covariance parameters generating the data. We observed that for both functions, RF-GP generally outperforms its competition for high spatial correlation scenarios (low $\sigma^2$ or high $\phi$). For low spatial correlation scenarios (low $\sigma^2$ or high $\phi$), the RF and BART perform comparably to RF-GP. This is primarily due to the fact that when the spatial signal is low, RF and BART which do not use any spatial information, are nearly correctly specified. However, as RF and BART does not account for spatial error, they perform considerably worse in high spatial signal.

Basis GLMM accounts for spatial correlation using GP random effects, but uses basis functions to model the covariate effect, which does not perform well as the covariate dimension is $5$. GAM on the other hand models the covariate effect additively, using smooth functions of covariates, hence cannot model the interaction term \citep{zhan2024neural}. Since our mean function is a version of the Friedman function, which includes the interaction term, it is natural that GAM will not do well. 

Nearly in all the scenarios, the MISE of RF-GP performs is either lowest or very close to the lowest, whereas for each alternate method under certain scenarios the MISEs are $100\%-300\%$ larger than that of RF-GP.

\begin{table}[H]
\captionsetup{labelfont={color=black},font={color=black}}
 \caption{{\color{black}MISE for $m(X)$ estimation in mixed effects models \eqref{eqn:hgnlm}. Lower is better. The MISE corresponding to the best performing method for each setup is marked in bold.}}
    \label{tab:m_mixed}  
\begin{center}
{\color{black}\begin{tabular}{||r|r|r|r|r|r|r||}
  \hline
  \hline
$\phi$ & $\sigma^2$ & BART & Basis GLMM & GAM & RF-GP & RF \\
  \hline
2 & 2 & 2.07 & \textbf{1.47} & 2.27 & 1.55 & 1.90 \\   \hline
  2 & 3 & 2.84 & 4.01 & 2.35 & \textbf{1.69} & 2.16 \\   \hline
  2 & 4 & 2.93 & 3.91 & 2.33 & \textbf{1.80} & 2.20 \\   \hline
  2 & 5 & 2.74 & 5.04 & 2.35 & \textbf{2.16} & 2.31 \\   \hline
  2 & 6 & 3.05 & 5.40 & \textbf{2.40} & 2.51 & 2.84 \\   \hline
  2 & 7 & 3.51 & 5.06 & 2.43 & \textbf{1.76} & 2.41 \\   \hline
  2 & 8 & 3.63 & 8.43 & 2.46 & \textbf{1.84 }& 2.75 \\   \hline
  2 & 9 & 3.75 & 8.80 & 2.45 & \textbf{2.04} & 2.62 \\   \hline
  3 & 2 & 1.30 & 3.43 & 2.28 & \textbf{1.27} & 1.86 \\   \hline
  3 & 3 & 1.71 & 3.43 & 2.35 & \textbf{1.55} & 1.90 \\   \hline
  3 & 4 & 1.99 & 3.49 & 2.37 & \textbf{1.88} & 1.82 \\   \hline
  3 & 5 & 2.13 & 3.86 & 2.42 & \textbf{1.65} & 2.06 \\   \hline
  3 & 6 & 2.98 & 4.62 & 2.43 & \textbf{1.83} & 2.26 \\   \hline
  3 & 7 & 2.39 & 5.43 & 2.46 & 2.00 & \textbf{1.96} \\   \hline
  3 & 8 & 2.70 & 4.44 & 2.46 & \textbf{2.20} & 2.42 \\   \hline
  3 & 9 & 2.68 & 4.70 & 2.49 & \textbf{1.72} & 2.41 \\   \hline
  4 & 2 & \textbf{1.44} & 2.98 & 2.31 & 1.50 & 1.84 \\   \hline
  4 & 3 & 1.64 & 2.91 & 2.36 & \textbf{1.43} & 1.30 \\   \hline
  4 & 4 & 1.73 & 3.56 & 2.36 & \textbf{1.71} & 1.85 \\   \hline
  4 & 5 & 1.77 & 3.63 & 2.39 & \textbf{1.62} & 1.67 \\   \hline
  4 & 6 & 1.80 & 3.64 & 2.43 & \textbf{1.46} & 2.14 \\   \hline
  4 & 7 & 1.88 & 3.96 & 2.46 & \textbf{1.44} & 2.22 \\   \hline
  4 & 8 & 2.21 & 3.82 & 2.48 & \textbf{1.37} & 2.24 \\   \hline
  4 & 9 & 2.10 & 3.99 & 2.50 & \textbf{1.74} & 2.32 \\   \hline
  5 & 2 & \textbf{1.33 }& 2.77 & 2.29 & 1.36 & 1.67 \\   \hline
  5 & 3 & 1.60 & 2.85 & 2.36 & \textbf{1.23} & 1.61 \\   \hline
  5 & 4 & \textbf{1.36} & 2.80 & 2.36 & 1.53 & 1.58 \\   \hline
  5 & 5 & 1.51 & 2.84 & 2.39 & \textbf{1.43} & 1.75 \\   \hline
  5 & 6 & 1.79 & 3.14 & 2.42 & \textbf{1.60} & 1.73 \\   \hline
  5 & 7 & 1.88 & 3.05 & 2.45 & 1.54 & \textbf{1.28} \\   \hline
  5 & 8 & 1.85 & 2.95 & 2.46 & \textbf{1.75} & 1.87 \\   \hline
  5 & 9 & 2.02 & 3.52 & 2.50 & \textbf{1.72} & 1.81 \\   \hline
 \hline
\end{tabular}}
\end{center}
\end{table}

\begin{table}[H]
\captionsetup{labelfont={color=black},font={color=black}}
 \caption{{\color{black}MISE for $\mathbb{E}(Y \given X)$ estimation in mixed effects models \eqref{eqn:hgnlm}. Lower is better. The best performance is marked in bold.}}
    \label{tab:p_mixed}  
\begin{center}
{\color{black}\begin{tabular}{||r|r|r|r|r|r||}
  \hline
$\phi$ & $\sigma^2$ & BART & Basis GLMM & RF-GP & RF \\
  \hline \hline
2 & 2 & 0.027 & 0.027 & \textbf{0.021} & 0.029 \\ \hline
  2 & 3 & 0.032 & 0.058 &\textbf{ 0.031} & 0.031 \\ \hline
  2 & 4 & 0.035 & 0.045 & \textbf{0.024} & 0.035 \\ \hline
  2 & 5 & 0.034 & 0.044 &\textbf{ 0.028} & 0.032 \\ \hline
  2 & 6 & 0.039 & 0.039 & \textbf{0.028} & 0.036 \\ \hline
  2 & 7 & 0.040 & 0.034 & \textbf{0.025} & 0.036 \\ \hline
  2 & 8 & 0.040 & 0.035 & \textbf{0.025} & 0.038 \\ \hline
  2 & 9 & 0.046 & 0.032 & \textbf{0.020} & 0.040 \\ \hline
  3 & 2 & 0.022 & 0.054 & \textbf{0.022} & 0.024 \\ \hline
  3 & 3 & 0.024 & 0.049 & \textbf{0.020} & 0.024 \\ \hline
  3 & 4 & 0.029 & 0.043 & \textbf{0.022} & 0.028 \\ \hline
  3 & 5 & 0.028 & 0.036 &\textbf{ 0.019} & 0.028 \\ \hline
  3 & 6 & 0.031 & 0.034 & \textbf{0.026} & 0.028 \\ \hline
  3 & 7 & 0.029 & 0.032 & \textbf{0.026} & 0.030 \\ \hline
  3 & 8 & 0.032 & 0.029 & \textbf{0.029} & 0.032 \\ \hline
  3 & 9 & 0.034 & 0.028 & \textbf{0.020} & 0.033 \\ \hline
  4 & 2 & \textbf{0.017} & 0.049 & 0.020 & 0.020 \\ \hline
  4 & 3 & \textbf{0.020} & 0.044 & 0.024 & 0.021 \\ \hline
  4 & 4 & 0.024 & 0.037 & 0.025 & \textbf{0.023} \\ \hline
  4 & 5 & \textbf{0.024} & 0.036 & 0.025 & 0.025 \\ \hline
  4 & 6 & 0.024 & 0.033 & \textbf{0.019} & 0.025 \\ \hline
  4 & 7 & 0.026 & 0.032 & \textbf{0.020} & 0.027 \\ \hline
  4 & 8 & 0.025 & 0.031 & \textbf{0.018} & 0.026 \\ \hline
  4 & 9 & 0.027 & 0.027 & \textbf{0.021} & 0.026 \\ \hline
  5 & 2 & \textbf{0.016} & 0.052 & 0.019 & 0.017 \\ \hline
  5 & 3 & \textbf{0.017} & 0.044 & 0.019 & 0.019 \\ \hline
  5 & 4 & \textbf{0.019} & 0.039 & 0.023 & 0.021 \\ \hline
  5 & 5 & \textbf{0.020} & 0.034 & 0.023 & 0.021 \\ \hline
  5 & 6 & 0.020 & 0.033 & \textbf{0.019} & 0.021 \\ \hline
  5 & 7 & 0.021 & 0.030 & \textbf{0.018} & 0.022 \\ \hline
  5 & 8 & 0.023 & 0.029 & \textbf{0.019} & 0.023 \\ \hline
  5 & 9 & 0.024 & 0.026 & \textbf{0.017} & 0.025 \\ \hline
   \hline
\end{tabular}}
\end{center}
\end{table}

\begin{table}[H]
\captionsetup{labelfont={color=black},font={color=black}}
 \caption{{\color{black} MSE for $\mathbb{E}(Y_{new} \given X_{new}, s_{new})$ prediction in mixed effects models \eqref{eqn:hgnlm}. Lower is better. The best performance is marked in bold. Values are in $\times 10$ scale.}}
    \label{tab:prediction_mixed}  
\begin{center}
{\color{black}\begin{tabular}{||r|r|r|r|r|r|r|r|r|r|r|r|r||}
  \hline
$\phi$ & $\sigma^2$ & BART & \begin{tabular}{@{}c@{}}Basis \\ GLMM\end{tabular} & GAM & \begin{tabular}{@{}c@{}}Deep \\ Kriging\end{tabular} & \begin{tabular}{@{}c@{}}Deep \\ Learning\end{tabular} & \begin{tabular}{@{}c@{}}BART \\ Loc\end{tabular} & \begin{tabular}{@{}c@{}}BART \\ Deep \\ Kriging\end{tabular} & \begin{tabular}{@{}c@{}}RF \\ GP\end{tabular} & RF & \begin{tabular}{@{}c@{}}RF \\ Loc\end{tabular} & \begin{tabular}{@{}c@{}}RF \\ Sp\end{tabular} \\
  \hline\hline
2 & 2 & 0.55 & 0.44 & 0.79 & 1.64 & 0.98 & \textbf{0.38} & 0.50 & 0.42 & 0.55 & 0.44 & 0.58 \\ \hline
  2 & 3 & 0.76 & 2.25 & 1.01 & 1.61 & 1.06 & 0.47 & 0.56 & \textbf{0.44} & 0.75 & 0.49 & 0.59 \\ \hline
  2 & 4 & 0.85 & 2.63 & 1.06 & 1.64 & 1.14 & 0.56 & 0.60 & \textbf{0.48} & 0.87 & 0.58 & 0.59 \\ \hline
  2 & 5 & 0.91 & 2.88 & 1.17 & 1.55 & 1.23 & 0.56 & 0.66 & \textbf{0.55} & 1.00 & 0.64 & 0.61 \\ \hline
  2 & 6 & 1.02 & 3.03 & 1.25 & 1.42 & 1.25 & 0.59 & 0.68 & \textbf{0.53} & 1.06 & 0.66 & 0.61 \\ \hline
  2 & 7 & 1.11 & 3.16 & 1.27 & 1.50 & 1.27 & 0.61 & 0.69 & \textbf{0.52} & 1.14 & 0.67 & 0.61 \\ \hline
  2 & 8 & 1.18 & 3.30 & 1.34 & 1.46 & 1.29 & 0.62 & 0.73 & \textbf{0.58} & 1.13 & 0.68 & 0.62 \\ \hline
  2 & 9 & 1.25 & 3.41 & 1.38 & 1.34 & 1.30 & 0.62 & 0.75 & \textbf{0.55} & 1.20 & 0.70 & 0.63 \\ \hline
  3 & 2 & 0.63 & 1.90 & 0.93 & 1.70 & 0.98 & 0.49 & 0.55 & \textbf{0.48} & 0.63 & 0.54 & 0.60 \\ \hline
  3 & 3 & 0.78 & 2.22 & 1.14 & 1.87 & 1.10 & 0.56 & 0.64 & \textbf{0.55} & 0.82 & 0.63 & 0.63 \\ \hline
  3 & 4 & 0.88 & 2.47 & 1.18 & 1.73 & 1.17 & 0.65 & 0.68 & \textbf{0.56} & 0.96 & 0.69 & 0.65 \\ \hline
  3 & 5 & 1.03 & 2.63 & 1.27 & 1.64 & 1.22 & 0.73 & 0.76 & \textbf{0.57} & 1.05 & 0.76 & 0.69 \\ \hline
  3 & 6 & 1.17 & 2.85 & 1.34 & 1.69 & 1.33 & 0.74 & 0.75 & \textbf{0.61} & 1.15 & 0.81 & 0.70 \\ \hline
  3 & 7 & 1.22 & 3.09 & 1.42 & 1.57 & 1.40 & 0.77 & 0.81 & \textbf{0.67} & 1.20 & 0.84 & 0.72 \\ \hline
  3 & 8 & 1.35 & 3.27 & 1.47 & 1.61 & 1.42 & 0.80 & 0.80 & \textbf{0.66} & 1.33 & 0.86 & 0.72 \\ \hline
  3 & 9 & 1.40 & 3.33 & 1.50 & 1.48 & 1.46 & 0.84 & 0.81 & \textbf{0.70} & 1.40 & 0.94 & 0.72 \\ \hline
  4 & 2 & 0.68 & 1.84 & 0.98 & 1.62 & 1.01 & 0.51 & 0.58 & \textbf{0.49} & 0.71 & 0.57 & 0.62 \\ \hline
  4 & 3 & 0.84 & 2.23 & 1.10 & 1.85 & 1.13 & 0.64 & 0.69 & \textbf{0.59} & 0.90 & 0.68 & 0.69 \\ \hline
  4 & 4 & 0.97 & 2.51 & 1.18 & 1.75 & 1.18 & 0.72 & 0.77 & \textbf{0.67} & 1.02 & 0.75 & 0.71 \\ \hline
  4 & 5 & 1.10 & 2.70 & 1.31 & 1.71 & 1.28 & 0.80 & 0.82 & \textbf{0.65} & 1.10 & 0.85 & 0.77 \\ \hline
  4 & 6 & 1.18 & 2.93 & 1.36 & 1.70 & 1.34 & 0.82 & 0.85 & \textbf{0.66} & 1.23 & 0.86 & 0.74 \\ \hline
  4 & 7 & 1.26 & 3.00 & 1.46 & 1.77 & 1.44 & 0.87 & 0.85 & \textbf{0.71} & 1.32 & 0.93 & 0.78 \\ \hline
  4 & 8 & 1.32 & 3.08 & 1.51 & 1.59 & 1.47 & 0.94 & 0.88 & \textbf{0.74} & 1.38 & 0.94 & 0.76 \\ \hline
  4 & 9 & 1.37 & 2.96 & 1.60 & 1.62 & 1.52 & 1.01 & 0.90 & \textbf{0.71} & 1.46 & 0.97 & 0.74 \\ \hline
  5 & 2 & 0.69 & 1.84 & 1.02 & 1.82 & 1.00 & \textbf{0.58} & 0.65 & 0.63 & 0.77 & 0.61 & 0.65 \\ \hline
  5 & 3 & 0.88 & 2.16 & 1.16 & 1.80 & 1.14 & 0.70 & 0.74 & \textbf{0.64} & 0.95 & 0.71 & 0.72 \\ \hline
  5 & 4 & 1.01 & 2.52 & 1.19 & 1.88 & 1.21 & 0.75 & 0.81 & \textbf{0.68} & 1.06 & 0.78 & 0.75 \\ \hline
  5 & 5 & 1.13 & 2.60 & 1.32 & 1.91 & 1.29 & 0.89 & 0.84 & \textbf{0.77} & 1.21 & 0.90 & 0.85 \\ \hline
  5 & 6 & 1.23 & 2.85 & 1.42 & 1.82 & 1.37 & 0.96 & 0.90 & \textbf{0.72} & 1.32 & 0.95 & 0.81 \\ \hline
  5 & 7 & 1.29 & 2.87 & 1.51 & 1.88 & 1.46 & 1.00 & 0.92 & \textbf{0.77} & 1.39 & 1.02 & 0.85 \\ \hline
  5 & 8 & 1.43 & 3.03 & 1.58 & 1.74 & 1.50 & 1.10 & 0.94 & 0.81 & 1.50 & 1.09 & \textbf{0.81} \\ \hline
  5 & 9 & 1.48 & 2.98 & 1.67 & 1.75 & 1.52 & 1.16 & 0.95 & 0.86 & 1.51 & 1.11 & \textbf{0.84} \\
   \hline
\end{tabular}}
\end{center}
\end{table}

\begin{table}[H]
\captionsetup{labelfont={color=black},font={color=black}}
 \caption{{\color{black}Misclassification error for prediction in mixed effects models \eqref{eqn:hgnlm}. Lower is better. The best performance is marked in bold.}}
    \label{tab:misclass_mixed}  
\begin{center}
{\color{black}\begin{tabular}{||r|r|r|r|r|r|r|r|r|r|r|r|r||}
  \hline
$\phi$ & $\sigma^2$ & BART & \begin{tabular}{@{}c@{}}Basis \\ GLMM\end{tabular}  & GAM & \begin{tabular}{@{}c@{}}Deep \\ Kriging\end{tabular} & \begin{tabular}{@{}c@{}}Deep \\ Learning\end{tabular} & \begin{tabular}{@{}c@{}}BART \\ Loc\end{tabular} & \begin{tabular}{@{}c@{}}BART \\ Deep \\ Kriging\end{tabular} & \begin{tabular}{@{}c@{}}RF \\ GP\end{tabular} & RF & \begin{tabular}{@{}c@{}}RF \\ Loc\end{tabular} & \begin{tabular}{@{}c@{}}RF \\ Sp\end{tabular} \\ \hline\hline
2 & 2 & 0.23 & 0.18 & 0.27 & 0.28 & 0.26 & \textbf{0.19} & 0.21 & 0.19 & 0.23 & 0.19 & 0.22 \\ \hline
  2 & 3 & 0.24 & 0.38 & 0.28 & 0.27 & 0.28 & 0.20 & 0.21 & \textbf{0.20} & 0.24 & 0.20 & 0.21 \\ \hline
  2 & 4 & 0.25 & 0.41 & 0.30 & 0.26 & 0.29 & 0.21 & 0.22 & \textbf{0.20} & 0.26 & 0.21 & 0.21 \\ \hline
  2 & 5 & 0.26 & 0.43 & 0.30 & 0.26 & 0.30 & 0.21 & 0.22 & \textbf{0.18} & 0.28 & 0.22 & 0.21 \\ \hline
  2 & 6 & 0.28 & 0.44 & 0.31 & 0.25 & 0.28 & 0.21 & 0.20 & \textbf{0.20} & 0.28 & 0.22 & 0.20 \\ \hline
  2 & 7 & 0.28 & 0.44 & 0.31 & 0.24 & 0.29 & 0.20 & 0.19 & \textbf{0.18} & 0.29 & 0.21 & 0.18 \\ \hline
  2 & 8 & 0.28 & 0.44 & 0.32 & 0.25 & 0.30 & 0.20 & 0.20 & \textbf{0.18} & 0.28 & 0.21 & 0.20 \\ \hline
  2 & 9 & 0.28 & 0.45 & 0.32 & 0.23 & 0.29 & 0.20 & 0.19 & \textbf{0.16} & 0.28 & 0.20 & 0.18 \\ \hline
  3 & 2 & 0.24 & 0.35 & 0.28 & 0.29 & 0.27 & 0.21 & 0.22 & \textbf{0.20} & 0.24 & 0.21 & 0.23 \\ \hline
  3 & 3 & 0.26 & 0.39 & 0.29 & 0.28 & 0.29 & 0.20 & 0.23 & \textbf{0.20} & 0.25 & 0.21 & 0.22 \\ \hline
  3 & 4 & 0.26 & 0.39 & 0.30 & 0.28 & 0.30 & 0.22 & 0.23 & \textbf{0.21} & 0.26 & 0.23 & 0.21 \\ \hline
  3 & 5 & 0.27 & 0.41 & 0.31 & 0.26 & 0.31 & 0.22 & 0.22 & \textbf{0.21} & 0.27 & 0.22 & 0.21 \\ \hline
  3 & 6 & 0.28 & 0.43 & 0.31 & 0.27 & 0.32 & 0.21 & 0.21 & \textbf{0.20} & 0.29 & 0.22 & 0.21 \\ \hline
  3 & 7 & 0.29 & 0.43 & 0.33 & 0.24 & 0.30 & 0.21 & 0.20 & \textbf{0.20} & 0.29 & 0.22 & 0.20 \\ \hline
  3 & 8 & 0.28 & 0.43 & 0.34 & 0.24 & 0.30 & 0.20 & 0.20 & \textbf{0.19} & 0.28 & 0.21 & 0.20 \\ \hline
  3 & 9 & 0.31 & 0.45 & 0.35 & 0.23 & 0.31 & 0.21 & 0.20 & 0.19 & 0.31 & 0.21 & \textbf{0.19} \\ \hline
  4 & 2 & 0.25 & 0.35 & 0.30 & 0.29 & 0.27 & 0.21 & 0.22 & \textbf{0.21} & 0.24 & 0.22 & 0.23 \\ \hline
  4 & 3 & 0.26 & 0.37 & 0.30 & 0.29 & 0.28 & 0.22 & 0.23 & \textbf{0.21} & 0.27 & 0.22 & 0.23 \\ \hline
  4 & 4 & 0.27 & 0.38 & 0.31 & 0.28 & 0.29 & 0.23 & 0.23 & \textbf{0.21} & 0.28 & 0.23 & 0.22 \\ \hline
  4 & 5 & 0.29 & 0.40 & 0.34 & 0.27 & 0.30 & 0.24 & 0.23 & \textbf{0.22} & 0.28 & 0.24 & 0.22 \\ \hline
  4 & 6 & 0.29 & 0.42 & 0.33 & 0.27 & 0.32 & 0.23 & 0.23 & 0.20 & 0.30 & 0.24 & \textbf{0.20} \\ \hline
  4 & 7 & 0.30 & 0.43 & 0.34 & 0.26 & 0.32 & 0.24 & 0.22 & \textbf{0.20} & 0.31 & 0.24 & 0.21 \\ \hline
  4 & 8 & 0.32 & 0.44 & 0.36 & 0.26 & 0.34 & 0.24 & 0.22 & 0.21 & 0.32 & 0.23 & \textbf{0.20} \\ \hline
  4 & 9 & 0.32 & 0.44 & 0.35 & 0.26 & 0.35 & 0.24 & 0.22 & 0.21 & 0.32 & 0.24 & \textbf{0.21} \\ \hline
  5 & 2 & 0.24 & 0.34 & 0.31 & 0.29 & 0.26 & 0.23 & 0.24 & \textbf{0.21} & 0.24 & 0.22 & 0.23 \\ \hline
  5 & 3 & 0.27 & 0.38 & 0.33 & 0.29 & 0.29 & 0.24 & 0.24 & \textbf{0.23} & 0.27 & 0.24 & 0.24 \\ \hline
  5 & 4 & 0.28 & 0.38 & 0.32 & 0.28 & 0.30 & 0.25 & 0.23 & \textbf{0.23} & 0.28 & 0.25 & 0.24 \\ \hline
  5 & 5 & 0.30 & 0.40 & 0.33 & 0.28 & 0.30 & 0.25 & 0.24 & 0.22 & 0.30 & 0.26 & \textbf{0.22} \\ \hline
  5 & 6 & 0.32 & 0.41 & 0.34 & 0.28 & 0.31 & 0.25 & 0.23 & 0.22 & 0.31 & 0.26 & \textbf{0.21} \\ \hline
  5 & 7 & 0.32 & 0.42 & 0.35 & 0.27 & 0.33 & 0.25 & 0.24 & \textbf{0.20} & 0.33 & 0.26 & 0.22 \\ \hline
  5 & 8 & 0.33 & 0.42 & 0.37 & 0.27 & 0.35 & 0.26 & 0.24 & \textbf{0.22} & 0.33 & 0.26 & 0.22 \\ \hline
  5 & 9 & 0.32 & 0.43 & 0.38 & 0.27 & 0.36 & 0.26 & 0.23 & 0.23 & 0.34 & 0.26 & \textbf{0.22} \\ \hline
 \hline
\end{tabular}}
\end{center}
\end{table}

Next, we focus on prediction. We measure how closely the competing methods predict the true conditional probabilities of the Bernoulli process generating the observed outcomes in the test data, i.e. $\mathbb{E}\left( Y | X, s \right)$. In the simulation setup, we have knowledge of these true conditional probabilities $\mathbb{E}(Y \given X,s)$ generating the response in the test data, hence we make use of mean square error (MSE) for comparing the prediction performances (Table \ref{tab:prediction_mixed}). 
In the analysis of real data where we do not have the knowledge of true probabilities, we will use the outcomes in the test data to evaluate the methods using misclassification error. Hence, we also compare the performance of the competing methods using misclassification error in the simulation setup.

The prediction results are presented in Tables \ref{tab:prediction_mixed} and \ref{tab:misclass_mixed}. First, we note that the differences in the performance of the methods are more prominent when looking at the MSE (Table \ref{tab:prediction_mixed}) than when looking at the misclassification error (Table \ref{tab:misclass_mixed}). This is expected as the MSE is informed by the exact difference between true and predicted probabilities while the misclassification error only uses the binary discretization of the predicted probabilities and thus obfuscates difference between the estimates of probabilities predicted by two methods as long as the discretization leads to same predictions. However, the trends between methods are consistent when looking at either MSE or misclassification error.

RF-GP produces best or comparable-to-the-best prediction performance to all its competitors across all scenarios. Methods which do not use spatial information, e.g., BART and RF cannot keep up with the methods that use the spatial information. Hence, there is a huge disparity between them and the rest of the methods in terms of MSE, which only increases as the spatial signal strength increases.  
The performance of Basis GLMM suffers since basis functions do not perform well in high covariate dimension. GAM also does not perform well since they cannot model the interaction term. Among methods with added spatial features, DeepKriging and Deep Learning do not do well. BART-Loc, RF-Loc, and RF-Sp, although performing better than the rest, follow a similar pattern. In low spatial signal strength, BART-Loc outperforms RF-Sp, whereas RF-Sp outperforms BART-Loc and RF-Loc for moderate spatial signal strength. This is because RF-sp adds several ($O(n)$) spatial covariates, which can drown out the effect of the true covariates. This negatively impacts its performance in scenarios with low spatial signal, where the true covariate effect dominates the spatial effect. RF-GP combines the best of both worlds and outperforms both BART-Loc, RF-Loc and RF-Sp significantly for both low and moderate spatial signal strength. Even at a very low (covariate) signal-to (spatial) noise ratio, i.e., when $\sigma^2$ is large, RF-GP produces comparable results to that of RF-Sp.  
As far as spatial decay is concerned, the difference between RF-GP and the rest of the methods is highest when $\phi$ is small, i.e., the data are very highly correlated. RF-GP can also account for weak correlation in the data and outperforms the rest of the methods even when $\phi$ is high, albeit by smaller margins. Both BART-loc and RF-Sp, for some scenarios, lead to up to 40\% worse MSE than RF-GP.}

\blue{\subsection{Mean nonstationarity}
\label{sim:mean_nonstationary}
The aforementioned simulation setup considers the scenario where the outcome is stationary, or the spatial effect in the mixed effects model setup \eqref{eqn:hgnlm} is stationary. Often, this assumption does not hold true in the real world. We empirically show that our model can deal with model misspecification under nonstationarity. }
 \blue{We consider the case where spatial effect is not generated from a Gaussian Process, but is a fixed smooth spatial surface. This is similar to the situation when a spatially variable covariate is not observed, i.e., the true mean is non-stationary, resulting in a violation of stationarity assumption (see discussion in Section \ref{sec:ns}). Following \cite{Saha2022}, we consider the scenario in the mixed effects model setup \eqref{eqn:hgnlm} where the spatial effect is generated, not as a GP, but as a fixed function of space. We specify this function on the $2$-d spatial domain as the density of a bivariate mixture normal distribution ($\mu_1 = (0.25, 0.5); \:\mu_2 = (0.75, 0.5), \mathbf{\Sigma}_1^2 = 0.01\mathbf I;\:  \mathbf{\Sigma}_2^2= 0.0025\mathbf I $) with two components (with means representing the locations of the two-peaks in the surface) and with covariance matrices representing the slopes and dispersion around the peaks.We control the spatial variance by multiplying the generated spatial surface by $\sigma$.
 
 We primarily focus on prediction here. Since in the experiments above, RF-GP, RF-Sp and BART-Loc were the top performing methods for prediction, we only compare the performance of these three. Table \ref{tab:MSE_mixed_nonstationary} 
 shows the MSE for prediction under mean nonstationarity. The results are similar to that of the stationary case. RF-GP performs comparably to BART-Loc in lower spatial signal setup (given the spatial effect was generated as a fixed function of location, it is expected that BART-loc, which uses the locations as added features will perform well). As the strength of the spatial signal increases, RF-GP outperforms 
BART-Loc (11\%). On the other hand, for high values of $\sigma$, RF-GP performs comparably to RF-Sp. Given the low covariate signal to spatial signal ratio, it is expected that RF-Sp will perform well in this scenario \citep{saha2023random}). However, for low spatial signal strengths, RF-GP outperforms RF-Sp (17\%).

\begin{table}[H]
\captionsetup{labelfont={color=black},font={color=black}}
 \caption{{\color{black} MSE for prediction in mixed effects models \eqref{eqn:hgnlm} under nonstationarity. Lower is better. Values are in $\times 10$ scale.}}
    \label{tab:MSE_mixed_nonstationary}  
\begin{center}
{\color{black}\begin{tabular}{||r|r|r|r||}
  \hline
$\sigma^2$ & BART-Loc & RF-GP & RF-Sp \\
  \hline
2 & 0.539 & {0.541} & 0.633 \\
  3 & {0.546} & 0.559 & 0.601 \\
  4 & 0.613 & {0.584} & 0.565 \\
  5 & 0.583 & {0.526} & 0.524 \\
 \hline
\end{tabular}}
\end{center}
\end{table}

Mean non-stationarity is a case where added-spatial-features methods like BART-Loc and RF-Sp are expected to do well, as they are exactly correctly specified. It is reassuring to see the competitive (and sometimes superior) performance of RF-GP even in this setup which is misspecifed for RF-GP. This shows the value of the parsimony encoded in RF-GP by modeling the spatial effect as a GP. The brute-force approach of just throwing in spatial features in an unstructured way into a black box algorithm will perform well if there is enough training data to learn about the structure but will generally lose out to more parsimonious models when trained on small to medium-sized datasets. Also, we reemphasize that these added features methods are only suitable for spatial predictions and not for estimation of the mean function.}

\blue{\subsection{Anisotropic working covariance}
\label{sim:anisotropy}
The previous simulation framework assumed isotropy. Often in real life examples of geographical data, the direction of the data become important, where the distance with respect to one coordinate has more effect than the distance with respect to the other coordinate. In these scenarios, anisotropy is a more realistic assumption. We consider the setup in simulation from generalized mixed effects model in \eqref{eqn:hgnlm} in Section \ref{sec:simulation_mixed}, but simulate under scale anisotropy. The correlation $cor(w_i, w_j)$ is modeled as $ \exp \left( -\phi d_{\alpha_0} \left(s_i, s_j \right)\right)$, where $d_{\alpha_0} \left(s_i, s_j \right) = \left( (\alpha_0 * s_{i,1} - \alpha_0 * s_{j,1})^2 + (s_{i,2} - s_{j,2})^2\right)^{\frac{1}{2}}$. In practice, this is implemented using a scale change of the first coordinate of the location by a factor of $\alpha_0$. 

We compare the performance of 5 approaches --- 1) RF-GP using $\alpha=1$ (isotropic covariance), 2) RF-GP using $\alpha=\alpha_0=3$ (anisotropic covariance using the true scale of anisotropy), 3) another anisotropic RF-GP where the scale of anisotrpy $\alpha$ is not fixed but is selected by cross-validation, 4) RF-SP, and 5) BART-Loc. For the RF-GP where $\alpha$ is estimated,  
we cross-validate over $\alpha \in \{1, 2,,3, 4, 5 \}$ to estimate  
the optimal $\alpha$, namely $\hat\alpha$.  We see from Table \ref{tab:anisotropy} that whether we account for the true anisotropy ($\alpha = 3$) or not ($\alpha = 1$), RF-GP has better performance than RF-Sp. It has comparable performance to BART-Loc when we do not account for anisotropy, but outperforms it when we account for it. BART-Loc is a tree-based method with location as a covariate, hence its performance remains unchanged in both the cases and is expected to perform well with respect to anisotropy. Both RF-Sp and RF-GP's performance is affected by anisotropy. As expected, their misclassification error rate is higher when we do not account for anisotropy.  
When the scale of anisotropy is estimated in RF-GP, the average of $\hat\alpha$ over 100 simulations is \textbf{3.014}. This indicates that RF-GP can correctly identify the anisotropy scale. 

\begin{table}[H]
\captionsetup{labelfont={color=black},font={color=black}}
 \caption{{\color{black} Misclassification error for prediction in mixed effects models \eqref{eqn:hgnlm} under anisotropy. Lower is better. The best performance is marked in bold.}}
    \label{tab:anisotropy}
\begin{center}
{\color{black}\begin{tabular}{||r|r|r|r|r|}
  \hline
 BART-Loc & RF-GP ($\alpha =1$) & RF-GP ($\alpha =3$) & RF-GP (fitted $\alpha$) &RF-Sp \\
  \hline
0.280 & 0.280 & \textbf{0.261} & \textbf{0.261} & 0.318 \\ \hline
\end{tabular}}
\end{center}
\end{table}}

\blue{\subsection{Partial dependence functions estimation}
\label{sim:PDP}
In a simulation setup, we show that we can use the mean function estimate from RF-GP to successfully estimate the partial dependence functions corresponding to each variables.

We simulate from a binary auto-regressive $AR(1)$ process, with the following mean function:

\begin{equation}
    p(X) = f(X_1, X_2) = \Phi\left( \cos \left( \pi * X_1\right)\right);
\end{equation}

Next, we consider the partial dependence plots corresponding to the variables $X_1$ and $X_2$. Since our mean function $p(X)$ is only a function of $X_1$, ideally we would expect the PDF of $X_1$ to coincide with $g_1(X) := \Phi \left( \cos \left( \pi X\right)\right)$ and the PDP of $X_2$ to coincide with $g_2 (X) := \mathbb E_{X_1}\Phi\left( \cos \left( \pi * X_1\right)\right) = 0.5$. We plotted the PDFs in Figure \ref{Fig:PDP} computed using the mean function estimates from our proposed approach and see that RF-GP correctly estimated the true partial dependence functions for both the true and the redundant covariate.

\begin{figure}[H]
    \centering
    \begin{subfigure}[b]{0.5\textwidth}
        \centering
        \includegraphics[height=2.6in]{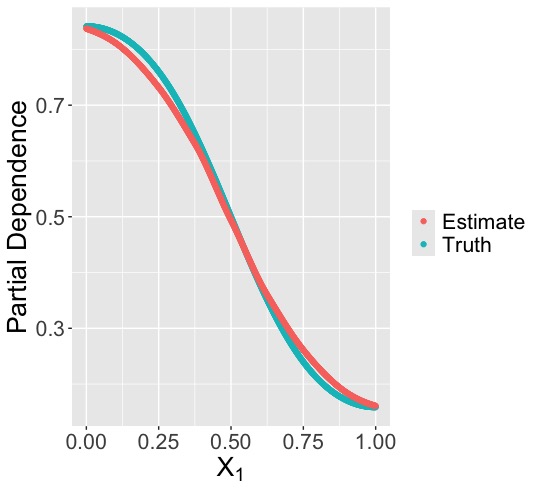}
        \caption{\blue{PDP for $X_1$}}
    \end{subfigure}%
    ~
    \begin{subfigure}[b]{0.5\textwidth}
        \centering
        \includegraphics[height=2.6in]{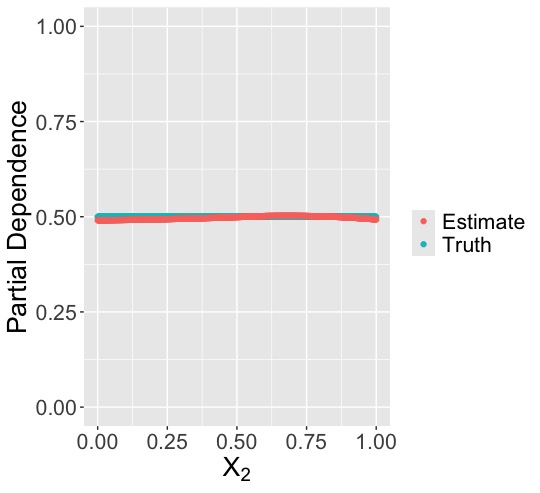}
        \caption{\blue{PDP for $X_2$}}
    \end{subfigure}
   \caption{ \blue{Partial dependence plot}}
    \label{Fig:PDP}
\end{figure}

\subsection{Conditional average treatment effect estimation:}
\label{sim:CATE}
We show that our proposed method can estimate the conditional average treatment (CATE) \eqref{eq:cate} between two groups with different treatments, while accounting for correlation. We simulate data from a generalized mixed effects model in \eqref{eqn:hgnlm}, where data points are randomly assigned to any of the one treatment (i.e., one of the two mean functions), which are given as follows: 
    \begin{equation}
        f_1 (X) = \sqrt{2} \Phi^{-1}((X - 0.5) ^2 ); \:\:\:\: f_2 (X) = \sqrt{2} * (\Phi^{-1}((X - 0.5) ^2 ) + \sin(\pi X))
    \end{equation}

\begin{figure}[H]
\centering
\includegraphics[scale=0.4]{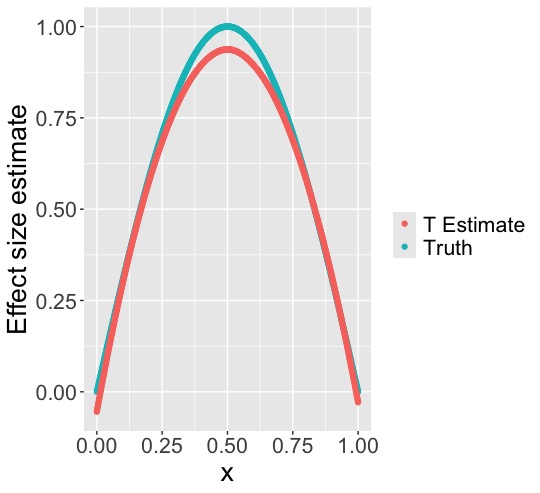}
\caption{\blue{CATE estimation}}
\label{FIG:CATE}
\end{figure}

Next, we obtain the T-learner estimate using our proposed approach, i.e., we estimate the conditional mean $\mathbb E \left( Y | X \right)$ within each group individually and take the difference of these two estimated function to estimate the CATE. Figure \ref{FIG:CATE} shows that our method provides a good T-learner estimate of the true CATE.}
\end{document}